\documentclass[secnumarabic,amssymb, nobibnotes, aps, prd,showpacs,balancelastpage]{revtex4-1}

\setlength{\textheight}{9.5in}

\usepackage{tikz}
\usepackage{pgfplots}
\pgfplotsset{compat=newest}
\usetikzlibrary{calc, positioning}
\usetikzlibrary{arrows, backgrounds, shapes.misc}
\usepackage{bm}

\bibliographystyle{plain}


\usepackage{amssymb,amsmath}
\usepackage{amsfonts}
\usepackage[english]{babel}
\usepackage[latin1]{inputenc}
\usepackage{fancyhdr}
\usepackage{yfonts}
\usepackage{mathrsfs}
\usepackage{amsthm}

\usepackage{algorithm}
\floatname{algorithm}{Alg.}
\usepackage{algorithmic}
\usepackage{psfrag}


\usepackage{color}



\newtheorem{thm}{Theorem}

\newtheorem{lem}{Lemma}

\newtheorem{proper}{Property}


\providecommand{\prt}[1]{\left( #1 \right)}

\newcommand{\ie}{i.e.~}
\newcommand{\eg}{e.g.~}

\usepackage{soul}
\usepackage[]{rotating}
\newcommand{\myturn}[1]{\begin{turn}{90}#1\end{turn}}

\begin{document}

\title{Incompatibility boundaries for properties of community partitions}%

\author{A.~Browet}
\affiliation{Universit\'e catholique de Louvain, ICTEAM institute}
\author{J.M.~Hendrickx}
\affiliation{Universit\'e catholique de Louvain, ICTEAM institute}
\author{ A.~Sarlette}
\affiliation{INRIA Paris (QUANTIC lab) \& Ghent University (Data Science Lab)}

\begin{abstract}
We prove the incompatibility of certain desirable properties of community partition quality functions. Our results generalize the impossibility result of [Kleinberg 2003] by considering sets of weaker properties. In particular, we use an alternative notion to solve the central issue of the consistency property. (The latter means that modifying the graph in a way consistent with a partition should not have counterintuitive effects). Our results clearly show that community partition methods should not be expected to perfectly satisfy all ideally desired properties. 

We then proceed to show that this incompatibility no longer holds when slightly relaxed versions of the properties are considered, and we provide in fact examples of simple quality functions satisfying these relaxed properties. An experimental study of these quality functions shows a behavior comparable to established methods in some situations, but more debatable results in others. This suggests that defining a notion of good partition in communities probably requires imposing additional properties. 
\end{abstract}


\pacs{89.75.Fb, 89.75.Hc, 05.10.-a, 89.65.-s, 87.23.Ge, 89.20Hh,  87.53.Wz}

\maketitle

\section{Introduction}

In the rich literature about defining communities in graphs, two major and complementary approaches are the proposal of particular criteria defining communities and the identification of axioms that should reasonably be satisfied by such criteria, in particular by value functions that the optimal partition should maximize.
When confronting the two approaches, the popular clustering criteria all fail to satisfy at least one of the reasonable axioms. For instance, the popular modularity criterion~\cite{Newman2004} is neither local (\ie the optimal partitioning of a subset of nodes does depend on the whole graph) nor consistently improving (\ie an optimal community partitioning for a given graph can loose its optimality when strengthening intra-community links and weakening inter-community links).

In fact Kleinberg~\cite{Kleinberg2002} has proved that it is impossible for a function $F$ associating a community partition to any weighted graph\footnote{This was in a framework where nodes are separated by a \quotes{distance}, as opposed to nodes being connected by edges with a certain weight as considered here. The two frameworks are however equivalent provided all weights are positive and self-loops can be discarded; it suffices then indeed to take the inverse of the weights as distances.} to simultaneously satisfy the three following properties, although getting two of them separately is easy:
\begin{itemize}
\item Richness: given any partition of the nodes $\Sigma$, there exists at least one graph for which $F$ returns $\Sigma$ as the unique\footnote{Uniqueness is necessary to avoid trivial multi-valued functions associating for example all partitions to all graphs.} associated partition;
\item Scale-invariance: given any graph, multiplying all the edge weight by a constant does not change the partition returned by $F$;
\item Consistency:  if $F$ returns a partition $\Sigma$ for a graph $G$, then it also returns $\Sigma$ for any graph $G'$ obtained by increasing the weight of intra-communities edges and decreasing the weight of inter-communities edges with respect to $\Sigma$.
\end{itemize}
The last property means that modifying a graph in a way \quotes{consistent} with its associated community partition, by increasing links inside communities and decreasing links between communities, 
should lead to a graph with the same community partition. Despite its natural formulation, this property is actually debatable \cite{VanLaarhoven:2014}, and we will see in Section \ref{sec:properties} that it has counterintuitive consequences.

Other recent approaches have run into similar barriers, e.g.~the hierarchy-based axioms in \cite{arxiv13017724} would not admit scale invariance; the various subgraph-based alternatives in \cite{Radicchi2004} and \cite{Hu2008} do not return a single partition per graph, and in fact for any graph they return among others the single community encompassing the entire node set, hence failing richness. More examples will be covered later.\\

In this paper, we begin by presenting two generalizations of the result of \cite{Kleinberg2002} in a more flexible framework where a value function evaluates the quality of each community partition for a graph (as opposed to considering a function that directly associates a community partition to each graph). The \quotes{best} partition associated to a graph can then be obtained by optimizing this value function over all possible partitions, but \quotes{sufficiently good} partitions can also be computed when an exact solution to this generally hard optimization problem is out of reach.
This approach is followed in \cite{VanLaarhoven:2014}, and is consistent with many well-known community partition methods, as those listed in Table \ref{table:costfunction} in Section \ref{sec:properties}. 

Our first main result (Section \ref{sec:imposs_results}) shows the impossibility of satisfying richness, scale-invariance and (together with some continuity assumption about the value function) \emph{a much weaker form of consistency than in \cite{Kleinberg2002}, which we argue appears more natural}. In our second impossibility result, the consistency requirement is removed altogether and replaced by some locality axiom, forbidding modifications in one part of a graph to affect the (relative) quality of communities in another part of the graph. 

We further add some insight to these impossibility results. The problem can be pinpointed to the request, for richness, to make both the unique ``all-encompassing'' partition optimal for some graph, and the ``all-singletons'' partition optimal for some other graph. If we drop the all-singletons partition from the richness requirement, then value functions satisfying all the axioms can be constructed, as we show by providing actual examples (Section \ref{sec:valfuncs}). It turns out that in previous papers which try to circumvent Kleinberg's impossibility result, this distinction is made implicitly. 
For instance in \cite{nips3491}, the authors propose to circumvent Kleinberg's impossibility by imposing similar axioms on a value function for partitions instead of directly on the ``clustering function'' $F$. 
Our variant of the impossibility theorem precisely shows that this does not resolve the impossibility. However, the value function proposed in~\cite{nips3491} is undefined for the ``all-singletons'' partition, and we identify that this is the key relaxation.
In \cite{nips4101}, the authors discuss and compare alternatives by incidentally restricting themselves to $k$-clustering, and \cite{rezab} considers the special case of bi-partitions. In the light of our result, fixing $k$ is just a way to satisfy the sufficient relaxation, i.e.~excluding either the all-singletons or the all-encompassing partition (and many others). In all those cases, this simple implicit relaxation of excluding the all-singleton partition (and possibly others) is in fact sufficient and the key step to allow satisfying Kleinberg's set of axioms. In this sense, the present paper establishes a clear and possibly practically relevant way to circumvent the impossibility.
Another point is that Kleinberg's impossibility refers to graphs without self-loops. Some previous investigations, like e.g.~\cite{VanLaarhoven:2014}, let self-loops play an instrumental role. In that case there are more options for constructing optimal graphs towards richness. We can in fact state that our example value functions, proposed in Section \ref{sec:valfuncs}, satisfy all the axioms of Kleinberg if we allow for self-loops.

We stress that this paper implies no value judgment about clustering criteria which fail to satisfy some of the proposed axioms (including the assumption about no self-loops). To the contrary, the impossibility results clarify that it is hopeless to look for value function-based criteria satisfying all the axioms. It is hence perfectly unavoidable to select a subset of axioms, and one reasonable selection criterion is to get the most useful results. One defensible viewpoint is to exclude some partitions as a priori irrelevant, e.g.~partitions containing almost only isolated nodes. Along this line, we analyze, in Section~\ref{sec:experiments}, the partitions obtained using one of the proposed value functions which satisfies all the axioms except strict richness for partitions involving singletons. Although our proposed value functions turn out to have strong similarities with the sum of community strengths and especially with the modular density, introduced respectively in \cite{medus2009alternative,Li2008}, they satisfy stronger axiomatic properties. Moreover, we use a more general way of normalizing the contribution of each community, allowing for some additional flexibility. The conclusion of our experimental investigation is that only particular tuning of our value function parameters leads to results that are compatible with expectations on benchmark problems. Hence, satisfying the ``standard'' axioms considered here is not a guarantee for more relevant results.\\

Towards the future, our results hence not only clarify that the set of historically proposed axioms cannot be kept in its most general form; they also highlight the need to add \emph{compatible} axioms that would isolate a most useful set of value functions, since our own examples satisfy a minimally relaxed set of axioms, yet they still leave a design freedom among which far from all choices behave as intuitively desired.

\section{Properties of quality functions}\label{sec:properties}

As mentioned in the introduction, there are different ways of specifying what a good partition in communities is \cite{fortunato2010community,schaeffer2007graph}. One can for example directly specify the properties that the (best) partition should have, or the algorithm to obtain it, as in the framework of \cite{Newman2004}.
We follow here the popular and more flexible option \cite{VanLaarhoven:2014} of defining a quality measure of a given partition into communities for any given weighted graph. The best partition into communities is then the one maximizing the quality function. This optimization problem is often computationally challenging \cite{Brandes2008a}, but the use of a quality function defined for all partitions allows using heuristic methods to compute relatively good partitions.

More formally, we consider weighted symmetric graphs $G(V,W)$ without self-loops, where $V=\{1,\dots,N\}$ is the set of nodes, $N$ being the number of nodes, and $W$ the set of weights $W_{ij}= W_{ji}\geq 0$ (with $W_{ii} = 0$). When $W_{ij}>0$, we say that $i$ and $j$ are connected or that $(i,j)=(j,i)$ is an edge, and it will sometimes be convenient to refer to the set $E$ of edges, which is entirely determined by $W$.
A partition $\sigma$ of $V$ into communities corresponds to an assignment of each  node $i\in V$ to a community label $\sigma_i\in \{1,\dots,n_c\}$, where $n_c$ depends on the partition $\sigma$. The partition induces communities $c_1,\dots,c_{n_c}$ defined by $c_k=\{i|\sigma_i=k\}$ and whose cardinality, i.e.~the number of nodes within the community, is denoted $\vert c_k \vert$.
We sometimes use the Kronecker delta $\delta\left(\sigma_i, \sigma_j\right)=1$ if $\sigma_i=\sigma_j$, and  0 otherwise, to express whether $i$ and $j$ belong to the same community. We denote by $s^{int}_i = \sum_{j \in V} W_{ij}~\delta\left(\sigma_i, \sigma_j\right)$ (resp. $s^{ext}_i = \sum_{j \in V} W_{ij}~\left(1-\delta\left(\sigma_i, \sigma_j\right)\right)$) the internal (resp. external) strength of node $i$, i.e.~the sum of edge weight connecting node $i$ to nodes in the same (resp. in any other) community, $s_i = \sum_{j\in V} W_{ij} = s^{ext}_i+s^{int}_i$ the total strength of node $i$, and $m=\sum_{i \in V} s_i$ which equals twice the total weight of the graph.

We then consider a \emph{value (or quality) function} $f(G,\sigma)$, that represents the quality of the community partition $\sigma$ on the graph $G$.
A classical example is Newman's Modularity \cite{Newman2004}, defined for a weighted undirected graph as $Q(G, \sigma) = \frac{1}{m}\sum\limits_{i,j\in V} \left(W_{ij}-\frac{s_i\,s_j}{m}\right)\delta\left(\sigma_i, \sigma_j\right)$, which measures the difference between the actual fraction of edges falling inside the communities and the expected fraction of such edges under the configuration null model with respect to the partition.
Since the graph is entirely determined by the matrix of weights $W$, we will sometimes abuse notation and refer to $f(W,\sigma)$.

Following the approach of \cite{VanLaarhoven:2014}, we now list some properties that are considered desirable for value functions. These properties are either taken or adapted from \cite{VanLaarhoven:2014}.
The first property is relatively natural towards ensuring robustness of conclusions with respect to the data, and towards providing favorable settings for community-finding algorithms.
\begin{proper}[Continuity]\label{proper:continuity}
For any community partition $\sigma$, the value function $f$ is continuous with respect to the weights $W$.
\end{proper}

Thus property~\ref{proper:continuity} excludes quality functions that would heavily rely on the presence or absence of an edge without considering its weight. Strictly speaking, the impossibility proofs later in the paper only require continuity ``at potential optimal $W,\sigma$ combinations''. As this is difficult to guarantee a priori and questionable for practical purposes, we here require full continuity.

The second property requires that only the \emph{ratio of weights} on different edges is relevant, as far as clustering decisions are concerned.
\begin{proper}[Scale invariance]\label{proper:scale}
For any graph $G(V,W)$, community partitions $\sigma,\sigma'$ of $V$ and $\alpha >0$, the following implication holds:
\begin{equation}\label{eq:scale}
f(W,\sigma) \geq f(W,\sigma') \Rightarrow f(\alpha W,\sigma) \geq f(\alpha W, \sigma'),
\end{equation}
or equivalently\footnote{In both statements, if we have an equality on the left hand side then $\sigma$ and $\sigma'$ can be swapped, so we also need equality on the right hand side. From this, the second statement readily implies the first. Furthermore, by redefining $W'=\alpha W$ and $\alpha'=1/\alpha$, we can reverse the first statement so equality on the right hand side also must imply equality on the left hand side, i.e.~inequality on the left can only be associated to inequality on the right. Thus the first statement implies the second one.} $f(W,\sigma) > f(W,\sigma') \Rightarrow f(\alpha W,\sigma) > f(\alpha W, \sigma').$
\end{proper}
Scale-invariance induces that the communities should not depend on an exogeneous threshold value for individual edge weights. Value functions like the adaptive scale modularity~\cite{VanLaarhoven:2014}, or the constant Potts model~\cite{Traag2011} are therefore not scale invariant.

The next property represents the fact that the community partition of one part of the graph should not be affected by the structure of other parts of the graph. It is notably not satisfied by the modularity, which suffers from the well-known resolution limit \cite{Good2009, Traag2011}, or in general when the clustering is influenced by the average weight in the graph. Different notions of locality can be proposed and we use the following.

\begin{proper}[$k$-locality]\label{proper:locality}
Given $k \in \{ 0,1 \}$, consider two graphs $G_1(V,W^{(1)})$ and $G_2(V,W^{(2)})$ whose restriction to a subset of nodes $V_0$ and its neighbors at distance $k$ is identical, that is: $W^{(1)}_{ij}=W^{(2)}_{ij}$ for all $i,j\in V_0$ and, if $k=1$, for all $i,j$ for which either $i$ or $j$ belongs to $V_0$. Consider then a community partition $\sigma$ for which $C_0=\left\{i\left| i\in V_0\right.\right\}$ is a community, and another community partition $\sigma'$ exactly identical to $\sigma$ except that the community $C_0$ is split in two communities $C_{01}$ and $C_{02}$. Then there holds
\begin{equation}\label{eq:local}
f(W^{(1)},\sigma) \geq f(W^{(1)},\sigma') \Leftrightarrow  f(W^{(2)},\sigma) \geq f(W^{(2)},\sigma')
\end{equation}
\end{proper}
Zero-locality thus means that the decision of splitting a community $C_0$ into two communities only depends on the weights of edges incident to nodes within $C_0$. 
The weaker notion of 1-locality would allow this decision to also depend on the edges incident to one node of $C_0$ and one node outside $C_0$. A similar case appears in the definition of ``locality'' in \cite{VanLaarhoven:2014} for instance. In the definition of \cite{VanLaarhoven:2014} however, $G_1$ and $G_2$ are allowed to have different node sets $V^{(1)}$, $V^{(2)}$ and $C_0$ might be split in an arbitrary way in both $\sigma$ and $\sigma'$. One could also request $\{ W^{(1)}_{ij}: j \in V^{(1)} \} = \{ W^{(2)}_{ij} : j \in V^{(2)}\}$ for each $i \in C_0$, without requiring that the endpoints $j$ of each edge match in $G_1$ and $G_2$.
Such locality notions would impose condition \eqref{eq:local} on a \emph{larger} class of situations with respect to our definition of Property \ref{proper:locality}. Thus our definition is a weaker property, easier to satisfy, hence providing a stronger impossibility result.
When proposing cost functions that do satisfy locality in Section \ref{sec:valfuncs}, we will show that they actually satisfy stronger notions of locality.

The next property excludes value functions for which some ``relevant'' partitions would never be optimal, independently of the graph. At this stage, we abstractly define the set $\Sigma$ of relevant partitions, which would typically depend on the application.

\begin{proper}[Richness, with respect to a set of partitions $\Sigma$]\label{proper:richness_set}  
For any partition $\sigma\in \Sigma$ of a set $V$ of nodes, there exists a graph $G(V,W)$, for which $\sigma$ is a strictly optimal community partition:
$
f(G,\sigma) > f(G,\sigma') \hspace{.4cm} \forall \sigma' \in \Sigma \setminus\{\sigma\} \, .
$
\end{proper}
Again, a stronger property could be stated by comparing to all partitions $\sigma'$ instead of only those in $\Sigma$. This makes no real difference for our results since our impossibility results in section~\ref{sec:imposs_results} consider $\Sigma$ to be the set of all partitions, and the proposed value functions in section \ref{sec:valfuncs} would also satisfy this stronger version of Property \ref{proper:richness_set}.

The properties introduced so far do not imply that nodes inside a community should be more connected to each other than to those outside of the community, which corresponds to the general intuitive idea of community partition. 
In order to formalize this idea, we introduce the notion of consistent improvement as in \cite{VanLaarhoven:2014}. Consider a graph $G(V,W)$ and a community partition $\sigma$. We say that $G'(V,W')$ is \emph{a consistent improvement of $G(V,W)$ with respect to $\sigma$} if $W'_{ij}\geq W_{ij}$ for all $i,j$ for which $\sigma_i=\sigma_j$; and $W'_{ij}\leq W_{ij}$ for all $i,j$ for which $\sigma_i \neq \sigma_j$. Compared to $G$, the graph $G'$ thus has links that are stronger inside the communities defined by $\sigma$ and weaker between these communities.

Even with this notion, it turns out to be non-trivial to formalize the fact that the dependence of partitions on weights should be consistent with our intuitive idea of communities.

A natural formulation would be that an optimal partition should remain optimal for all consistent improvements of the graph with respect to this partition. This ``absolute consistency'' requirement is the one used in Kleinberg's impossibility result \cite{Kleinberg2002}. But this condition actually has counterintuitive consequences, as illustrated on Figure \ref{fig:illustration_consistent_improvement}. Consider indeed a graph $G$ with a clique of four nodes $\{1,2,3,4\}$ all connected one to each other by edges of similar weights. Intuitively, one would want to consider this clique as a community in the optimal partition $\sigma$. Now any graph obtained by strongly increasing the weights of the edges $(1,2)$ and $(3,4)$ is a consistent improvement of $G$ with respect to this partition. If these weights are sufficiently increased however, we argue that partitioning those four nodes into two communities $\{1,2\}$ and $\{3,4\}$ would be more natural. This would however not be allowed by ``absolute consistency'' (see Figure \ref{fig:illustration_consistent_improvement} top), which requires the community $\{1,2,3,4\}$ to remain optimal under this consistent improvement. The impossibility proof by Kleinberg in fact relies precisely on this fact: if a partition belongs to the richness set, then none of its sub-partitions can ever be optimal.

There is however a natural workaround when using value functions: a consistent improvement should increase the value of a graph partition, but nothing forbids that the quality of another graph partition, consistent with the same improvement, increases even more and supplants the initial partition. On the example of Figure \ref{fig:illustration_consistent_improvement}, since the graph modification proposed on the top right is a consistent improvement for both the partition using $\{1,2,3,4\}$ and the one using two separate communities $\{1,2\}$ and $\{3,4\}$, this would impose no condition on their ordering. Therefore, in this paper, we only require that the ordering between partition qualities is preserved when a graph modification makes a bad partition less consistent and a good partition more consistent, as illustrated on the bottom right of Figure \ref{fig:illustration_consistent_improvement}. This weaker requirement is similar to the relative monotonicity in \cite{VanLaarhoven:2014} and we hence call it ``relative consistency''.

\begin{figure}
\centering
 \def\fillcolorfirst{Paired-10-4!40}
 \def\fillcolorsecond{Paired-10-6!40}
\newcommand{\addgraphexample}[4]{
\node[node](1) at #1{$1$};
\node[node](3)[right = \nodeDistance of 1]{$3$};
\node[node](4)[below = \nodeDistance of 3]{$4$};
\node[node](2)[below =  \nodeDistance of 1]{$2$};
\draw[edge](1) to (3);
\draw[edge](1) to (2);
\draw[edge](1) to (4);
\draw[edge](2) to (3);
\draw[edge](2) to (4);
\draw[edge](3) to (4);
#2

\coordinate (mid3-4) at ($(3) ! .5 ! (4)$) ;
\coordinate (mid1-2) at ($(1) ! .5 ! (2)$) ;
\coordinate (mid) at ($(mid1-2) ! .5 ! (mid3-4)$) ;

\begin{pgfonlayer}{background}
\filldraw[fill=\fillcolorfirst, draw=none,thick,dashed] 
	($(1)+(0,\deltaFill)$) -- 
	($(3)+(0,\deltaFill)$) --  
	($(3)+(0, \deltaFill)$) arc (90:0: \deltaFill) --  
	($(4)+(\deltaFill,0)$) -- 
	($(4)+(\deltaFill,0)$) arc (0:-90: \deltaFill) --  
	($(2)+(0,-\deltaFill)$) -- 
	($(2)+(0,-\deltaFill)$)  arc (-90:-180: \deltaFill) --	
	($(1)+(-\deltaFill,0)$) -- 
	($(1)+(-\deltaFill,0)$)   arc (-180:-270: \deltaFill) --cycle ;
\end{pgfonlayer}

\node[font=\large](ll)[right = .5\nodeDistance of mid3-4]{$\boldsymbol{>}$};

\node[node](1p) [right = 1.1\nodeDistance of 3]{$1$};
\node[node](3p)[right = \nodeDistance of 1p]{$3$};
\node[node](4p)[below = \nodeDistance of 3p]{$4$};
\node[node](2p)[below =  \nodeDistance of 1p]{$2$};
\draw[edge](1p) to (3p);
\draw[edge](1p) to (2p);
\draw[edge](1p) to (4p);
\draw[edge](2p) to (3p);
\draw[edge](2p) to (4p);
\draw[edge](3p) to (4p);
#3

\begin{pgfonlayer}{background}
\filldraw[fill=\fillcolorsecond, draw=none,thick,dashed] 
	($(1p)+(\deltaFill,0)$) -- 
	($(2p)+(\deltaFill,0)$) --  
	($(2p)+(\deltaFill,0)$) arc (0:-180: \deltaFill) --  
	($(1p)+(-\deltaFill,0)$) -- 
	($(1p)+(-\deltaFill,0)$) arc (-180:-360: \deltaFill) --  cycle;

\filldraw[fill=\fillcolorsecond, draw=none,thick,dashed]
	($(3p)+(\deltaFill,0)$) -- 
	($(4p)+(\deltaFill,0)$) --  
	($(4p)+(\deltaFill,0)$) arc (0:-180: \deltaFill) --  
	($(3p)+(-\deltaFill,0)$) -- 
	($(3p)+(-\deltaFill,0)$) arc (-180:-360: \deltaFill) --  cycle;

\end{pgfonlayer}

\coordinate (mid3p-4p) at ($(3p) ! .5 ! (4p)$) ;
\coordinate (mid1p-2p) at ($(1p) ! .5 ! (2p)$) ;
\coordinate (midp) at ($(mid1p-2p) ! .5 ! (mid3p-4p)$) ;

\coordinate (title) at ($(3) ! .5 ! (1p)$) ;

#4
}

\input{figs/tikzlibrarycolorbrewer.code.tex}
\newlength{\nodeDistance}
\setlength{\nodeDistance}{25pt}
\newlength{\deltaFill}
\setlength{\deltaFill}{.45\nodeDistance}
\begin{tikzpicture}
\tikzstyle{node}=[circle,thick,draw=black!75,fill=Paired-10-2!20,minimum size=1mm, font=\scriptsize, inner sep=2pt]
\tikzstyle{edge}=[-, thick, >=stealth', shorten >=1pt, shorten <=1pt]
\tikzstyle{largeEdge}=[-, line width=3pt, >=stealth', shorten >=1pt, shorten <=1pt]

\addgraphexample{(0,0)}{}{}{}
\coordinate (arrow1) at ($ (mid3p-4p) + (\nodeDistance,0)$);
\coordinate (arrow2) at ($(arrow1) + (1,1)$);
\coordinate (arrow3)at ($(arrow1) + (1,-1)$);
\draw[edge, ->, line width=1.5pt, shorten <=0pt](arrow1) to (arrow2);
\draw[edge, ->, line width=1.5pt, shorten <=0pt](arrow1) to (arrow3);

\addgraphexample{(7,2)}
							 {\draw[largeEdge](1) to (2);\draw[largeEdge](3) to (4);}
							 {\draw[largeEdge](1p) to (2p);\draw[largeEdge](3p) to (4p);}
							 {\node[align=center, text width=5cm, font=\itshape] (t1) at ($(title)+(0,\nodeDistance)$) {imposed only by\\ absolute consistency};}
\addgraphexample{(7,-2.5)}
							 {\draw[largeEdge](1) to (3);\draw[largeEdge](2) to (4);}
							 {\draw[largeEdge](1p) to (3p);\draw[largeEdge](2p) to (4p);}
							 {\node[align=center, text width=7cm, font=\itshape] (t1) at ($(title)+(0,\nodeDistance)$) {imposed by relative\\ and absolute consistency};}

\end{tikzpicture}
\caption{Illustration of the difference between (top) \quotes{absolute} consistency requirement \cite{Kleinberg2002} and (bottom) \quotes{relative} consistency requirement (Property \ref{proper:relative_consistent_improvement}). Symbol ``$>$'' denotes that the left partition is better than the right one.}\label{fig:illustration_consistent_improvement}
\end{figure}

\begin{proper}[relative consistent improvement]\label{proper:relative_consistent_improvement}
Consider two graphs $G(V,W)$, $G'(V,W')$ and two partitions $\sigma, \sigma'$ of $V$. Suppose that

(i) $f(G,\sigma) > f(G,\sigma')$,

(ii) $G'$ is a consistent improvement of $G$ with respect to $\sigma$,

(iii) $G$ is a consistent improvement of $G'$ with respect to $\sigma'$;

\noindent then $f(G',\sigma) > f(G',\sigma')$.
\end{proper}

In the following sections, we establish strict results about the impossibility to define quality functions which satisfy all the above properties. Table \ref{table:costfunction} lists how those properties are satisfied by some quality functions proposed in the literature. It is notable that none of them satisfies all properties, and we prove in Section \ref{sec:imposs_results} that this is indeed impossible. Note that some proposals for community partitioning involve more complex constructions than optimizing a value function. A simple variation is e.g.~introducing constraints on admissible partitions in the spirit of \cite{Radicchi2004}, which could be reformulated as violating continuity and restricting richness. Other examples can be more difficult to (minimally) fit in the value function framework, and for the sake of simplicity we choose to not include any of them in the table below.\vspace{3mm}

\def\asp{$y$}
\def\asn{$n$}
\setlength{\tabcolsep}{10pt}
{\small
\begin{sidewaystable}
\begin{center}
\begin{tabular}{rl ccccc}
 & $f(W,\sigma)$ & \myturn{Continuity} & \myturn{Scale invariance} & \myturn{$k$-locality} & \myturn{Full Richness} & \myturn{Rel. consistency}\\[10pt] \hline \vspace{10pt}
Modularity \cite{Newman2004} & $\frac{1}{m}\textstyle{\sum_{i,j\in V}} \left(W_{ij}-\frac{s_i\,s_j}{m}\right)\delta\left(\sigma_i, \sigma_j\right)$ &
\asp & \asp & \asn & \asn\footnote{$\Sigma$ must exclude all partitions that contain an isolated node.} & \asn \\[12pt]
Adaptative scale modularity \cite{VanLaarhoven:2014} & $\sum\limits_{k =1}^{n_c} \left[\frac{\sum_{\sigma_i=\sigma_j=k} W_{ij}}{{M+\gamma v_k}}-\left(\frac{v_k}{M+\gamma v_k}\right)^2\right]\;,\;\; v_k = \sum\limits_{\sigma_i=k \atop j\in V} W_{ij}$ 
& \asp & $\sim$\footnote{Yes for $M=0$}
& $k=1$ & $\sim$\footnote{Yes for $\gamma=0$ at least. No for $M=0$, nor for $\gamma \geq 1$ (note that the authors reach another conclusion \emph{with} self-loops.)} & $\sim$\footnote{Yes for $\gamma \geq 2$, as claimed by authors}\\[12pt]
Constrained q-states Potts model \cite{reichardt2004} & $\alpha\sum\limits_{\substack{i,j\in V\\ \sigma_i=\sigma_j}} \left[W_{ij}\right] - \beta \sum_{k=1}^{n_c}\left[\left| c_k \right| (\left| c_k \right|-1)\right]$
& \asp & \asn & $k=0$ & \asp & \asp \\[12pt]
Constant Potts model \cite{Traag2011} & $\sum\limits_{\substack{i,j\in V\\ \sigma_i=\sigma_j}} \left[W_{ij}-\gamma\right]$ &
\asp & \asn & $k=0$ & $\sim$\footnote{Yes for $\gamma\geq0$} & \asp \\[12pt]
Modularity density \cite{Li2008} & $\sum\limits_{k=1}^{n_c}\frac{1}{\left|c_k\right|}\sum\limits_{i\in c_k}s_i^{int}-s_i^{ext}$ & 
\asp & \asp & $k=1$ & \asn\footnote{\label{note:Sig2s}$\Sigma$ must exclude at least all partitions with more than one singleton community.} & \asp\\[12pt]
Sum of community strengths \cite{medus2009alternative}\footnote{Those authors add a positivity constraint; we here consider it without constraints.} & $\sum_{k=1}^{n_c} \frac{\sum_{i \in \sigma_k} \, s_i^{int} - s_i^{ext}}{\sum_{i \in \sigma_k} \, s_i^{int} + s_i^{ext}}$ & 
\asp & \asp & $k=1$ & \asn \textsuperscript{\ref{note:Sig2s}}
 & \asp\\[12pt]
Label Propagation \cite{tibely2008} & $\sum\limits_{\substack{i,j\in V\\ \sigma_i=\sigma_j}} \left[W_{ij}
\right]$ &
\asp & \asp & $k=0$ & \asn & \asp \\[12pt]
Map Equation \cite{Rosvall2008} & optimal coding of random walk &
y\footnote{for connected graph} & y & n & \asn\textsuperscript{\ref{note:Sig2s}}
& ? \\[12pt]
\end{tabular}
\end{center}
\label{table:costfunction}
\caption{Properties satisfied by quality functions.}
\end{sidewaystable}
}

\noindent \textbf{Remark:} To conclude this section, we would like to insist on two points about the freedom allowed by the introduced properties.\vspace{2mm}

\noindent \textbf{R1.} All our properties (with the exception of continuity) essentially state that the \emph{ordering} of partitions according to increasing values of $f$ must be invariant under some graph transformations, but say nothing about the actual values of $f$ when the graph changes. This does not preclude that e.g.~referring to property \ref{proper:locality} (locality) the change in value from $f(W,\sigma)$ to $f(W,\sigma')$ depends on weight values outside $C_0$ and its neighbors, as long as this dependence does not affect the ordering $f(\cdot,\sigma) > f(\cdot,\sigma')$.
This ordering-based approach allows more flexibility in the expression of the accepted value functions. One could of course consider analog but stronger properties about the \emph{values} of $f$. In particular, one can verify that the following properties are sufficient but not necessary for satisfying their weaker order-based counterpart:

[value-scale invariance] For any $W,\sigma$ and $\alpha>0$, we have $f(\alpha W,\sigma) = a(\alpha)\, f(W,\sigma)$ for some monotonically increasing function $a(\alpha) >0$.

[value-$k$-locality] Consider two graphs $G_1(V,W^{(1)})$ and $G_2(V,W^{(2)})$ whose restrictions to a subset of nodes are identical as in Property \ref{proper:locality}; and partitions $\sigma, \sigma'$ which are identical except possibly on $C_0$ and for both of which $C_0$ is a union of communities. Then 
$$f(G_1,\sigma)-f(G_1,\sigma') = f(G_2,\sigma) - f(G_2,\sigma') \, .$$

[value-consistent improvement] Let $G(V,W')$ be a consistent improvement of $G(V,W)$ with respect to a community partition $\sigma$, then $f(G',\sigma)\geq f(G,\sigma)$.\vspace{2mm}

\noindent \textbf{R2.} We do not impose invariance under permutation of the nodes as in \cite{VanLaarhoven:2014}. This means that the value functions could depend on the particular labels of the nodes, which \emph{does allow} using prior knowledge on the community partition.
For instance, to reinforce the value of having two particular nodes $i,j$ in the same community, we can define $f_p(W,\sigma) = f(W',\sigma)$ where $W'_{1,2}=p W_{1,2}$ for some $p\gg 1$ and $W'_{i,j}=W_{i,j}$ for all other $i,j$. If $f$ satisfies some of the above properties, then the same properties are satisfied by $f_p(W,\sigma)$.

In the sequel, our impossibility results in Section \ref{sec:imposs_results} allow all this freedom, while the value functions which we introduce as feasibility examples in Section \ref{sec:valfuncs} do satisfy the stronger value-based properties of R1 and are independent of node labels. In this sense, the distinction of the ``extra freedom'' in R1 and R2 allows no extra margin in the impossibility and possibility results proposed in the present paper.

\section{Impossibility results}\label{sec:imposs_results}

We now provide two extensions to Kleinberg's result \cite{Kleinberg2002} about incompatible axioms for community partition based on optimizing a cost function. Our first result replaces the absolute notion of consistency used in \cite{Kleinberg2002} by our relative one, Property \ref{proper:relative_consistent_improvement} (together with the continuity assumption of Property \ref{proper:continuity}, related to the quality-function based framework we use). Our second result shows that even when dropping the axiom of consistent improvement, it is impossible to satisfy all the other axioms if 0-locality is required.

In Section \ref{sec:valfuncs}, we will further identify richness as a central culprit for these impossibilities. Indeed, we will provide explicit value functions which show that the axioms can all be satisfied if richness is required with respect to all except a few particular partitions. Furthermore, we will see that these value functions satisfy all the axioms, including full richness, if the latter is allowed to exploit graphs with self-loops. The impossibility result \emph{\`a la} Kleinberg indeed applies to the setting without self-loops.\\

The next lemma, stating that graphs for which a partition is strictly optimal can always be assumed to have positive weights $W_{ij}>0$ on all pairs of nodes, will be useful in the sequel. The proof is available in Appendix \ref{app:proof_positive_weights}.

\begin{lem}\label{lem:positive_weights}
Let $f$ be a value function satisfying Property \ref{proper:continuity} on continuity with respect to $W$, and let $G(V,W)$ be a graph for which a partition $\sigma$ is strictly optimal: $f(G,\sigma) > f(G,\sigma')$ for all $\sigma' \neq \sigma$. Then there exists a $\delta>0$ such that $\sigma$ is also strictly optimal for any graph $G(V,W')$ with $|W'_{ij}-W_{ij}| < \delta$ for all $i,j$.

In particular, there exists an open set of graphs $G^+(V,W^+)$ for which $\sigma$ is also strictly optimal and for which $W^+_{ij}>0$ for all $i\neq j$. 
\end{lem}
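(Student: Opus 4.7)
The plan is to exploit two facts: (i) for a fixed node set $V$, the set of community partitions is finite, and (ii) Property~\ref{proper:continuity} gives continuity of $f(\cdot,\tau)$ in $W$ for each fixed partition $\tau$. Together these reduce the statement to a routine argument that a finite intersection of open neighborhoods is open.

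First I would enumerate the (finitely many) partitions $\sigma' \neq \sigma$ of $V$. For each such $\sigma'$, consider the real-valued map $\Phi_{\sigma'}(W) := f(W,\sigma) - f(W,\sigma')$. By Property~\ref{proper:continuity}, $\Phi_{\sigma'}$ is continuous in $W$, and by hypothesis $\Phi_{\sigma'}(W) > 0$. Hence there exists $\delta_{\sigma'} > 0$ such that $\Phi_{\sigma'}(W') > 0$ whenever $|W'_{ij} - W_{ij}| < \delta_{\sigma'}$ for all $i,j$. Setting $\delta := \min_{\sigma' \neq \sigma} \delta_{\sigma'}$, which is positive since the minimum is taken over a finite set, proves the first claim: $\sigma$ remains strictly optimal on the whole open $\delta$-ball around $W$ in the entrywise $\ell^\infty$ metric.

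For the second claim, I would simply pick any $W^+$ inside this ball whose off-diagonal entries are all strictly positive: define $W^+_{ij} = W_{ij} + \epsilon$ for all pairs $i \neq j$ with $W_{ij} = 0$, and $W^+_{ij} = W_{ij}$ elsewhere (keeping $W^+_{ii}=0$ for the no-self-loops convention), with $0 < \epsilon < \delta$. Then $W^+$ lies in the $\delta$-ball around $W$, so $\sigma$ is strictly optimal for $G^+(V,W^+)$. Applying the first part of the lemma to this new graph $G^+$ produces a $\delta^+>0$-ball around $W^+$ on which $\sigma$ remains strictly optimal; shrinking $\delta^+$ if necessary so that all off-diagonal entries stay positive (which is possible because each $W^+_{ij}$ with $i \neq j$ is bounded away from zero), we obtain the desired open set of graphs with strictly positive off-diagonal weights on which $\sigma$ is strictly optimal.

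I do not foresee a genuine obstacle: the argument is essentially just ``continuity plus finiteness of the partition set''. The only point that deserves care is to remember that the no-self-loops convention constrains the perturbation to the subspace $\{W : W_{ii} = 0\}$; continuity on this subspace suffices, so the argument goes through without modification.
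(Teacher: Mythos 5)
Your proposal is correct and follows essentially the same route as the paper's proof: continuity of $f(\cdot,\sigma')$ for each of the finitely many partitions, a per-partition neighborhood in which the strict inequality survives, and a minimum over that finite set to get a single $\delta>0$ (the paper phrases this with a uniform gap $\epsilon$ to the second-best partition, but that is only a cosmetic difference). Your explicit two-step treatment of the ``in particular'' clause—perturbing the zero off-diagonal weights into the $\delta$-ball and then reapplying the first part around the new point—is a sound way to fill in what the paper leaves implicit.
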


Continuity is required only through this essential implication in the proofs of two main results presented in this Section.

\subsection{Generalizing Kleinberg's impossibility result}

\begin{thm}\label{thm:imposs_cont_scale_rich_relimpr}
There is no value function satisfying properties \ref{proper:continuity} (continuity), \ref{proper:scale} (scale invariance), \ref{proper:richness_set} (richness) with respect to all partitions, and \ref{proper:relative_consistent_improvement} (relative consistent improvement).
This holds true even if the value function is only required to be defined for a specific number of nodes $N>1$.
\end{thm}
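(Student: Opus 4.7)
The plan is to derive a contradiction by specializing to the two extreme partitions of $V$: the all-encompassing partition $\sigma_{\mathrm{all}}$ (a single community containing all $N$ nodes) and the all-singletons partition $\sigma_{\mathrm{sing}}$ (each node alone). These are distinct as soon as $N>1$. The reason for choosing this pair is that every edge $(i,j)$ with $i\neq j$ is simultaneously intra-$\sigma_{\mathrm{all}}$ and inter-$\sigma_{\mathrm{sing}}$, so conditions (ii) and (iii) of Property~\ref{proper:relative_consistent_improvement} collapse to the \emph{same} inequality on weights. This collapse is what lets us combine scale invariance with relative consistency, and identifying it is the main conceptual step.

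By richness (Property~\ref{proper:richness_set}) applied to $\sigma_{\mathrm{all}}$, there exists a graph $G_{\mathrm{all}}(V,W^{\mathrm{all}})$ for which $\sigma_{\mathrm{all}}$ is strictly optimal; in particular $f(G_{\mathrm{all}},\sigma_{\mathrm{all}}) > f(G_{\mathrm{all}},\sigma_{\mathrm{sing}})$. Similarly, richness applied to $\sigma_{\mathrm{sing}}$ yields a graph $G_{\mathrm{sing}}$ with $\sigma_{\mathrm{sing}}$ strictly optimal, and by Lemma~\ref{lem:positive_weights} (which uses Property~\ref{proper:continuity}) we may assume $W^{\mathrm{sing}}_{ij}>0$ for every $i\neq j$ while preserving strict optimality of $\sigma_{\mathrm{sing}}$. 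Choose the scalar
$$\alpha \;=\; \max_{i\neq j}\, \frac{W^{\mathrm{all}}_{ij}}{W^{\mathrm{sing}}_{ij}}\,,$$
which is finite and strictly positive. By Property~\ref{proper:scale} (scale invariance), $\sigma_{\mathrm{sing}}$ remains strictly optimal for the rescaled graph $\alpha G_{\mathrm{sing}}$; in particular $f(\alpha G_{\mathrm{sing}},\sigma_{\mathrm{sing}}) > f(\alpha G_{\mathrm{sing}},\sigma_{\mathrm{all}})$.

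Now apply Property~\ref{proper:relative_consistent_improvement} with $G=\alpha G_{\mathrm{sing}}$, $G'=G_{\mathrm{all}}$, $\sigma=\sigma_{\mathrm{sing}}$, $\sigma'=\sigma_{\mathrm{all}}$. Condition (i) is exactly the strict inequality above. Condition (ii) requires $G_{\mathrm{all}}$ to be a consistent improvement of $\alpha G_{\mathrm{sing}}$ with respect to $\sigma_{\mathrm{sing}}$: there are no intra-$\sigma_{\mathrm{sing}}$ edges (since $W_{ii}=0$), and every other edge is inter-$\sigma_{\mathrm{sing}}$, so the requirement is $W^{\mathrm{all}}_{ij}\leq \alpha W^{\mathrm{sing}}_{ij}$ for all $i\neq j$, which holds by the choice of $\alpha$. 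Condition (iii) requires $\alpha G_{\mathrm{sing}}$ to be a consistent improvement of $G_{\mathrm{all}}$ with respect to $\sigma_{\mathrm{all}}$: here every edge is intra-$\sigma_{\mathrm{all}}$ and none is inter, so the requirement is $\alpha W^{\mathrm{sing}}_{ij}\geq W^{\mathrm{all}}_{ij}$, again the same inequality. Relative consistency therefore yields
$$f(G_{\mathrm{all}},\sigma_{\mathrm{sing}}) \;>\; f(G_{\mathrm{all}},\sigma_{\mathrm{all}})\,,$$
which directly contradicts the strict optimality of $\sigma_{\mathrm{all}}$ on $G_{\mathrm{all}}$.

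The main obstacle, as anticipated, is not any routine calculation but the strategic observation that by picking $\sigma_{\mathrm{sing}}$ and $\sigma_{\mathrm{all}}$ we kill two birds with one stone: the two ``consistent-improvement'' hypotheses required by Property~\ref{proper:relative_consistent_improvement} become the same componentwise inequality on the weight matrices, which pure scaling of a positive-weight $G_{\mathrm{sing}}$ can enforce. The role of continuity is confined to Lemma~\ref{lem:positive_weights}, whose only function here is to ensure that the $G_{\mathrm{sing}}$ furnished by richness can be chosen with strictly positive off-diagonal weights, so that the maximum defining $\alpha$ is finite.
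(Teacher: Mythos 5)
Your proof is correct and follows essentially the same route as the paper's: take the all-encompassing and all-singletons partitions, use richness plus Lemma~\ref{lem:positive_weights} to obtain strictly positive weights, rescale one graph so that its weights dominate the other's componentwise, and observe that conditions (ii) and (iii) of Property~\ref{proper:relative_consistent_improvement} collapse to that single componentwise inequality. One trivial caveat: as written your $\alpha$ could be $0$ if $W^{\mathrm{all}}$ happened to be the zero matrix, so you should either apply Lemma~\ref{lem:positive_weights} to $G_{\mathrm{all}}$ as well (as the paper does) or simply take $\alpha=\max\bigl(1,\max_{i\neq j}W^{\mathrm{all}}_{ij}/W^{\mathrm{sing}}_{ij}\bigr)$, which leaves the rest of the argument unchanged.
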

More particularly, we prove the impossibility if the richness set $\Sigma$ must contain both the partition in one community to which all nodes belong, and the partition in $N$ singleton communities.

\begin{proof}
We consider any set $V$ of $N>1$ nodes and any value function satisfying the four properties mentioned in the statement of the result, and derive a contradiction.

Let $\sigma^g$ be a partition in one community to which all nodes belong, and $\sigma^s$ a partition in $N$ singleton communities. Let then $G^g(V,W^g)$ and $G^s(V,W^s)$ be graphs for which the respective partitions $\sigma^g$ and $\sigma^s$ are optimal; these graphs exist from the richness Property \ref{proper:richness_set}. In particular $f(W^s,\sigma^s)>f(W^s,\sigma^g)$ and $f(W^g,\sigma^g) > f(W^g,\sigma^s)$. Moreover, we can assume $W^s_{ij},W^g_{ij}>0$ for all $i,j$ thanks to Property \ref{proper:continuity} (continuity) which implies Lemma \ref{lem:positive_weights}. There exists thus a $\rho>1$ such that $\rho W^s_{ij}\geq W^g_{ij}$ for all $i\neq j$. 
By scale invariance, $f(W^s,\sigma^s)>f(W^s,\sigma^g)$ implies then that $f(\rho W^s,\sigma^s)>f(\rho W^s,\sigma^g)$. 

Since $\rho W^s_{ij}\geq W^g_{ij}$ for all $i\neq j$, observe that $\rho W^s$ is a consistent improvement of $W^g$ for $\sigma^g$, since it corresponds to increasing the weights of edges which are all trivially within the unique community of $\sigma^g$. Similarly, $W^g$ is a consistent improvement of $\rho W^s$ for the partition $\sigma^s$ into $N$ singleton communities, as it corresponds to decreasing the weights of edges between these communities. It follows then from 
$f(\rho W^s,\sigma^s)>f(\rho W^s,\sigma^g)$ and Property \ref{proper:relative_consistent_improvement} that $f(W^g,\sigma^s)> f(W^g,\sigma^g)$, in contradiction with the optimality of $\sigma^g$ for $W^g$ \ie $f(W^g,\sigma^g) > f(W^g,\sigma^s)$.
\end{proof}

Note that Theorem \ref{thm:imposs_cont_scale_rich_relimpr} does not mention any locality requirement. Before trying to find a value function satisfying locality, we must anyways first relax some of the other axioms. Possibilities in this direction are outlined in Section \ref{sec:valfuncs}. Yet before this we present an impossibility result with another set of properties.

\subsection{Impossibility due to locality instead of consistent improvement}

In the light of the impossibility result of the previous section, it is relevant to relax some assumptions or replace them by weaker or alternative versions. One candidate property could be the consistent improvement. We have indeed seen that 
absolute consistency can have problematic consequences, and it could be that our version of relative consistency is still too restrictive. 
Besides, it is not satisfied by some famous community partition methods such as the modularity maximization (see Table \ref{table:costfunction}).

This section however highlights a new impossibility, independent of any requirement related to consistent improvement, and that would therefore need to be taken into account for any property replacing consistent improvement. 
We remark that as consistent improvement is the only property of Section \ref{sec:properties} that imposes the partition to correspond to our intuitive notion of communities, this impossibility result is actually relevant for any arbitrary notion of value-based partition.

\begin{thm}\label{thm:imposs_cont_sloc_scale_rich}
There is no value function satisfying properties \ref{proper:continuity} (continuity), \ref{proper:locality} for $k=0$ ($0$-locality), \ref{proper:scale} (scale invariance) and \ref{proper:richness_set} (richness)  with respect to all partitions. This holds true even if the value function is only required to be defined for a specific number of nodes $N>1$.
\end{thm}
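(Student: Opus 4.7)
The strategy is to use richness twice to construct two graphs whose strictly optimal partitions differ exactly by whether nodes $1$ and $2$ sit together in a community, then use scale invariance to align the single weight $W_{12}$ across the two graphs, and finally invoke $0$-locality on $V_0=\{1,2\}$ to force a contradiction.

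Assume first that $N\geq 3$, and fix the two partitions $\sigma_{12}=\{\{1,2\},\{3\},\dots,\{N\}\}$ and $\sigma_s=\{\{1\},\{2\},\{3\},\dots,\{N\}\}$. By Property~\ref{proper:richness_set} applied to both, there exists a graph $G^{(a)}(V,W^{(a)})$ for which $\sigma_{12}$ is strictly optimal and a graph $G^{(b)}(V,W^{(b)})$ for which $\sigma_s$ is strictly optimal; in particular
$$f(W^{(a)},\sigma_{12})>f(W^{(a)},\sigma_s)\qquad\text{and}\qquad f(W^{(b)},\sigma_s)>f(W^{(b)},\sigma_{12}).$$
Continuity together with Lemma~\ref{lem:positive_weights} lets us replace each graph by a nearby one in which every off-diagonal weight is strictly positive while preserving strict optimality, so we may assume $W^{(a)}_{12}>0$ and $W^{(b)}_{12}>0$.

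Next, apply scale invariance (Property~\ref{proper:scale}) to $G^{(b)}$ with $\alpha=W^{(a)}_{12}/W^{(b)}_{12}>0$, producing a rescaled graph $\tilde G^{(b)}(V,\tilde W^{(b)})$ with $\tilde W^{(b)}=\alpha W^{(b)}$. Scaling preserves the ordering of partition qualities, so $f(\tilde W^{(b)},\sigma_s)>f(\tilde W^{(b)},\sigma_{12})$, and by construction $\tilde W^{(b)}_{12}=W^{(a)}_{12}$. Invoke $0$-locality (Property~\ref{proper:locality} with $k=0$) on $V_0=\{1,2\}$, taking $\sigma=\sigma_{12}$ (in which $C_0=\{1,2\}$ is a community) and $\sigma'=\sigma_s$ (which splits $C_0$ into the two singletons while agreeing with $\sigma_{12}$ outside $C_0$). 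Because $W^{(a)}$ and $\tilde W^{(b)}$ restricted to $V_0$ both reduce to the single entry $W^{(a)}_{12}=\tilde W^{(b)}_{12}$ (recall self-loops are forbidden), the hypothesis of $0$-locality holds, and so
$$f(W^{(a)},\sigma_{12})\geq f(W^{(a)},\sigma_s)\iff f(\tilde W^{(b)},\sigma_{12})\geq f(\tilde W^{(b)},\sigma_s).$$
The left-hand inequality holds strictly while the right-hand one is strictly reversed, a contradiction.

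For the boundary case $N=2$, $0$-locality is vacuous (either $V_0=V$, forcing the two graphs to be identical, or $|V_0|=1$, making the required split of $C_0$ impossible). Here the argument reduces to applying richness, Lemma~\ref{lem:positive_weights}, and scale invariance to the two available partitions $\{\{1,2\}\}$ and $\{\{1\},\{2\}\}$: a single rescaling brings the two richness-witness graphs (each with strictly positive $W_{12}$) to the \emph{same} graph, which would then have both partitions as the unique strict optimum. The main conceptual difficulty is precisely what step three exploits: choosing a pair of partitions that differ only by splitting/merging one specific pair of nodes, so that $0$-locality constrains their ordering from the value of a single weight, which scale invariance can freely match across the two richness witnesses.
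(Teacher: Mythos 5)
Your proposal is correct and follows essentially the same route as the paper's proof: richness gives two witness graphs for partitions differing only in whether $\{1,2\}$ is merged or split, Lemma~\ref{lem:positive_weights} makes $W_{12}$ positive in both, scale invariance aligns that single weight, and $0$-locality on $V_0=\{1,2\}$ yields the contradiction (the paper rescales $W^a$ rather than $W^b$, which is immaterial). Your separate treatment of $N=2$ is harmless but unnecessary, since in that case the rescaled graphs coincide entirely and the contradiction follows directly.
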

More particularly, we prove the impossibility if the richness set $\Sigma$ must contain a partition where two nodes form one community and a partition where these same two nodes form two singleton communities.

\begin{proof}
We will again show that the properties mentioned in the statement of the theorem imply a contradiction, for any set $V$ of $N>1$ nodes.

Consider two partitions $\sigma^a,\sigma^b$ that are identical except that nodes 1 and 2 form one community in $\sigma^a$ and form two separate singleton communities in $\sigma^b$. By the richness property, there exists two graphs $G^a(V,W^a)$ and $G^b(V,W^b)$ such that $\sigma^a$ is the unique optimal partition for $G^a$ and $\sigma^b$ the unique optimal partition for $G^b$. In particular, $f(W^a,\sigma^a)> f(W^a,\sigma^b)$ and $f(W^b,\sigma^b) > f(W^b,\sigma^a)$. 

It follows from the continuity property and thus Lemma \ref{lem:positive_weights} that the weights $W^a_{12}$ and $W^b_{12}$ of the edge connecting $1$ and $2$ in $G^a$ and $G^b$ respectively can both be assumed positive, so that $\rho = W^b_{12}/W^a_{12}>0$ is well defined. By the scale-invariance property, $f(W^a,\sigma^a)> f(W^a,\sigma^b)$ implies then $f(\rho W^a,\sigma^a)> f(\rho W^a,\sigma^b)$.

By definition of $\rho$, the restriction to $\{1,2\}$ of the graphs defined by $\rho W^a$ and $W^b$ are identical. Besides, remember that $\sigma^b$ is obtained from $\sigma^a$ by only splitting the community $\{1,2\}$ in two singleton communities. Then because of the $0$-locality property, the inequality $f(\rho W^a,\sigma^a)> f(\rho W^a,\sigma^b)$ implies $f(W^b,\sigma^a)> f(W^b,\sigma^b)$, in contradiction with the strict optimality of $\sigma^b$ for $G^b(V,W^b)$.
\end{proof}

\section{New value functions}\label{sec:valfuncs}
We have seen that none of the main value functions available in the literature satisfy all the desirable properties of Section \ref{sec:properties}, and that it is actually impossible to simultaneously satisfy their strict versions. We now show that slightly weaker versions of these properties are not mutually incompatible, by providing examples of value functions satisfying them. At this point we do not claim that these value functions lead to practically relevant community partitions; we only design them to show the possibility of simultaneously satisfying the properties of Section \ref{sec:properties}, when richness excludes a few specific partition types.\\

We first note that, when trying to follow a common approach in which the quality of a partition is computed by summing the individual contributions of all edges or of all edges inside communities, it appears necessary to introduce non-trivial scalings in order to simultaneously satisfy scale-invariance, locality and significant richness properties.
For example, in the general framework introduced by Reichardt and Bornholdt \cite{Reichardt2006}, the quality function is expressed as $f\left(W, \sigma\right) = \sum_{i,j} (W_{ij} - N_{ij})\, \delta(\sigma_i,\sigma_j)$ where $N_{ij}$ is the expected value of $W_{ij}$ in a given null model. This encourages grouping nodes in a community if they are joined by an edge whose weight exceeds the threshold $N_{ij}$. 
Now if this threshold depends on exogenous parameters (e.g.~a constant), then the value function $f$ will typically not be scale-invariant. 
On the other hand if the $N_{ij}$ are determined endogenously involving \emph{global} graph properties (like say the average node strength, as \eg in the modularity \cite{Newman2004}), then $f$ is typically non-local: a modification in any part of the graph can change the value of $N_{ij}$ and therefore affect the decision of having $i$ and $j$ in a same community or for instance as isolated nodes. This only leaves the possibility to compute $N_{ij}$ from the weights of edges incident on $i$ and $j$. A simple example of such quality function would be modularity density as defined in \cite{Li2008}:
\begin{equation}
f_{md} := \sum_{k}\frac{1}{\abs{c_k}} \sum_{i\in c_k} \prt{s_i^{int}-s_i^{ext}} \; .
\end{equation}
This is a Reichardt-Bornholdt-like version of the adaptive scale modularity from \cite{VanLaarhoven:2014} with $M=0$, satisfying scale-invariance and $1$-locality. It is not difficult to prove that if $\sigma$ is a partition with two isolated nodes, then $f_{md}$ can always be increased by grouping these nodes.

In the following, we show that by cleverly using community-size-dependent scalings in variations of $f_{md}$, one can obtain a class of quality functions with stronger locality or larger richness set than $f_{md}$ and than all the proposed quality functions we are aware of.

We refer the reader to Section \ref{sec:experiments} for an experimental investigation of the partitioning behavior with those value functions.

\newcommand{\fa}{f_A}

\subsection{Sum of ``average'' internal strengths}\label{sec:fa}

Our first new value function is a re-scaled version of the sum over all communities of the internal strength of their nodes:
\begin{equation}\label{eq:def_fa}
\fa(G,\sigma) = \sum_{k=1\dots n_c}\frac{1}{R(\abs{c_k})}\sum_{i\in c_k}s_i^{int}=\sum_{k=1\dots n_c}\frac{1}{R(\abs{c_k})}\sum_{i\in c_k}\sum_{j\in c_k} W_{ij},
\end{equation} 
where $R$ is any nonnegative normalization function satisfying the following conditions: (i) $R$ is strictly increasing, and (ii) $R(a)/a$ is nonincreasing. The class of possible $R$ includes all functions $R(a) = a^\beta$ for $\beta \in (0,1]$. For $\beta=1$ we are just summing the \emph{average} internal strengths of the communities.

\begin{thm}\label{thm:fa}
The value function $\fa$ defined in \eqref{eq:def_fa} satisfies Properties  \ref{proper:continuity} (continuity), \ref{proper:scale} (scale-invariance), \ref{proper:locality} for $k=0$ hence also $k=1$, Property \ref{proper:richness_set} (richness) for the set $\Sigma$ of partitions with at most one singleton community, and Property \ref{proper:relative_consistent_improvement} (relative consistent improvement).
\end{thm}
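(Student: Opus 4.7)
The proof consists in verifying the five properties one by one, and the strategy is straightforward in four of the five cases. Continuity is immediate because $f_A(W,\sigma)$ is linear, hence continuous, in $W$ for each fixed $\sigma$. Scale-invariance follows from $f_A(\alpha W,\sigma)=\alpha f_A(W,\sigma)$, which preserves any strict ordering between partitions. For $0$-locality, observe that $f_A$ is a sum of per-community contributions of the form $\frac{1}{R(|c_k|)}\sum_{i,j\in c_k}W_{ij}$, so splitting a community $C_0$ into $C_{01},C_{02}$ modifies $f_A$ only through terms depending on weights $W_{ij}$ with $i,j\in C_0$. Hence two graphs agreeing on $C_0\times C_0$ yield the same sign of $f_A(W,\sigma)-f_A(W,\sigma')$, which is exactly $0$-locality (and therefore also the weaker $1$-locality).

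The relative consistent improvement property follows from a two-sided monotonicity argument. Since $\sum_{i,j\in c_k}W_{ij}$ involves only intra-community (with respect to $\sigma$) weights and a consistent improvement of $G$ with respect to $\sigma$ weakly increases all such weights, one obtains $f_A(G',\sigma)\geq f_A(G,\sigma)$. Symmetrically, because $G$ is a consistent improvement of $G'$ with respect to $\sigma'$, intra-$\sigma'$ weights weakly decrease from $G$ to $G'$, so $f_A(G',\sigma')\leq f_A(G,\sigma')$. Chaining these with the hypothesis $f_A(G,\sigma)>f_A(G,\sigma')$ yields $f_A(G',\sigma)>f_A(G',\sigma')$.

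The substantive content of the theorem is richness. For each $\sigma$ with at most one singleton I construct the ``aligned'' graph $G_\sigma$ with $W_{ij}=1$ when $\sigma_i=\sigma_j$ (and $i\neq j$) and $W_{ij}=0$ otherwise, and prove $\sigma$ is its unique maximizer. The analysis proceeds through the common refinement $\sigma''=\sigma\wedge\sigma'$ and the chain
\begin{equation*}
f_A(G_\sigma,\sigma)\;\geq\;f_A(G_\sigma,\sigma'')\;\geq\;f_A(G_\sigma,\sigma').
\end{equation*}
The first inequality reduces, on each community of $\sigma$, to strict superadditivity of $h(n):=n(n-1)/R(n)$: introducing $\phi(n):=h(n)/n=(n-1)/R(n)$, the assumptions (i) $R$ strictly increasing and (ii) $R(n)/n$ nonincreasing together give $R(n+1)/R(n)\leq (n+1)/n<n/(n-1)$ for $n\geq 2$, hence $\phi(n+1)>\phi(n)$ for all $n\geq 1$; strict superadditivity $h(n_1+n_2)>h(n_1)+h(n_2)$ then follows from the identity $h(n_1+n_2)-h(n_1)-h(n_2)=n_1[\phi(n_1+n_2)-\phi(n_1)]+n_2[\phi(n_1+n_2)-\phi(n_2)]$ whenever $n_1,n_2\geq 1$. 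The second inequality reduces, on each $c'\in\sigma'$, to the elementary bound $\tfrac{1}{R(|c'|)}\sum_c n_{c,c'}(n_{c,c'}-1)\leq\sum_c n_{c,c'}(n_{c,c'}-1)/R(n_{c,c'})$ with $n_{c,c'}=|c\cap c'|$, which follows from $R(n_{c,c'})\leq R(|c'|)$.

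The main obstacle is the strictness analysis that ties these together. The first inequality is strict unless $\sigma''=\sigma$, that is, unless $\sigma$ refines $\sigma'$. The second is strict unless, for every $c'\in\sigma'$, either $c'$ is entirely contained in one $c\in\sigma$ or every nonempty intersection $c\cap c'$ is a singleton. When both equalities hold simultaneously, each $c'\in\sigma'$ is either a full community of $\sigma$ or a disjoint union of singleton communities of $\sigma$; the hypothesis that $\sigma$ contains at most one singleton rules out any nontrivial such disjoint union, forcing $\sigma'=\sigma$. This yields the strict optimality of $\sigma$ on $G_\sigma$ and thus richness over the required set $\Sigma$.
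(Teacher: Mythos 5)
Your proof is correct and follows essentially the same route as the paper's: the first four properties are verified by the same direct linearity/monotonicity observations (with your two-sided chain for relative consistency making explicit what the paper asserts via its "value-consistent improvement" remark), and the richness argument uses the same clique graph and the same comparison of any competitor $\sigma'$ to the target $\sigma$ via their common refinement, resting on the same two facts — that $(n-1)/R(n)$ is strictly increasing (your superadditivity of $h$, the paper's Claim 2) and that $R$ increasing makes splitting along empty cuts weakly favorable (your second inequality, the paper's Claim 1). The equality analysis excluding partitions with two or more singletons also matches the paper's.
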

More precisely, $\fa$ satisfies the stronger properties of value-scale invariance, value-consistent improvement and value-$0$-locality mentioned in Remark R1. The locality property may in fact be extended like in \cite{VanLaarhoven:2014} to graphs $G^{(1)}$ and $G^{(2)}$ with different node sets and to $\sigma$, $\sigma'$ both containing arbitrary partitions of $C_0$.
\begin{proof}
For a fixed $\sigma$, the value of $\fa(W,\sigma)$ is a linear combination with positive coefficients of the weights $W_{ij}$ of edges inside the communities. It is therefore clearly continuous with respect to $W$ and scale invariant, as $\fa(\alpha W,\sigma) = \alpha \fa(W,\sigma)$. 

Regarding consistent improvement, consider weights $W'$ for which $W_{ij}'\geq W_{ij}$ whenever $\sigma_i=\sigma_j$  and $W_{ij}'\leq W_{ij}$ otherwise. Then there holds
$$
\fa(W',\sigma) = \sum_{k=1\dots n_c}\frac{1}{R(\abs{c_k})}\sum_{i\in c_k}\sum_{j\in c_k}W'_{ij}\geq \sum_{k=1\dots n_c}\frac{1}{R(\abs{c_k})}\sum_{i\in c_k}\sum_{j\in c_k}W_{ij} = \fa(W,\sigma).
$$
This proves value-consistent improvement as defined in Remark R1, which is a sufficient condition for Property \ref{proper:relative_consistent_improvement}.

Regarding locality, it is straightforward to see that the following value-locality property from Remark R1 holds and is sufficient to imply $0$-locality:  Given two graphs $G_1(V,W^{(1)})$ and $G_2(V,W^{(2)})$ with $W^{(1)}_{ij} = W^{(2)}_{ij}$ for all $i,j \in C_0 \subset V$, take any partition $\sigma$ for which $C_0$ is a union of communities. Consider $\sigma'$ exactly equal to $\sigma$ except that $C_0$ might be partitioned differently into communities. Then 
$$\fa(G_1,\sigma)-\fa(G_1,\sigma') = \fa(G_2,\sigma) - \fa(G_2,\sigma') \, .$$

There remains to prove that $\fa$ is rich for the set $\Sigma$ of partitions containing no more than one singleton community. For any such partition $\sigma^*$, we define a graph consisting of disjoint cliques corresponding exactly to the communities: $W_{ij}=1$ if $\sigma^*_i=\sigma^*_j$ (and $i\neq j$) and $W_{ij}=0$ else. The strict optimality of $\sigma^*$ for this graph is shown in Appendix \ref{app:richness_fa}. By Lemma \ref{lem:positive_weights} this automatically implies that the partition $\Sigma^*$ is strictly optimal for an open set of values of $W$.
\end{proof}

In terms of satisfying the properties of Section \ref{sec:properties}, $f_A$ is at least as good as the modular density $f_{md}$; in particular, as $f_A$ only comprises a positive term for each edge present inside a community, it satisfies the stronger property of 0-locality in conjunction with all the other axioms. This might appear surprising, considering e.g.~Radicchi's proposal \cite{Radicchi2004} that a community is expected to have stronger inside links than external links (the latter does not appear in $f_A$ and seems to almost exclude $0$-locality). Now, the academic properties satisfied by $f_A$ do not guarantee its practical relevance, which we further investigate in Section \ref{sec:experiments}. The existence of $f_A$ probably hints at an insufficiency of the listed properties towards guaranteeing relevant quality functions for community detection.

\newcommand{\fb}{f_B}

\subsection{Penalizing external strength}

From an axiomatic viewpoint, the shortcoming of $\fa$ introduced in Section \ref{sec:fa} is that a partition with several singleton communities is never strictly optimal. It could however be relevant in practice to have several singleton communities. 
The intuitive reason for the restricted richness of $\fa$ is the following: suppose that we want nodes 1 and 2 to form two singleton communities in a partition $\sigma$. Their internal strength would by definition be 0, so the contribution of $\{1,2\} \subset V$ to $\fa(W,\sigma)$ would be 0. On the other hand, if we join them in a two-nodes community to get the partition $\sigma'$, then their internal strength would be $W_{12}$, and the contribution of $\{1,2\} \subset V$ to $\fa(W,\sigma')$ would be $2W_{12}$. Hence we get $\fa(W,\sigma') \geq\fa(W,\sigma)$ for any $W$, such that $\sigma$ cannot be strictly optimal.

To palliate this issue, we add in this section an incentive to keep certain weakly connected nodes in separate singleton communities. More precisely, we will penalize every edge connecting a node $i$ to other communities, with a penalization weight that depends on the size of the community to which $i$ belongs and that happens to be 0 if the community only contains $i$.

Formally,
for some fixed parameter $\alpha >0$, we define the value function
\begin{align}\label{eq:def_fb}
\fb(W,\sigma) &= \sum_{k}\frac{1}{R(\abs{c_k})} \sum_{i\in c_k} \prt{s_i^{int}-\alpha(\abs{c_k}-1)s_i^{ext}}\\
& = \sum_{k}\frac{1}{R(\abs{c_k})} \sum_{i\in c_k} \prt{\sum_{j\in c_k}W_{ij}-\alpha(\abs{c_k}-1)\sum_{j\not\in c_k}W_{ij}},
\end{align}
where we remind that the normalization function is any positive increasing function for which $R(a)/a$ is nonincreasing.

The next theorem shows that the particular penalization weight $\alpha(\abs{c_k}-1)$ allows $\fb$ to be richer over a larger set $\Sigma$ than $\fa$, and than the modular density $f_{md}$ which is similar to $\fb$ but just with uniform penalization weight 1. With respect to $\fa$ this richness is however achieved at a cost, since $\fb$ only satisfies $1$-locality (as $f_{md}$ does), while $\fa$ satisfies $0$-locality. By Theorem \ref{thm:imposs_cont_scale_rich_relimpr}, irrespective of locality concerns, a further extension of the richness set $\Sigma$ is not possible unless other properties are further relaxed.
 
\begin{thm}\label{thm:fb}
The value function $\fb$ defined in \eqref{eq:def_fb} satisfies Properties \ref{proper:continuity} (continuity), \ref{proper:scale} (scale-invariance), \ref{proper:locality} (locality) with $k=1$, Property \ref{proper:richness_set} (richness) with respect to the set $\Sigma$ of partitions containing at least one community with more than one node, and Property \ref{proper:relative_consistent_improvement} (relative consistent improvement).
\end{thm}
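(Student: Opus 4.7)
I will treat the four properties in two groups. Continuity and scale-invariance are immediate: with $\sigma$ fixed, $\fb(W,\sigma)$ is a linear combination of the weights $W_{ij}$ whose coefficients depend only on the community structure, and in particular $\fb(\alpha W,\sigma)=\alpha\,\fb(W,\sigma)$. For $1$-locality I would prove the stronger value-locality version from Remark R1: when $\sigma'$ differs from $\sigma$ only by splitting the community $C_0$ into two, every other community of $\sigma$ is also a community of $\sigma'$ and contributes identically to $\fb(\cdot,\sigma)$ and $\fb(\cdot,\sigma')$, so the difference $\fb(W,\sigma)-\fb(W,\sigma')$ depends only on the term involving $C_0$ on one side and those involving $C_{01},C_{02}$ on the other, which are functions solely of weights incident to $V_0=C_0$, i.e.\ of weights in the $1$-neighbourhood of $V_0$. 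For relative consistency I would prove the stronger value-consistent improvement property: under $W'\geq W$ on intra-$\sigma$ edges and $W'\leq W$ on inter-$\sigma$ edges, every $s_i^{int}$ weakly grows and every $s_i^{ext}$ weakly shrinks, so each community term $[\sum_{i\in c_k}(s_i^{int}-\alpha(|c_k|-1)s_i^{ext})]/R(|c_k|)$ weakly grows and $\fb(W',\sigma)\geq\fb(W,\sigma)$; the two-sided argument of Remark R1 then delivers Property \ref{proper:relative_consistent_improvement}.

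The main obstacle is richness. Given any $\sigma^*\in\Sigma$ (which by hypothesis contains at least one ``clique'', i.e.\ a community of size $\geq 2$), I would construct $W$ with exactly two nonzero weight levels: $W_{ij}=M$ (large) if $i,j$ lie in a common clique of $\sigma^*$, $W_{ij}=L>0$ if exactly one of $i,j$ lies in a clique of $\sigma^*$ (clique--singleton edges), and $W_{ij}=0$ otherwise (inter-clique pairs and pairs of singletons of $\sigma^*$). Linearity of $\fb$ in $W$ gives $\fb(W,\sigma)=M\,A(\sigma)+L\,B(\sigma)$, where $A$ gathers the contributions of the $M$-weighted edges and $B$ those of the $L$-weighted ones.

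The plan is then to show that $\sigma^*$ is the unique maximiser of $MA+LB$ for $M$ large enough. Using $R$ strictly increasing and $R(a)/a$ nonincreasing, I would show that $A$ is maximised precisely on the ``clique-preserving'' partitions, in which each clique of $\sigma^*$ is, as a set, a community of $\sigma'$ with no extra nodes attached. Splitting a clique of size $n$ into pieces of sizes $a_1,\ldots,a_t$ (with $t\geq 2$) sitting inside communities of sizes $b_s\geq a_s$ gives, via the same chain of inequalities used for $\fa$ and the non-positive external $\alpha$-term, an $A$-contribution bounded above by $n(n-t)/R(n)<n(n-1)/R(n)$; and attaching extra nodes to a clique or merging two cliques into one community strictly enlarges $|c_k|$, hence $R(|c_k|)$, without adding $M$-edges to $I^M_k$, again strictly reducing $A$. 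On the set of clique-preserving partitions, $A$ is constant, and $B$ distinguishes: each singleton of $\sigma^*$ contributes $0$ (its penalty factor $|c_k|-1$ vanishes), whereas grouping $q\geq 2$ singletons of $\sigma^*$ into a single community of $\sigma'$ yields the contribution $-\alpha(q-1)q\,m/R(q)<0$, where $m\geq 2$ is the total number of clique-nodes, which is positive precisely because $\sigma^*\in\Sigma$. Since $V$ has only finitely many partitions, I can then choose $M$ large enough (depending on $\sigma^*,\alpha,L,R$) so that $M(A(\sigma^*)-A(\sigma'))+L(B(\sigma^*)-B(\sigma'))>0$ for every $\sigma'\neq\sigma^*$: either $A(\sigma^*)>A(\sigma')$ and the first term dominates, or $A(\sigma^*)=A(\sigma')$ and the second term is already strictly positive. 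Lemma \ref{lem:positive_weights} then lets me perturb the zero weights of $W$ into strictly positive ones while preserving the strict optimality of $\sigma^*$, completing the richness argument.
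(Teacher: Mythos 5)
Your treatment of continuity, scale-invariance, $1$-locality and relative consistency coincides with the paper's (both reduce to the stronger value-based properties of Remark R1 via linearity of $\fb$ in $W$), and your richness argument is correct, but it takes a genuinely different route. The paper proceeds in two stages: it first restricts to the node set $V^*$ of non-singleton communities, where the unit-clique construction gives $s_i^{ext}=0$, hence $\fb=\fa$ on the target partition while $\fb\leq\fa$ everywhere, so strict optimality is inherited from the $\fa$ richness proof; it then reattaches the singleton nodes one at a time through an inductive lemma (Lemma \ref{lem:fb+1}), each new node hanging by a single edge of weight $\epsilon$, with the key computation \eqref{eq:fb+1_comp} showing that absorbing a zero-strength node into any community strictly decreases $\fb$, and continuity handling small $\epsilon>0$ at every step. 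You instead give a one-shot construction with a two-tier weight hierarchy $M\gg L>0$ and decompose $\fb(W,\cdot)=MA(\cdot)+LB(\cdot)$: the per-clique analysis of $A$ (which essentially redoes, for $\fb$, the $R$-monotonicity estimates that the paper's Appendix B establishes for $\fa$) pins down the clique-preserving partitions as the exact maximisers of $A$, and the strictly negative penalty $-\alpha(q-1)qLm/R(q)$ breaks the tie among them in favour of keeping the $\sigma^*$-singletons separate; finiteness of the partition lattice then lets you choose $M$. Your approach buys a non-inductive argument with a single invocation of Lemma \ref{lem:positive_weights}, avoiding the nested choice of $\epsilon$'s in the paper's iteration; the paper's approach buys modularity, reusing the $\fa$ richness proof as a black box and isolating the singleton-attachment mechanism in a reusable lemma. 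One point worth making explicit if you write this up: $A(\sigma')$ decomposes as a sum of per-clique contributions (every $M$-edge and every external-$M$ penalty term is attributable to exactly one clique of $\sigma^*$), so the three local cases you list (splitting, attaching, merging) really do cover every partition; and note that $B(\sigma^*)$ itself is negative when cliques and singletons coexist, but only the difference $B(\sigma^*)-B(\sigma')$ over clique-preserving $\sigma'$ matters, and there the clique-community terms cancel.
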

More precisely, $\fa$ satisfies the stronger properties of value-scale invariance, value-consistent improvement and value-$1$-locality mentioned in Remark R1. The locality property may in fact be extended like in \cite{VanLaarhoven:2014} to graphs $G^{(1)}$ and $G^{(2)}$ with different node sets and to $\sigma$, $\sigma'$ both containing arbitrary partitions of $C_0$.

\begin{proof}
Once the partition $\sigma$ is fixed, $f(W,\sigma)$ is a linear function of the weights $W_{ij}$, it is hence trivially continuous in $W$ and scale-invariant as $\fb(\alpha W,\sigma) = \alpha \fb(W,\sigma)$.

Regarding consistent improvement, consider weights $W'$ for which $W_{ij}'\geq W_{ij}$ for all $i,j$ in the same communities and $W_{ij}'\leq W_{ij}$ for other pairs $i,j$. Then for all $i$ we have
\begin{eqnarray*}
\sum_{i \in C_k} \sum_{j \in C_k} W'_{ij} & \geq & \sum_{i \in C_k} \sum_{j \in C_k} W_{ij} \;\; \text{ and} \\
\sum_{i \in C_k} \sum_{j \not\in C_k} W'_{ij} & \geq & \sum_{i \in C_k} \sum_{j \not\in C_k} W_{ij} \;\; .
\end{eqnarray*}
Referring to the definition \eqref{eq:def_fb} this straightforwardly proves value-consistent improvement as defined in Remark R1, which is a sufficient condition for Property \ref{proper:relative_consistent_improvement}.

Regarding locality, it is straightforward to see that the following value-locality property from Remark R1 holds and is sufficient to imply $1$-locality:  Given two graphs $G_1(V,W^{(1)})$ and $G_2(V,W^{(2)})$ with $W^{(1)}_{ij} = W^{(2)}_{ij}$ for all $i,j \in C_0 \subset V$ and for all $i,j$ for which $i$ or $j$ belongs to $C_0$, take any partition $\sigma$ for which $C_0$ is a union of communities. Consider $\sigma'$ exactly equal to $\sigma$ except that $C_0$ might be partitioned differently into communities. Then 
$$\fb(G_1,\sigma)-\fb(G_1,\sigma') = \fb(G_2,\sigma) - \fb(G_2,\sigma') \, .$$

There remains to prove the richness property of $\fb$. We do this in two steps. For a given partition $\sigma$ of the node set $V$, let $V^* \subset V$ the nodes which do not belong to a singleton partition in $\sigma$ and $\sigma^*$ the partition of $V^*$ corresponding to $\sigma$. First, we prove that the same construction as for $\fa$, applied to the weights $W^*$ between nodes $V^*$, makes the partition $\sigma^*$ of $V^*$ optimal for $\fb$, i.e.~: for any $i \in V^*$, let $W^*_{ij}=1$ if $\sigma^*_i = \sigma^*_j$ and $W_{ij}=\delta$ otherwise, with small $\delta>0$ selected by continuity. Second, we add (if necessary) the nodes of $V\setminus V^*$ to this construction by taking $W_{i,j} = \varepsilon$ for all $i \in V\setminus V^*$ and we show that for $\varepsilon >0$ sufficiently small this $W$ makes $\sigma$ strictly optimal. Details are given in Appendix \ref{app:richness_fb}.
\end{proof}

\noindent \textbf{Remark:} Before moving to further considerations, we must mention that \emph{both $f_A$ and $f_B$ would satisfy all the axioms, including richness for all possible partitions of the graph nodes, if the graph was allowed to contain self-loops.} Linearity, scale-invariance and locality indeed remain trivially true if self-loops are added into $W$. Consistent improvement also still holds with the same proof. Regarding richness, we know that particular graphs, in which all self-loops have zero weight, already allow to make a large set of partitions strictly optimal. To make a partition with several isolated nodes optimal, it suffices to construct the optimal weights for the partition without those nodes, and then complete the graph by adding those nodes with each a strong self-loop and a very weak connection to any other node in the graph.

\newcommand{\fc}{f_C}

\subsection{Relaxing consistent improvement}

In Theorem \ref{thm:imposs_cont_sloc_scale_rich} we have shown that even if consistent improvement is dropped,  
then it is still not possible to satisfy all the other criteria including $0$-locality and richness with respect to all partitions. If richness is relaxed, then $\fa$ shows how also consistent improvement can be included. As a complement, the following value function shows that it is possible to be rich with respect to all partitions and satisfy $1$-locality instead of $0$-locality: 
$$\fc(W,\sigma) = \sum_k \, \frac{1}{\vert c_k \vert} \, \left( \left( \sum_{i \in c_k} \, s_i^{in} \right) - \alpha \vert c_k \vert (\vert c_k \vert - 1) \min_{j \in c_k,\, l \in V} W_{lj} \, \right) \, .$$

One can show that \emph{$\fc$ satisfies properties \ref{proper:continuity} (continuity), \ref{proper:locality} with $k=1$($1$-locality), \ref{proper:scale} (scale invariance) and \ref{proper:richness_set} (richness) with respect to all partitions provided $\alpha>1$, while it then (unavoidably, from Theorem \ref{thm:imposs_cont_scale_rich_relimpr}) fails to satisfy Property \ref{proper:relative_consistent_improvement} (relative consistent improvement).}

We do not provide a formal proof here because, before claiming any usefulness for $\fc$, it should be shown to satisfy some accepted criterion which would distinguish community partitions from arbitrary partitions; in the absence of such property, $\fc$ is left with the status of curiosity.

Accordingly, the main idea in the construction of $\fc$, namely enlarging the penalty term to the minimum over \emph{all} edges connected to the community, both internal and external, is motivated by technical richness arguments and not by intuitive community characteristics. Essentially, we observe that $\fb(W,\sigma^s) = \fc(W,\sigma^s) = 0$ for all $W$ for $\sigma^s$ the all-singletons-partition. However, thanks to the modified penalty in $\fc$, with $W_{ij}=1$ for all $i,j \in V$ we have $\fc(\sigma,W) = \sum_{k} (\abs{c_k}-1) (1-\alpha) < 0$ for all $\sigma \neq \sigma_s$, hence the strict optimality of the all-singletons partition in that case.

\subsection*{Invariance to symmetries}

To conclude this Section, let us mention that the value functions $\fa$, $\fb$ and $\fc$ are all independent of node labels, see Remark R2. They hence trivially satisfy permutation invariance as defined in \cite{VanLaarhoven:2014}, i.e.: if $G$ features some isometries then defining $\sigma'$ by applying one of the associated isometries to an initial partition $\sigma$ implies $f(G,\sigma') = f(G,\sigma)$.\\

The following Section experimentally investigates how useful the proposed value function $\fa$ and $\fb$ might be in practical clustering applications.

\section{Experiments}\label{sec:experiments}

\subsection*{Benchmark and methods}

We have investigated the behavior of our value functions by optimizing them for typical graphs generated using the popular LFR benchmark model~\cite{Lancichinetti2008Benchmark}. This planted partitions method first defines ``prior communities'' on a set of nodes, and then generates a graph with edges preferentially inside these prior communities. It further ensures power-law degree and community size distributions consistent with realistic networks. We have considered graphs with $N=1000$ nodes, an average degree $\overline{k}=25$ and community size in $\left[10,100\right]$. Two additional parameters allow tuning the significance of the prior communities and therefore the difficulty of extracting them: $\mu_w$ the expected proportion of the strength of each node connecting it to nodes outside its prior community, and $\mu_t$ the expected proportion of edges (without taking their weight into account) connecting a node to nodes outside its prior community. 

We optimize our values functions $f_A$ and $f_B$ with $R(x) = x^\beta$, for different values of $\beta \in [0,1]$. For this purpose, we use the local optimization algorithm introduced in~\cite{browet2013community}. The latter requires self-loops at intermediate steps, which our value functions accommodate naturally, although our theoretical analysis is made without self-loops.

We evaluate the partitions extracted by optimizing our quality functions in several ways. First, we compute their normalized mutual information (NMI)~\cite{Danon2005} to the ``ground truth'', which we define as being the planted partition drawn by the LFR benchmark model to generate the graph. Recall that this ``ground truth'' is only meaningful when the method wires, on average, more edges inside the communities, i.e. $\mu_t \ll 1$, and assigns significantly larger weight to edges inside communities, i.e. for $\mu_w \leq \mu_t$; strictly speaking it is never 100\% ``true'' except in trivial cases. In fact we are precisely seeking to better \emph{define} communities in (almost all) non-trivial cases, and we are optimizing cost functions locally. So, although the quantitative results do most often agree with intuitive decisions about ``visually reasonable or unreasonable'' partitions, this comparison to ground truth must be taken with the usual grain of salt. Therefore, in complement, we have computed two indicators of stability for the partitions, namely the number of communities found $n_c$ and the sum-of-squares of the sizes of those communities $\sum_k\left|c_k\right|^2$.

\subsection*{Results}

\definecolor{YlGn-3-1}{RGB}{247,252,185}
\definecolor{YlGn-3-C}{RGB}{247,252,185}
\definecolor{YlGn-3-2}{RGB}{173,221,142}
\definecolor{YlGn-3-F}{RGB}{173,221,142}
\definecolor{YlGn-3-3}{RGB}{49,163,84}
\definecolor{YlGn-3-I}{RGB}{49,163,84}
\definecolor{YlGn-4-1}{RGB}{255,255,204}
\definecolor{YlGn-4-B}{RGB}{255,255,204}
\definecolor{YlGn-4-2}{RGB}{194,230,153}
\definecolor{YlGn-4-E}{RGB}{194,230,153}
\definecolor{YlGn-4-3}{RGB}{120,198,121}
\definecolor{YlGn-4-G}{RGB}{120,198,121}
\definecolor{YlGn-4-4}{RGB}{35,132,67}
\definecolor{YlGn-4-J}{RGB}{35,132,67}
\definecolor{YlGn-5-1}{RGB}{255,255,204}
\definecolor{YlGn-5-B}{RGB}{255,255,204}
\definecolor{YlGn-5-2}{RGB}{194,230,153}
\definecolor{YlGn-5-E}{RGB}{194,230,153}
\definecolor{YlGn-5-3}{RGB}{120,198,121}
\definecolor{YlGn-5-G}{RGB}{120,198,121}
\definecolor{YlGn-5-4}{RGB}{49,163,84}
\definecolor{YlGn-5-I}{RGB}{49,163,84}
\definecolor{YlGn-5-5}{RGB}{0,104,55}
\definecolor{YlGn-5-K}{RGB}{0,104,55}
\definecolor{YlGn-6-1}{RGB}{255,255,204}
\definecolor{YlGn-6-B}{RGB}{255,255,204}
\definecolor{YlGn-6-2}{RGB}{217,240,163}
\definecolor{YlGn-6-D}{RGB}{217,240,163}
\definecolor{YlGn-6-3}{RGB}{173,221,142}
\definecolor{YlGn-6-F}{RGB}{173,221,142}
\definecolor{YlGn-6-4}{RGB}{120,198,121}
\definecolor{YlGn-6-G}{RGB}{120,198,121}
\definecolor{YlGn-6-5}{RGB}{49,163,84}
\definecolor{YlGn-6-I}{RGB}{49,163,84}
\definecolor{YlGn-6-6}{RGB}{0,104,55}
\definecolor{YlGn-6-K}{RGB}{0,104,55}
\definecolor{YlGn-7-1}{RGB}{255,255,204}
\definecolor{YlGn-7-B}{RGB}{255,255,204}
\definecolor{YlGn-7-2}{RGB}{217,240,163}
\definecolor{YlGn-7-D}{RGB}{217,240,163}
\definecolor{YlGn-7-3}{RGB}{173,221,142}
\definecolor{YlGn-7-F}{RGB}{173,221,142}
\definecolor{YlGn-7-4}{RGB}{120,198,121}
\definecolor{YlGn-7-G}{RGB}{120,198,121}
\definecolor{YlGn-7-5}{RGB}{65,171,93}
\definecolor{YlGn-7-H}{RGB}{65,171,93}
\definecolor{YlGn-7-6}{RGB}{35,132,67}
\definecolor{YlGn-7-J}{RGB}{35,132,67}
\definecolor{YlGn-7-7}{RGB}{0,90,50}
\definecolor{YlGn-7-L}{RGB}{0,90,50}
\definecolor{YlGn-8-1}{RGB}{255,255,229}
\definecolor{YlGn-8-A}{RGB}{255,255,229}
\definecolor{YlGn-8-2}{RGB}{247,252,185}
\definecolor{YlGn-8-C}{RGB}{247,252,185}
\definecolor{YlGn-8-3}{RGB}{217,240,163}
\definecolor{YlGn-8-D}{RGB}{217,240,163}
\definecolor{YlGn-8-4}{RGB}{173,221,142}
\definecolor{YlGn-8-F}{RGB}{173,221,142}
\definecolor{YlGn-8-5}{RGB}{120,198,121}
\definecolor{YlGn-8-G}{RGB}{120,198,121}
\definecolor{YlGn-8-6}{RGB}{65,171,93}
\definecolor{YlGn-8-H}{RGB}{65,171,93}
\definecolor{YlGn-8-7}{RGB}{35,132,67}
\definecolor{YlGn-8-J}{RGB}{35,132,67}
\definecolor{YlGn-8-8}{RGB}{0,90,50}
\definecolor{YlGn-8-L}{RGB}{0,90,50}
\definecolor{YlGn-9-1}{RGB}{255,255,229}
\definecolor{YlGn-9-A}{RGB}{255,255,229}
\definecolor{YlGn-9-2}{RGB}{247,252,185}
\definecolor{YlGn-9-C}{RGB}{247,252,185}
\definecolor{YlGn-9-3}{RGB}{217,240,163}
\definecolor{YlGn-9-D}{RGB}{217,240,163}
\definecolor{YlGn-9-4}{RGB}{173,221,142}
\definecolor{YlGn-9-F}{RGB}{173,221,142}
\definecolor{YlGn-9-5}{RGB}{120,198,121}
\definecolor{YlGn-9-G}{RGB}{120,198,121}
\definecolor{YlGn-9-6}{RGB}{65,171,93}
\definecolor{YlGn-9-H}{RGB}{65,171,93}
\definecolor{YlGn-9-7}{RGB}{35,132,67}
\definecolor{YlGn-9-J}{RGB}{35,132,67}
\definecolor{YlGn-9-8}{RGB}{0,104,55}
\definecolor{YlGn-9-K}{RGB}{0,104,55}
\definecolor{YlGn-9-9}{RGB}{0,69,41}
\definecolor{YlGn-9-M}{RGB}{0,69,41}
\definecolor{YlGnBu-3-1}{RGB}{237,248,177}
\definecolor{YlGnBu-3-C}{RGB}{237,248,177}
\definecolor{YlGnBu-3-2}{RGB}{127,205,187}
\definecolor{YlGnBu-3-F}{RGB}{127,205,187}
\definecolor{YlGnBu-3-3}{RGB}{44,127,184}
\definecolor{YlGnBu-3-I}{RGB}{44,127,184}
\definecolor{YlGnBu-4-1}{RGB}{255,255,204}
\definecolor{YlGnBu-4-B}{RGB}{255,255,204}
\definecolor{YlGnBu-4-2}{RGB}{161,218,180}
\definecolor{YlGnBu-4-E}{RGB}{161,218,180}
\definecolor{YlGnBu-4-3}{RGB}{65,182,196}
\definecolor{YlGnBu-4-G}{RGB}{65,182,196}
\definecolor{YlGnBu-4-4}{RGB}{34,94,168}
\definecolor{YlGnBu-4-J}{RGB}{34,94,168}
\definecolor{YlGnBu-5-1}{RGB}{255,255,204}
\definecolor{YlGnBu-5-B}{RGB}{255,255,204}
\definecolor{YlGnBu-5-2}{RGB}{161,218,180}
\definecolor{YlGnBu-5-E}{RGB}{161,218,180}
\definecolor{YlGnBu-5-3}{RGB}{65,182,196}
\definecolor{YlGnBu-5-G}{RGB}{65,182,196}
\definecolor{YlGnBu-5-4}{RGB}{44,127,184}
\definecolor{YlGnBu-5-I}{RGB}{44,127,184}
\definecolor{YlGnBu-5-5}{RGB}{37,52,148}
\definecolor{YlGnBu-5-K}{RGB}{37,52,148}
\definecolor{YlGnBu-6-1}{RGB}{255,255,204}
\definecolor{YlGnBu-6-B}{RGB}{255,255,204}
\definecolor{YlGnBu-6-2}{RGB}{199,233,180}
\definecolor{YlGnBu-6-D}{RGB}{199,233,180}
\definecolor{YlGnBu-6-3}{RGB}{127,205,187}
\definecolor{YlGnBu-6-F}{RGB}{127,205,187}
\definecolor{YlGnBu-6-4}{RGB}{65,182,196}
\definecolor{YlGnBu-6-G}{RGB}{65,182,196}
\definecolor{YlGnBu-6-5}{RGB}{44,127,184}
\definecolor{YlGnBu-6-I}{RGB}{44,127,184}
\definecolor{YlGnBu-6-6}{RGB}{37,52,148}
\definecolor{YlGnBu-6-K}{RGB}{37,52,148}
\definecolor{YlGnBu-7-1}{RGB}{255,255,204}
\definecolor{YlGnBu-7-B}{RGB}{255,255,204}
\definecolor{YlGnBu-7-2}{RGB}{199,233,180}
\definecolor{YlGnBu-7-D}{RGB}{199,233,180}
\definecolor{YlGnBu-7-3}{RGB}{127,205,187}
\definecolor{YlGnBu-7-F}{RGB}{127,205,187}
\definecolor{YlGnBu-7-4}{RGB}{65,182,196}
\definecolor{YlGnBu-7-G}{RGB}{65,182,196}
\definecolor{YlGnBu-7-5}{RGB}{29,145,192}
\definecolor{YlGnBu-7-H}{RGB}{29,145,192}
\definecolor{YlGnBu-7-6}{RGB}{34,94,168}
\definecolor{YlGnBu-7-J}{RGB}{34,94,168}
\definecolor{YlGnBu-7-7}{RGB}{12,44,132}
\definecolor{YlGnBu-7-L}{RGB}{12,44,132}
\definecolor{YlGnBu-8-1}{RGB}{255,255,217}
\definecolor{YlGnBu-8-A}{RGB}{255,255,217}
\definecolor{YlGnBu-8-2}{RGB}{237,248,177}
\definecolor{YlGnBu-8-C}{RGB}{237,248,177}
\definecolor{YlGnBu-8-3}{RGB}{199,233,180}
\definecolor{YlGnBu-8-D}{RGB}{199,233,180}
\definecolor{YlGnBu-8-4}{RGB}{127,205,187}
\definecolor{YlGnBu-8-F}{RGB}{127,205,187}
\definecolor{YlGnBu-8-5}{RGB}{65,182,196}
\definecolor{YlGnBu-8-G}{RGB}{65,182,196}
\definecolor{YlGnBu-8-6}{RGB}{29,145,192}
\definecolor{YlGnBu-8-H}{RGB}{29,145,192}
\definecolor{YlGnBu-8-7}{RGB}{34,94,168}
\definecolor{YlGnBu-8-J}{RGB}{34,94,168}
\definecolor{YlGnBu-8-8}{RGB}{12,44,132}
\definecolor{YlGnBu-8-L}{RGB}{12,44,132}
\definecolor{YlGnBu-9-1}{RGB}{255,255,217}
\definecolor{YlGnBu-9-A}{RGB}{255,255,217}
\definecolor{YlGnBu-9-2}{RGB}{237,248,177}
\definecolor{YlGnBu-9-C}{RGB}{237,248,177}
\definecolor{YlGnBu-9-3}{RGB}{199,233,180}
\definecolor{YlGnBu-9-D}{RGB}{199,233,180}
\definecolor{YlGnBu-9-4}{RGB}{127,205,187}
\definecolor{YlGnBu-9-F}{RGB}{127,205,187}
\definecolor{YlGnBu-9-5}{RGB}{65,182,196}
\definecolor{YlGnBu-9-G}{RGB}{65,182,196}
\definecolor{YlGnBu-9-6}{RGB}{29,145,192}
\definecolor{YlGnBu-9-H}{RGB}{29,145,192}
\definecolor{YlGnBu-9-7}{RGB}{34,94,168}
\definecolor{YlGnBu-9-J}{RGB}{34,94,168}
\definecolor{YlGnBu-9-8}{RGB}{37,52,148}
\definecolor{YlGnBu-9-K}{RGB}{37,52,148}
\definecolor{YlGnBu-9-9}{RGB}{8,29,88}
\definecolor{YlGnBu-9-M}{RGB}{8,29,88}
\definecolor{GnBu-3-1}{RGB}{224,243,219}
\definecolor{GnBu-3-C}{RGB}{224,243,219}
\definecolor{GnBu-3-2}{RGB}{168,221,181}
\definecolor{GnBu-3-F}{RGB}{168,221,181}
\definecolor{GnBu-3-3}{RGB}{67,162,202}
\definecolor{GnBu-3-I}{RGB}{67,162,202}
\definecolor{GnBu-4-1}{RGB}{240,249,232}
\definecolor{GnBu-4-B}{RGB}{240,249,232}
\definecolor{GnBu-4-2}{RGB}{186,228,188}
\definecolor{GnBu-4-E}{RGB}{186,228,188}
\definecolor{GnBu-4-3}{RGB}{123,204,196}
\definecolor{GnBu-4-G}{RGB}{123,204,196}
\definecolor{GnBu-4-4}{RGB}{43,140,190}
\definecolor{GnBu-4-J}{RGB}{43,140,190}
\definecolor{GnBu-5-1}{RGB}{240,249,232}
\definecolor{GnBu-5-B}{RGB}{240,249,232}
\definecolor{GnBu-5-2}{RGB}{186,228,188}
\definecolor{GnBu-5-E}{RGB}{186,228,188}
\definecolor{GnBu-5-3}{RGB}{123,204,196}
\definecolor{GnBu-5-G}{RGB}{123,204,196}
\definecolor{GnBu-5-4}{RGB}{67,162,202}
\definecolor{GnBu-5-I}{RGB}{67,162,202}
\definecolor{GnBu-5-5}{RGB}{8,104,172}
\definecolor{GnBu-5-K}{RGB}{8,104,172}
\definecolor{GnBu-6-1}{RGB}{240,249,232}
\definecolor{GnBu-6-B}{RGB}{240,249,232}
\definecolor{GnBu-6-2}{RGB}{204,235,197}
\definecolor{GnBu-6-D}{RGB}{204,235,197}
\definecolor{GnBu-6-3}{RGB}{168,221,181}
\definecolor{GnBu-6-F}{RGB}{168,221,181}
\definecolor{GnBu-6-4}{RGB}{123,204,196}
\definecolor{GnBu-6-G}{RGB}{123,204,196}
\definecolor{GnBu-6-5}{RGB}{67,162,202}
\definecolor{GnBu-6-I}{RGB}{67,162,202}
\definecolor{GnBu-6-6}{RGB}{8,104,172}
\definecolor{GnBu-6-K}{RGB}{8,104,172}
\definecolor{GnBu-7-1}{RGB}{240,249,232}
\definecolor{GnBu-7-B}{RGB}{240,249,232}
\definecolor{GnBu-7-2}{RGB}{204,235,197}
\definecolor{GnBu-7-D}{RGB}{204,235,197}
\definecolor{GnBu-7-3}{RGB}{168,221,181}
\definecolor{GnBu-7-F}{RGB}{168,221,181}
\definecolor{GnBu-7-4}{RGB}{123,204,196}
\definecolor{GnBu-7-G}{RGB}{123,204,196}
\definecolor{GnBu-7-5}{RGB}{78,179,211}
\definecolor{GnBu-7-H}{RGB}{78,179,211}
\definecolor{GnBu-7-6}{RGB}{43,140,190}
\definecolor{GnBu-7-J}{RGB}{43,140,190}
\definecolor{GnBu-7-7}{RGB}{8,88,158}
\definecolor{GnBu-7-L}{RGB}{8,88,158}
\definecolor{GnBu-8-1}{RGB}{247,252,240}
\definecolor{GnBu-8-A}{RGB}{247,252,240}
\definecolor{GnBu-8-2}{RGB}{224,243,219}
\definecolor{GnBu-8-C}{RGB}{224,243,219}
\definecolor{GnBu-8-3}{RGB}{204,235,197}
\definecolor{GnBu-8-D}{RGB}{204,235,197}
\definecolor{GnBu-8-4}{RGB}{168,221,181}
\definecolor{GnBu-8-F}{RGB}{168,221,181}
\definecolor{GnBu-8-5}{RGB}{123,204,196}
\definecolor{GnBu-8-G}{RGB}{123,204,196}
\definecolor{GnBu-8-6}{RGB}{78,179,211}
\definecolor{GnBu-8-H}{RGB}{78,179,211}
\definecolor{GnBu-8-7}{RGB}{43,140,190}
\definecolor{GnBu-8-J}{RGB}{43,140,190}
\definecolor{GnBu-8-8}{RGB}{8,88,158}
\definecolor{GnBu-8-L}{RGB}{8,88,158}
\definecolor{GnBu-9-1}{RGB}{247,252,240}
\definecolor{GnBu-9-A}{RGB}{247,252,240}
\definecolor{GnBu-9-2}{RGB}{224,243,219}
\definecolor{GnBu-9-C}{RGB}{224,243,219}
\definecolor{GnBu-9-3}{RGB}{204,235,197}
\definecolor{GnBu-9-D}{RGB}{204,235,197}
\definecolor{GnBu-9-4}{RGB}{168,221,181}
\definecolor{GnBu-9-F}{RGB}{168,221,181}
\definecolor{GnBu-9-5}{RGB}{123,204,196}
\definecolor{GnBu-9-G}{RGB}{123,204,196}
\definecolor{GnBu-9-6}{RGB}{78,179,211}
\definecolor{GnBu-9-H}{RGB}{78,179,211}
\definecolor{GnBu-9-7}{RGB}{43,140,190}
\definecolor{GnBu-9-J}{RGB}{43,140,190}
\definecolor{GnBu-9-8}{RGB}{8,104,172}
\definecolor{GnBu-9-K}{RGB}{8,104,172}
\definecolor{GnBu-9-9}{RGB}{8,64,129}
\definecolor{GnBu-9-M}{RGB}{8,64,129}
\definecolor{BuGn-3-1}{RGB}{229,245,249}
\definecolor{BuGn-3-C}{RGB}{229,245,249}
\definecolor{BuGn-3-2}{RGB}{153,216,201}
\definecolor{BuGn-3-F}{RGB}{153,216,201}
\definecolor{BuGn-3-3}{RGB}{44,162,95}
\definecolor{BuGn-3-I}{RGB}{44,162,95}
\definecolor{BuGn-4-1}{RGB}{237,248,251}
\definecolor{BuGn-4-B}{RGB}{237,248,251}
\definecolor{BuGn-4-2}{RGB}{178,226,226}
\definecolor{BuGn-4-E}{RGB}{178,226,226}
\definecolor{BuGn-4-3}{RGB}{102,194,164}
\definecolor{BuGn-4-G}{RGB}{102,194,164}
\definecolor{BuGn-4-4}{RGB}{35,139,69}
\definecolor{BuGn-4-J}{RGB}{35,139,69}
\definecolor{BuGn-5-1}{RGB}{237,248,251}
\definecolor{BuGn-5-B}{RGB}{237,248,251}
\definecolor{BuGn-5-2}{RGB}{178,226,226}
\definecolor{BuGn-5-E}{RGB}{178,226,226}
\definecolor{BuGn-5-3}{RGB}{102,194,164}
\definecolor{BuGn-5-G}{RGB}{102,194,164}
\definecolor{BuGn-5-4}{RGB}{44,162,95}
\definecolor{BuGn-5-I}{RGB}{44,162,95}
\definecolor{BuGn-5-5}{RGB}{0,109,44}
\definecolor{BuGn-5-K}{RGB}{0,109,44}
\definecolor{BuGn-6-1}{RGB}{237,248,251}
\definecolor{BuGn-6-B}{RGB}{237,248,251}
\definecolor{BuGn-6-2}{RGB}{204,236,230}
\definecolor{BuGn-6-D}{RGB}{204,236,230}
\definecolor{BuGn-6-3}{RGB}{153,216,201}
\definecolor{BuGn-6-F}{RGB}{153,216,201}
\definecolor{BuGn-6-4}{RGB}{102,194,164}
\definecolor{BuGn-6-G}{RGB}{102,194,164}
\definecolor{BuGn-6-5}{RGB}{44,162,95}
\definecolor{BuGn-6-I}{RGB}{44,162,95}
\definecolor{BuGn-6-6}{RGB}{0,109,44}
\definecolor{BuGn-6-K}{RGB}{0,109,44}
\definecolor{BuGn-7-1}{RGB}{237,248,251}
\definecolor{BuGn-7-B}{RGB}{237,248,251}
\definecolor{BuGn-7-2}{RGB}{204,236,230}
\definecolor{BuGn-7-D}{RGB}{204,236,230}
\definecolor{BuGn-7-3}{RGB}{153,216,201}
\definecolor{BuGn-7-F}{RGB}{153,216,201}
\definecolor{BuGn-7-4}{RGB}{102,194,164}
\definecolor{BuGn-7-G}{RGB}{102,194,164}
\definecolor{BuGn-7-5}{RGB}{65,174,118}
\definecolor{BuGn-7-H}{RGB}{65,174,118}
\definecolor{BuGn-7-6}{RGB}{35,139,69}
\definecolor{BuGn-7-J}{RGB}{35,139,69}
\definecolor{BuGn-7-7}{RGB}{0,88,36}
\definecolor{BuGn-7-L}{RGB}{0,88,36}
\definecolor{BuGn-8-1}{RGB}{247,252,253}
\definecolor{BuGn-8-A}{RGB}{247,252,253}
\definecolor{BuGn-8-2}{RGB}{229,245,249}
\definecolor{BuGn-8-C}{RGB}{229,245,249}
\definecolor{BuGn-8-3}{RGB}{204,236,230}
\definecolor{BuGn-8-D}{RGB}{204,236,230}
\definecolor{BuGn-8-4}{RGB}{153,216,201}
\definecolor{BuGn-8-F}{RGB}{153,216,201}
\definecolor{BuGn-8-5}{RGB}{102,194,164}
\definecolor{BuGn-8-G}{RGB}{102,194,164}
\definecolor{BuGn-8-6}{RGB}{65,174,118}
\definecolor{BuGn-8-H}{RGB}{65,174,118}
\definecolor{BuGn-8-7}{RGB}{35,139,69}
\definecolor{BuGn-8-J}{RGB}{35,139,69}
\definecolor{BuGn-8-8}{RGB}{0,88,36}
\definecolor{BuGn-8-L}{RGB}{0,88,36}
\definecolor{BuGn-9-1}{RGB}{247,252,253}
\definecolor{BuGn-9-A}{RGB}{247,252,253}
\definecolor{BuGn-9-2}{RGB}{229,245,249}
\definecolor{BuGn-9-C}{RGB}{229,245,249}
\definecolor{BuGn-9-3}{RGB}{204,236,230}
\definecolor{BuGn-9-D}{RGB}{204,236,230}
\definecolor{BuGn-9-4}{RGB}{153,216,201}
\definecolor{BuGn-9-F}{RGB}{153,216,201}
\definecolor{BuGn-9-5}{RGB}{102,194,164}
\definecolor{BuGn-9-G}{RGB}{102,194,164}
\definecolor{BuGn-9-6}{RGB}{65,174,118}
\definecolor{BuGn-9-H}{RGB}{65,174,118}
\definecolor{BuGn-9-7}{RGB}{35,139,69}
\definecolor{BuGn-9-J}{RGB}{35,139,69}
\definecolor{BuGn-9-8}{RGB}{0,109,44}
\definecolor{BuGn-9-K}{RGB}{0,109,44}
\definecolor{BuGn-9-9}{RGB}{0,68,27}
\definecolor{BuGn-9-M}{RGB}{0,68,27}
\definecolor{PuBuGn-3-1}{RGB}{236,226,240}
\definecolor{PuBuGn-3-C}{RGB}{236,226,240}
\definecolor{PuBuGn-3-2}{RGB}{166,189,219}
\definecolor{PuBuGn-3-F}{RGB}{166,189,219}
\definecolor{PuBuGn-3-3}{RGB}{28,144,153}
\definecolor{PuBuGn-3-I}{RGB}{28,144,153}
\definecolor{PuBuGn-4-1}{RGB}{246,239,247}
\definecolor{PuBuGn-4-B}{RGB}{246,239,247}
\definecolor{PuBuGn-4-2}{RGB}{189,201,225}
\definecolor{PuBuGn-4-E}{RGB}{189,201,225}
\definecolor{PuBuGn-4-3}{RGB}{103,169,207}
\definecolor{PuBuGn-4-G}{RGB}{103,169,207}
\definecolor{PuBuGn-4-4}{RGB}{2,129,138}
\definecolor{PuBuGn-4-J}{RGB}{2,129,138}
\definecolor{PuBuGn-5-1}{RGB}{246,239,247}
\definecolor{PuBuGn-5-B}{RGB}{246,239,247}
\definecolor{PuBuGn-5-2}{RGB}{189,201,225}
\definecolor{PuBuGn-5-E}{RGB}{189,201,225}
\definecolor{PuBuGn-5-3}{RGB}{103,169,207}
\definecolor{PuBuGn-5-G}{RGB}{103,169,207}
\definecolor{PuBuGn-5-4}{RGB}{28,144,153}
\definecolor{PuBuGn-5-I}{RGB}{28,144,153}
\definecolor{PuBuGn-5-5}{RGB}{1,108,89}
\definecolor{PuBuGn-5-K}{RGB}{1,108,89}
\definecolor{PuBuGn-6-1}{RGB}{246,239,247}
\definecolor{PuBuGn-6-B}{RGB}{246,239,247}
\definecolor{PuBuGn-6-2}{RGB}{208,209,230}
\definecolor{PuBuGn-6-D}{RGB}{208,209,230}
\definecolor{PuBuGn-6-3}{RGB}{166,189,219}
\definecolor{PuBuGn-6-F}{RGB}{166,189,219}
\definecolor{PuBuGn-6-4}{RGB}{103,169,207}
\definecolor{PuBuGn-6-G}{RGB}{103,169,207}
\definecolor{PuBuGn-6-5}{RGB}{28,144,153}
\definecolor{PuBuGn-6-I}{RGB}{28,144,153}
\definecolor{PuBuGn-6-6}{RGB}{1,108,89}
\definecolor{PuBuGn-6-K}{RGB}{1,108,89}
\definecolor{PuBuGn-7-1}{RGB}{246,239,247}
\definecolor{PuBuGn-7-B}{RGB}{246,239,247}
\definecolor{PuBuGn-7-2}{RGB}{208,209,230}
\definecolor{PuBuGn-7-D}{RGB}{208,209,230}
\definecolor{PuBuGn-7-3}{RGB}{166,189,219}
\definecolor{PuBuGn-7-F}{RGB}{166,189,219}
\definecolor{PuBuGn-7-4}{RGB}{103,169,207}
\definecolor{PuBuGn-7-G}{RGB}{103,169,207}
\definecolor{PuBuGn-7-5}{RGB}{54,144,192}
\definecolor{PuBuGn-7-H}{RGB}{54,144,192}
\definecolor{PuBuGn-7-6}{RGB}{2,129,138}
\definecolor{PuBuGn-7-J}{RGB}{2,129,138}
\definecolor{PuBuGn-7-7}{RGB}{1,100,80}
\definecolor{PuBuGn-7-L}{RGB}{1,100,80}
\definecolor{PuBuGn-8-1}{RGB}{255,247,251}
\definecolor{PuBuGn-8-A}{RGB}{255,247,251}
\definecolor{PuBuGn-8-2}{RGB}{236,226,240}
\definecolor{PuBuGn-8-C}{RGB}{236,226,240}
\definecolor{PuBuGn-8-3}{RGB}{208,209,230}
\definecolor{PuBuGn-8-D}{RGB}{208,209,230}
\definecolor{PuBuGn-8-4}{RGB}{166,189,219}
\definecolor{PuBuGn-8-F}{RGB}{166,189,219}
\definecolor{PuBuGn-8-5}{RGB}{103,169,207}
\definecolor{PuBuGn-8-G}{RGB}{103,169,207}
\definecolor{PuBuGn-8-6}{RGB}{54,144,192}
\definecolor{PuBuGn-8-H}{RGB}{54,144,192}
\definecolor{PuBuGn-8-7}{RGB}{2,129,138}
\definecolor{PuBuGn-8-J}{RGB}{2,129,138}
\definecolor{PuBuGn-8-8}{RGB}{1,100,80}
\definecolor{PuBuGn-8-L}{RGB}{1,100,80}
\definecolor{PuBuGn-9-1}{RGB}{255,247,251}
\definecolor{PuBuGn-9-A}{RGB}{255,247,251}
\definecolor{PuBuGn-9-2}{RGB}{236,226,240}
\definecolor{PuBuGn-9-C}{RGB}{236,226,240}
\definecolor{PuBuGn-9-3}{RGB}{208,209,230}
\definecolor{PuBuGn-9-D}{RGB}{208,209,230}
\definecolor{PuBuGn-9-4}{RGB}{166,189,219}
\definecolor{PuBuGn-9-F}{RGB}{166,189,219}
\definecolor{PuBuGn-9-5}{RGB}{103,169,207}
\definecolor{PuBuGn-9-G}{RGB}{103,169,207}
\definecolor{PuBuGn-9-6}{RGB}{54,144,192}
\definecolor{PuBuGn-9-H}{RGB}{54,144,192}
\definecolor{PuBuGn-9-7}{RGB}{2,129,138}
\definecolor{PuBuGn-9-J}{RGB}{2,129,138}
\definecolor{PuBuGn-9-8}{RGB}{1,108,89}
\definecolor{PuBuGn-9-K}{RGB}{1,108,89}
\definecolor{PuBuGn-9-9}{RGB}{1,70,54}
\definecolor{PuBuGn-9-M}{RGB}{1,70,54}
\definecolor{PuBu-3-1}{RGB}{236,231,242}
\definecolor{PuBu-3-C}{RGB}{236,231,242}
\definecolor{PuBu-3-2}{RGB}{166,189,219}
\definecolor{PuBu-3-F}{RGB}{166,189,219}
\definecolor{PuBu-3-3}{RGB}{43,140,190}
\definecolor{PuBu-3-I}{RGB}{43,140,190}
\definecolor{PuBu-4-1}{RGB}{241,238,246}
\definecolor{PuBu-4-B}{RGB}{241,238,246}
\definecolor{PuBu-4-2}{RGB}{189,201,225}
\definecolor{PuBu-4-E}{RGB}{189,201,225}
\definecolor{PuBu-4-3}{RGB}{116,169,207}
\definecolor{PuBu-4-G}{RGB}{116,169,207}
\definecolor{PuBu-4-4}{RGB}{5,112,176}
\definecolor{PuBu-4-J}{RGB}{5,112,176}
\definecolor{PuBu-5-1}{RGB}{241,238,246}
\definecolor{PuBu-5-B}{RGB}{241,238,246}
\definecolor{PuBu-5-2}{RGB}{189,201,225}
\definecolor{PuBu-5-E}{RGB}{189,201,225}
\definecolor{PuBu-5-3}{RGB}{116,169,207}
\definecolor{PuBu-5-G}{RGB}{116,169,207}
\definecolor{PuBu-5-4}{RGB}{43,140,190}
\definecolor{PuBu-5-I}{RGB}{43,140,190}
\definecolor{PuBu-5-5}{RGB}{4,90,141}
\definecolor{PuBu-5-K}{RGB}{4,90,141}
\definecolor{PuBu-6-1}{RGB}{241,238,246}
\definecolor{PuBu-6-B}{RGB}{241,238,246}
\definecolor{PuBu-6-2}{RGB}{208,209,230}
\definecolor{PuBu-6-D}{RGB}{208,209,230}
\definecolor{PuBu-6-3}{RGB}{166,189,219}
\definecolor{PuBu-6-F}{RGB}{166,189,219}
\definecolor{PuBu-6-4}{RGB}{116,169,207}
\definecolor{PuBu-6-G}{RGB}{116,169,207}
\definecolor{PuBu-6-5}{RGB}{43,140,190}
\definecolor{PuBu-6-I}{RGB}{43,140,190}
\definecolor{PuBu-6-6}{RGB}{4,90,141}
\definecolor{PuBu-6-K}{RGB}{4,90,141}
\definecolor{PuBu-7-1}{RGB}{241,238,246}
\definecolor{PuBu-7-B}{RGB}{241,238,246}
\definecolor{PuBu-7-2}{RGB}{208,209,230}
\definecolor{PuBu-7-D}{RGB}{208,209,230}
\definecolor{PuBu-7-3}{RGB}{166,189,219}
\definecolor{PuBu-7-F}{RGB}{166,189,219}
\definecolor{PuBu-7-4}{RGB}{116,169,207}
\definecolor{PuBu-7-G}{RGB}{116,169,207}
\definecolor{PuBu-7-5}{RGB}{54,144,192}
\definecolor{PuBu-7-H}{RGB}{54,144,192}
\definecolor{PuBu-7-6}{RGB}{5,112,176}
\definecolor{PuBu-7-J}{RGB}{5,112,176}
\definecolor{PuBu-7-7}{RGB}{3,78,123}
\definecolor{PuBu-7-L}{RGB}{3,78,123}
\definecolor{PuBu-8-1}{RGB}{255,247,251}
\definecolor{PuBu-8-A}{RGB}{255,247,251}
\definecolor{PuBu-8-2}{RGB}{236,231,242}
\definecolor{PuBu-8-C}{RGB}{236,231,242}
\definecolor{PuBu-8-3}{RGB}{208,209,230}
\definecolor{PuBu-8-D}{RGB}{208,209,230}
\definecolor{PuBu-8-4}{RGB}{166,189,219}
\definecolor{PuBu-8-F}{RGB}{166,189,219}
\definecolor{PuBu-8-5}{RGB}{116,169,207}
\definecolor{PuBu-8-G}{RGB}{116,169,207}
\definecolor{PuBu-8-6}{RGB}{54,144,192}
\definecolor{PuBu-8-H}{RGB}{54,144,192}
\definecolor{PuBu-8-7}{RGB}{5,112,176}
\definecolor{PuBu-8-J}{RGB}{5,112,176}
\definecolor{PuBu-8-8}{RGB}{3,78,123}
\definecolor{PuBu-8-L}{RGB}{3,78,123}
\definecolor{PuBu-9-1}{RGB}{255,247,251}
\definecolor{PuBu-9-A}{RGB}{255,247,251}
\definecolor{PuBu-9-2}{RGB}{236,231,242}
\definecolor{PuBu-9-C}{RGB}{236,231,242}
\definecolor{PuBu-9-3}{RGB}{208,209,230}
\definecolor{PuBu-9-D}{RGB}{208,209,230}
\definecolor{PuBu-9-4}{RGB}{166,189,219}
\definecolor{PuBu-9-F}{RGB}{166,189,219}
\definecolor{PuBu-9-5}{RGB}{116,169,207}
\definecolor{PuBu-9-G}{RGB}{116,169,207}
\definecolor{PuBu-9-6}{RGB}{54,144,192}
\definecolor{PuBu-9-H}{RGB}{54,144,192}
\definecolor{PuBu-9-7}{RGB}{5,112,176}
\definecolor{PuBu-9-J}{RGB}{5,112,176}
\definecolor{PuBu-9-8}{RGB}{4,90,141}
\definecolor{PuBu-9-K}{RGB}{4,90,141}
\definecolor{PuBu-9-9}{RGB}{2,56,88}
\definecolor{PuBu-9-M}{RGB}{2,56,88}
\definecolor{BuPu-3-1}{RGB}{224,236,244}
\definecolor{BuPu-3-C}{RGB}{224,236,244}
\definecolor{BuPu-3-2}{RGB}{158,188,218}
\definecolor{BuPu-3-F}{RGB}{158,188,218}
\definecolor{BuPu-3-3}{RGB}{136,86,167}
\definecolor{BuPu-3-I}{RGB}{136,86,167}
\definecolor{BuPu-4-1}{RGB}{237,248,251}
\definecolor{BuPu-4-B}{RGB}{237,248,251}
\definecolor{BuPu-4-2}{RGB}{179,205,227}
\definecolor{BuPu-4-E}{RGB}{179,205,227}
\definecolor{BuPu-4-3}{RGB}{140,150,198}
\definecolor{BuPu-4-G}{RGB}{140,150,198}
\definecolor{BuPu-4-4}{RGB}{136,65,157}
\definecolor{BuPu-4-J}{RGB}{136,65,157}
\definecolor{BuPu-5-1}{RGB}{237,248,251}
\definecolor{BuPu-5-B}{RGB}{237,248,251}
\definecolor{BuPu-5-2}{RGB}{179,205,227}
\definecolor{BuPu-5-E}{RGB}{179,205,227}
\definecolor{BuPu-5-3}{RGB}{140,150,198}
\definecolor{BuPu-5-G}{RGB}{140,150,198}
\definecolor{BuPu-5-4}{RGB}{136,86,167}
\definecolor{BuPu-5-I}{RGB}{136,86,167}
\definecolor{BuPu-5-5}{RGB}{129,15,124}
\definecolor{BuPu-5-K}{RGB}{129,15,124}
\definecolor{BuPu-6-1}{RGB}{237,248,251}
\definecolor{BuPu-6-B}{RGB}{237,248,251}
\definecolor{BuPu-6-2}{RGB}{191,211,230}
\definecolor{BuPu-6-D}{RGB}{191,211,230}
\definecolor{BuPu-6-3}{RGB}{158,188,218}
\definecolor{BuPu-6-F}{RGB}{158,188,218}
\definecolor{BuPu-6-4}{RGB}{140,150,198}
\definecolor{BuPu-6-G}{RGB}{140,150,198}
\definecolor{BuPu-6-5}{RGB}{136,86,167}
\definecolor{BuPu-6-I}{RGB}{136,86,167}
\definecolor{BuPu-6-6}{RGB}{129,15,124}
\definecolor{BuPu-6-K}{RGB}{129,15,124}
\definecolor{BuPu-7-1}{RGB}{237,248,251}
\definecolor{BuPu-7-B}{RGB}{237,248,251}
\definecolor{BuPu-7-2}{RGB}{191,211,230}
\definecolor{BuPu-7-D}{RGB}{191,211,230}
\definecolor{BuPu-7-3}{RGB}{158,188,218}
\definecolor{BuPu-7-F}{RGB}{158,188,218}
\definecolor{BuPu-7-4}{RGB}{140,150,198}
\definecolor{BuPu-7-G}{RGB}{140,150,198}
\definecolor{BuPu-7-5}{RGB}{140,107,177}
\definecolor{BuPu-7-H}{RGB}{140,107,177}
\definecolor{BuPu-7-6}{RGB}{136,65,157}
\definecolor{BuPu-7-J}{RGB}{136,65,157}
\definecolor{BuPu-7-7}{RGB}{110,1,107}
\definecolor{BuPu-7-L}{RGB}{110,1,107}
\definecolor{BuPu-8-1}{RGB}{247,252,253}
\definecolor{BuPu-8-A}{RGB}{247,252,253}
\definecolor{BuPu-8-2}{RGB}{224,236,244}
\definecolor{BuPu-8-C}{RGB}{224,236,244}
\definecolor{BuPu-8-3}{RGB}{191,211,230}
\definecolor{BuPu-8-D}{RGB}{191,211,230}
\definecolor{BuPu-8-4}{RGB}{158,188,218}
\definecolor{BuPu-8-F}{RGB}{158,188,218}
\definecolor{BuPu-8-5}{RGB}{140,150,198}
\definecolor{BuPu-8-G}{RGB}{140,150,198}
\definecolor{BuPu-8-6}{RGB}{140,107,177}
\definecolor{BuPu-8-H}{RGB}{140,107,177}
\definecolor{BuPu-8-7}{RGB}{136,65,157}
\definecolor{BuPu-8-J}{RGB}{136,65,157}
\definecolor{BuPu-8-8}{RGB}{110,1,107}
\definecolor{BuPu-8-L}{RGB}{110,1,107}
\definecolor{BuPu-9-1}{RGB}{247,252,253}
\definecolor{BuPu-9-A}{RGB}{247,252,253}
\definecolor{BuPu-9-2}{RGB}{224,236,244}
\definecolor{BuPu-9-C}{RGB}{224,236,244}
\definecolor{BuPu-9-3}{RGB}{191,211,230}
\definecolor{BuPu-9-D}{RGB}{191,211,230}
\definecolor{BuPu-9-4}{RGB}{158,188,218}
\definecolor{BuPu-9-F}{RGB}{158,188,218}
\definecolor{BuPu-9-5}{RGB}{140,150,198}
\definecolor{BuPu-9-G}{RGB}{140,150,198}
\definecolor{BuPu-9-6}{RGB}{140,107,177}
\definecolor{BuPu-9-H}{RGB}{140,107,177}
\definecolor{BuPu-9-7}{RGB}{136,65,157}
\definecolor{BuPu-9-J}{RGB}{136,65,157}
\definecolor{BuPu-9-8}{RGB}{129,15,124}
\definecolor{BuPu-9-K}{RGB}{129,15,124}
\definecolor{BuPu-9-9}{RGB}{77,0,75}
\definecolor{BuPu-9-M}{RGB}{77,0,75}
\definecolor{RdPu-3-1}{RGB}{253,224,221}
\definecolor{RdPu-3-C}{RGB}{253,224,221}
\definecolor{RdPu-3-2}{RGB}{250,159,181}
\definecolor{RdPu-3-F}{RGB}{250,159,181}
\definecolor{RdPu-3-3}{RGB}{197,27,138}
\definecolor{RdPu-3-I}{RGB}{197,27,138}
\definecolor{RdPu-4-1}{RGB}{254,235,226}
\definecolor{RdPu-4-B}{RGB}{254,235,226}
\definecolor{RdPu-4-2}{RGB}{251,180,185}
\definecolor{RdPu-4-E}{RGB}{251,180,185}
\definecolor{RdPu-4-3}{RGB}{247,104,161}
\definecolor{RdPu-4-G}{RGB}{247,104,161}
\definecolor{RdPu-4-4}{RGB}{174,1,126}
\definecolor{RdPu-4-J}{RGB}{174,1,126}
\definecolor{RdPu-5-1}{RGB}{254,235,226}
\definecolor{RdPu-5-B}{RGB}{254,235,226}
\definecolor{RdPu-5-2}{RGB}{251,180,185}
\definecolor{RdPu-5-E}{RGB}{251,180,185}
\definecolor{RdPu-5-3}{RGB}{247,104,161}
\definecolor{RdPu-5-G}{RGB}{247,104,161}
\definecolor{RdPu-5-4}{RGB}{197,27,138}
\definecolor{RdPu-5-I}{RGB}{197,27,138}
\definecolor{RdPu-5-5}{RGB}{122,1,119}
\definecolor{RdPu-5-K}{RGB}{122,1,119}
\definecolor{RdPu-6-1}{RGB}{254,235,226}
\definecolor{RdPu-6-B}{RGB}{254,235,226}
\definecolor{RdPu-6-2}{RGB}{252,197,192}
\definecolor{RdPu-6-D}{RGB}{252,197,192}
\definecolor{RdPu-6-3}{RGB}{250,159,181}
\definecolor{RdPu-6-F}{RGB}{250,159,181}
\definecolor{RdPu-6-4}{RGB}{247,104,161}
\definecolor{RdPu-6-G}{RGB}{247,104,161}
\definecolor{RdPu-6-5}{RGB}{197,27,138}
\definecolor{RdPu-6-I}{RGB}{197,27,138}
\definecolor{RdPu-6-6}{RGB}{122,1,119}
\definecolor{RdPu-6-K}{RGB}{122,1,119}
\definecolor{RdPu-7-1}{RGB}{254,235,226}
\definecolor{RdPu-7-B}{RGB}{254,235,226}
\definecolor{RdPu-7-2}{RGB}{252,197,192}
\definecolor{RdPu-7-D}{RGB}{252,197,192}
\definecolor{RdPu-7-3}{RGB}{250,159,181}
\definecolor{RdPu-7-F}{RGB}{250,159,181}
\definecolor{RdPu-7-4}{RGB}{247,104,161}
\definecolor{RdPu-7-G}{RGB}{247,104,161}
\definecolor{RdPu-7-5}{RGB}{221,52,151}
\definecolor{RdPu-7-H}{RGB}{221,52,151}
\definecolor{RdPu-7-6}{RGB}{174,1,126}
\definecolor{RdPu-7-J}{RGB}{174,1,126}
\definecolor{RdPu-7-7}{RGB}{122,1,119}
\definecolor{RdPu-7-L}{RGB}{122,1,119}
\definecolor{RdPu-8-1}{RGB}{255,247,243}
\definecolor{RdPu-8-A}{RGB}{255,247,243}
\definecolor{RdPu-8-2}{RGB}{253,224,221}
\definecolor{RdPu-8-C}{RGB}{253,224,221}
\definecolor{RdPu-8-3}{RGB}{252,197,192}
\definecolor{RdPu-8-D}{RGB}{252,197,192}
\definecolor{RdPu-8-4}{RGB}{250,159,181}
\definecolor{RdPu-8-F}{RGB}{250,159,181}
\definecolor{RdPu-8-5}{RGB}{247,104,161}
\definecolor{RdPu-8-G}{RGB}{247,104,161}
\definecolor{RdPu-8-6}{RGB}{221,52,151}
\definecolor{RdPu-8-H}{RGB}{221,52,151}
\definecolor{RdPu-8-7}{RGB}{174,1,126}
\definecolor{RdPu-8-J}{RGB}{174,1,126}
\definecolor{RdPu-8-8}{RGB}{122,1,119}
\definecolor{RdPu-8-L}{RGB}{122,1,119}
\definecolor{RdPu-9-1}{RGB}{255,247,243}
\definecolor{RdPu-9-A}{RGB}{255,247,243}
\definecolor{RdPu-9-2}{RGB}{253,224,221}
\definecolor{RdPu-9-C}{RGB}{253,224,221}
\definecolor{RdPu-9-3}{RGB}{252,197,192}
\definecolor{RdPu-9-D}{RGB}{252,197,192}
\definecolor{RdPu-9-4}{RGB}{250,159,181}
\definecolor{RdPu-9-F}{RGB}{250,159,181}
\definecolor{RdPu-9-5}{RGB}{247,104,161}
\definecolor{RdPu-9-G}{RGB}{247,104,161}
\definecolor{RdPu-9-6}{RGB}{221,52,151}
\definecolor{RdPu-9-H}{RGB}{221,52,151}
\definecolor{RdPu-9-7}{RGB}{174,1,126}
\definecolor{RdPu-9-J}{RGB}{174,1,126}
\definecolor{RdPu-9-8}{RGB}{122,1,119}
\definecolor{RdPu-9-K}{RGB}{122,1,119}
\definecolor{RdPu-9-9}{RGB}{73,0,106}
\definecolor{RdPu-9-M}{RGB}{73,0,106}
\definecolor{PuRd-3-1}{RGB}{231,225,239}
\definecolor{PuRd-3-C}{RGB}{231,225,239}
\definecolor{PuRd-3-2}{RGB}{201,148,199}
\definecolor{PuRd-3-F}{RGB}{201,148,199}
\definecolor{PuRd-3-3}{RGB}{221,28,119}
\definecolor{PuRd-3-I}{RGB}{221,28,119}
\definecolor{PuRd-4-1}{RGB}{241,238,246}
\definecolor{PuRd-4-B}{RGB}{241,238,246}
\definecolor{PuRd-4-2}{RGB}{215,181,216}
\definecolor{PuRd-4-E}{RGB}{215,181,216}
\definecolor{PuRd-4-3}{RGB}{223,101,176}
\definecolor{PuRd-4-G}{RGB}{223,101,176}
\definecolor{PuRd-4-4}{RGB}{206,18,86}
\definecolor{PuRd-4-J}{RGB}{206,18,86}
\definecolor{PuRd-5-1}{RGB}{241,238,246}
\definecolor{PuRd-5-B}{RGB}{241,238,246}
\definecolor{PuRd-5-2}{RGB}{215,181,216}
\definecolor{PuRd-5-E}{RGB}{215,181,216}
\definecolor{PuRd-5-3}{RGB}{223,101,176}
\definecolor{PuRd-5-G}{RGB}{223,101,176}
\definecolor{PuRd-5-4}{RGB}{221,28,119}
\definecolor{PuRd-5-I}{RGB}{221,28,119}
\definecolor{PuRd-5-5}{RGB}{152,0,67}
\definecolor{PuRd-5-K}{RGB}{152,0,67}
\definecolor{PuRd-6-1}{RGB}{241,238,246}
\definecolor{PuRd-6-B}{RGB}{241,238,246}
\definecolor{PuRd-6-2}{RGB}{212,185,218}
\definecolor{PuRd-6-D}{RGB}{212,185,218}
\definecolor{PuRd-6-3}{RGB}{201,148,199}
\definecolor{PuRd-6-F}{RGB}{201,148,199}
\definecolor{PuRd-6-4}{RGB}{223,101,176}
\definecolor{PuRd-6-G}{RGB}{223,101,176}
\definecolor{PuRd-6-5}{RGB}{221,28,119}
\definecolor{PuRd-6-I}{RGB}{221,28,119}
\definecolor{PuRd-6-6}{RGB}{152,0,67}
\definecolor{PuRd-6-K}{RGB}{152,0,67}
\definecolor{PuRd-7-1}{RGB}{241,238,246}
\definecolor{PuRd-7-B}{RGB}{241,238,246}
\definecolor{PuRd-7-2}{RGB}{212,185,218}
\definecolor{PuRd-7-D}{RGB}{212,185,218}
\definecolor{PuRd-7-3}{RGB}{201,148,199}
\definecolor{PuRd-7-F}{RGB}{201,148,199}
\definecolor{PuRd-7-4}{RGB}{223,101,176}
\definecolor{PuRd-7-G}{RGB}{223,101,176}
\definecolor{PuRd-7-5}{RGB}{231,41,138}
\definecolor{PuRd-7-H}{RGB}{231,41,138}
\definecolor{PuRd-7-6}{RGB}{206,18,86}
\definecolor{PuRd-7-J}{RGB}{206,18,86}
\definecolor{PuRd-7-7}{RGB}{145,0,63}
\definecolor{PuRd-7-L}{RGB}{145,0,63}
\definecolor{PuRd-8-1}{RGB}{247,244,249}
\definecolor{PuRd-8-A}{RGB}{247,244,249}
\definecolor{PuRd-8-2}{RGB}{231,225,239}
\definecolor{PuRd-8-C}{RGB}{231,225,239}
\definecolor{PuRd-8-3}{RGB}{212,185,218}
\definecolor{PuRd-8-D}{RGB}{212,185,218}
\definecolor{PuRd-8-4}{RGB}{201,148,199}
\definecolor{PuRd-8-F}{RGB}{201,148,199}
\definecolor{PuRd-8-5}{RGB}{223,101,176}
\definecolor{PuRd-8-G}{RGB}{223,101,176}
\definecolor{PuRd-8-6}{RGB}{231,41,138}
\definecolor{PuRd-8-H}{RGB}{231,41,138}
\definecolor{PuRd-8-7}{RGB}{206,18,86}
\definecolor{PuRd-8-J}{RGB}{206,18,86}
\definecolor{PuRd-8-8}{RGB}{145,0,63}
\definecolor{PuRd-8-L}{RGB}{145,0,63}
\definecolor{PuRd-9-1}{RGB}{247,244,249}
\definecolor{PuRd-9-A}{RGB}{247,244,249}
\definecolor{PuRd-9-2}{RGB}{231,225,239}
\definecolor{PuRd-9-C}{RGB}{231,225,239}
\definecolor{PuRd-9-3}{RGB}{212,185,218}
\definecolor{PuRd-9-D}{RGB}{212,185,218}
\definecolor{PuRd-9-4}{RGB}{201,148,199}
\definecolor{PuRd-9-F}{RGB}{201,148,199}
\definecolor{PuRd-9-5}{RGB}{223,101,176}
\definecolor{PuRd-9-G}{RGB}{223,101,176}
\definecolor{PuRd-9-6}{RGB}{231,41,138}
\definecolor{PuRd-9-H}{RGB}{231,41,138}
\definecolor{PuRd-9-7}{RGB}{206,18,86}
\definecolor{PuRd-9-J}{RGB}{206,18,86}
\definecolor{PuRd-9-8}{RGB}{152,0,67}
\definecolor{PuRd-9-K}{RGB}{152,0,67}
\definecolor{PuRd-9-9}{RGB}{103,0,31}
\definecolor{PuRd-9-M}{RGB}{103,0,31}
\definecolor{OrRd-3-1}{RGB}{254,232,200}
\definecolor{OrRd-3-C}{RGB}{254,232,200}
\definecolor{OrRd-3-2}{RGB}{253,187,132}
\definecolor{OrRd-3-F}{RGB}{253,187,132}
\definecolor{OrRd-3-3}{RGB}{227,74,51}
\definecolor{OrRd-3-I}{RGB}{227,74,51}
\definecolor{OrRd-4-1}{RGB}{254,240,217}
\definecolor{OrRd-4-B}{RGB}{254,240,217}
\definecolor{OrRd-4-2}{RGB}{253,204,138}
\definecolor{OrRd-4-E}{RGB}{253,204,138}
\definecolor{OrRd-4-3}{RGB}{252,141,89}
\definecolor{OrRd-4-G}{RGB}{252,141,89}
\definecolor{OrRd-4-4}{RGB}{215,48,31}
\definecolor{OrRd-4-J}{RGB}{215,48,31}
\definecolor{OrRd-5-1}{RGB}{254,240,217}
\definecolor{OrRd-5-B}{RGB}{254,240,217}
\definecolor{OrRd-5-2}{RGB}{253,204,138}
\definecolor{OrRd-5-E}{RGB}{253,204,138}
\definecolor{OrRd-5-3}{RGB}{252,141,89}
\definecolor{OrRd-5-G}{RGB}{252,141,89}
\definecolor{OrRd-5-4}{RGB}{227,74,51}
\definecolor{OrRd-5-I}{RGB}{227,74,51}
\definecolor{OrRd-5-5}{RGB}{179,0,0}
\definecolor{OrRd-5-K}{RGB}{179,0,0}
\definecolor{OrRd-6-1}{RGB}{254,240,217}
\definecolor{OrRd-6-B}{RGB}{254,240,217}
\definecolor{OrRd-6-2}{RGB}{253,212,158}
\definecolor{OrRd-6-D}{RGB}{253,212,158}
\definecolor{OrRd-6-3}{RGB}{253,187,132}
\definecolor{OrRd-6-F}{RGB}{253,187,132}
\definecolor{OrRd-6-4}{RGB}{252,141,89}
\definecolor{OrRd-6-G}{RGB}{252,141,89}
\definecolor{OrRd-6-5}{RGB}{227,74,51}
\definecolor{OrRd-6-I}{RGB}{227,74,51}
\definecolor{OrRd-6-6}{RGB}{179,0,0}
\definecolor{OrRd-6-K}{RGB}{179,0,0}
\definecolor{OrRd-7-1}{RGB}{254,240,217}
\definecolor{OrRd-7-B}{RGB}{254,240,217}
\definecolor{OrRd-7-2}{RGB}{253,212,158}
\definecolor{OrRd-7-D}{RGB}{253,212,158}
\definecolor{OrRd-7-3}{RGB}{253,187,132}
\definecolor{OrRd-7-F}{RGB}{253,187,132}
\definecolor{OrRd-7-4}{RGB}{252,141,89}
\definecolor{OrRd-7-G}{RGB}{252,141,89}
\definecolor{OrRd-7-5}{RGB}{239,101,72}
\definecolor{OrRd-7-H}{RGB}{239,101,72}
\definecolor{OrRd-7-6}{RGB}{215,48,31}
\definecolor{OrRd-7-J}{RGB}{215,48,31}
\definecolor{OrRd-7-7}{RGB}{153,0,0}
\definecolor{OrRd-7-L}{RGB}{153,0,0}
\definecolor{OrRd-8-1}{RGB}{255,247,236}
\definecolor{OrRd-8-A}{RGB}{255,247,236}
\definecolor{OrRd-8-2}{RGB}{254,232,200}
\definecolor{OrRd-8-C}{RGB}{254,232,200}
\definecolor{OrRd-8-3}{RGB}{253,212,158}
\definecolor{OrRd-8-D}{RGB}{253,212,158}
\definecolor{OrRd-8-4}{RGB}{253,187,132}
\definecolor{OrRd-8-F}{RGB}{253,187,132}
\definecolor{OrRd-8-5}{RGB}{252,141,89}
\definecolor{OrRd-8-G}{RGB}{252,141,89}
\definecolor{OrRd-8-6}{RGB}{239,101,72}
\definecolor{OrRd-8-H}{RGB}{239,101,72}
\definecolor{OrRd-8-7}{RGB}{215,48,31}
\definecolor{OrRd-8-J}{RGB}{215,48,31}
\definecolor{OrRd-8-8}{RGB}{153,0,0}
\definecolor{OrRd-8-L}{RGB}{153,0,0}
\definecolor{OrRd-9-1}{RGB}{255,247,236}
\definecolor{OrRd-9-A}{RGB}{255,247,236}
\definecolor{OrRd-9-2}{RGB}{254,232,200}
\definecolor{OrRd-9-C}{RGB}{254,232,200}
\definecolor{OrRd-9-3}{RGB}{253,212,158}
\definecolor{OrRd-9-D}{RGB}{253,212,158}
\definecolor{OrRd-9-4}{RGB}{253,187,132}
\definecolor{OrRd-9-F}{RGB}{253,187,132}
\definecolor{OrRd-9-5}{RGB}{252,141,89}
\definecolor{OrRd-9-G}{RGB}{252,141,89}
\definecolor{OrRd-9-6}{RGB}{239,101,72}
\definecolor{OrRd-9-H}{RGB}{239,101,72}
\definecolor{OrRd-9-7}{RGB}{215,48,31}
\definecolor{OrRd-9-J}{RGB}{215,48,31}
\definecolor{OrRd-9-8}{RGB}{179,0,0}
\definecolor{OrRd-9-K}{RGB}{179,0,0}
\definecolor{OrRd-9-9}{RGB}{127,0,0}
\definecolor{OrRd-9-M}{RGB}{127,0,0}
\definecolor{YlOrRd-3-1}{RGB}{255,237,160}
\definecolor{YlOrRd-3-C}{RGB}{255,237,160}
\definecolor{YlOrRd-3-2}{RGB}{254,178,76}
\definecolor{YlOrRd-3-F}{RGB}{254,178,76}
\definecolor{YlOrRd-3-3}{RGB}{240,59,32}
\definecolor{YlOrRd-3-I}{RGB}{240,59,32}
\definecolor{YlOrRd-4-1}{RGB}{255,255,178}
\definecolor{YlOrRd-4-B}{RGB}{255,255,178}
\definecolor{YlOrRd-4-2}{RGB}{254,204,92}
\definecolor{YlOrRd-4-E}{RGB}{254,204,92}
\definecolor{YlOrRd-4-3}{RGB}{253,141,60}
\definecolor{YlOrRd-4-G}{RGB}{253,141,60}
\definecolor{YlOrRd-4-4}{RGB}{227,26,28}
\definecolor{YlOrRd-4-J}{RGB}{227,26,28}
\definecolor{YlOrRd-5-1}{RGB}{255,255,178}
\definecolor{YlOrRd-5-B}{RGB}{255,255,178}
\definecolor{YlOrRd-5-2}{RGB}{254,204,92}
\definecolor{YlOrRd-5-E}{RGB}{254,204,92}
\definecolor{YlOrRd-5-3}{RGB}{253,141,60}
\definecolor{YlOrRd-5-G}{RGB}{253,141,60}
\definecolor{YlOrRd-5-4}{RGB}{240,59,32}
\definecolor{YlOrRd-5-I}{RGB}{240,59,32}
\definecolor{YlOrRd-5-5}{RGB}{189,0,38}
\definecolor{YlOrRd-5-K}{RGB}{189,0,38}
\definecolor{YlOrRd-6-1}{RGB}{255,255,178}
\definecolor{YlOrRd-6-B}{RGB}{255,255,178}
\definecolor{YlOrRd-6-2}{RGB}{254,217,118}
\definecolor{YlOrRd-6-D}{RGB}{254,217,118}
\definecolor{YlOrRd-6-3}{RGB}{254,178,76}
\definecolor{YlOrRd-6-F}{RGB}{254,178,76}
\definecolor{YlOrRd-6-4}{RGB}{253,141,60}
\definecolor{YlOrRd-6-G}{RGB}{253,141,60}
\definecolor{YlOrRd-6-5}{RGB}{240,59,32}
\definecolor{YlOrRd-6-I}{RGB}{240,59,32}
\definecolor{YlOrRd-6-6}{RGB}{189,0,38}
\definecolor{YlOrRd-6-K}{RGB}{189,0,38}
\definecolor{YlOrRd-7-1}{RGB}{255,255,178}
\definecolor{YlOrRd-7-B}{RGB}{255,255,178}
\definecolor{YlOrRd-7-2}{RGB}{254,217,118}
\definecolor{YlOrRd-7-D}{RGB}{254,217,118}
\definecolor{YlOrRd-7-3}{RGB}{254,178,76}
\definecolor{YlOrRd-7-F}{RGB}{254,178,76}
\definecolor{YlOrRd-7-4}{RGB}{253,141,60}
\definecolor{YlOrRd-7-G}{RGB}{253,141,60}
\definecolor{YlOrRd-7-5}{RGB}{252,78,42}
\definecolor{YlOrRd-7-H}{RGB}{252,78,42}
\definecolor{YlOrRd-7-6}{RGB}{227,26,28}
\definecolor{YlOrRd-7-J}{RGB}{227,26,28}
\definecolor{YlOrRd-7-7}{RGB}{177,0,38}
\definecolor{YlOrRd-7-L}{RGB}{177,0,38}
\definecolor{YlOrRd-8-1}{RGB}{255,255,204}
\definecolor{YlOrRd-8-A}{RGB}{255,255,204}
\definecolor{YlOrRd-8-2}{RGB}{255,237,160}
\definecolor{YlOrRd-8-C}{RGB}{255,237,160}
\definecolor{YlOrRd-8-3}{RGB}{254,217,118}
\definecolor{YlOrRd-8-D}{RGB}{254,217,118}
\definecolor{YlOrRd-8-4}{RGB}{254,178,76}
\definecolor{YlOrRd-8-F}{RGB}{254,178,76}
\definecolor{YlOrRd-8-5}{RGB}{253,141,60}
\definecolor{YlOrRd-8-G}{RGB}{253,141,60}
\definecolor{YlOrRd-8-6}{RGB}{252,78,42}
\definecolor{YlOrRd-8-H}{RGB}{252,78,42}
\definecolor{YlOrRd-8-7}{RGB}{227,26,28}
\definecolor{YlOrRd-8-J}{RGB}{227,26,28}
\definecolor{YlOrRd-8-8}{RGB}{177,0,38}
\definecolor{YlOrRd-8-L}{RGB}{177,0,38}
\definecolor{YlOrRd-9-1}{RGB}{255,255,204}
\definecolor{YlOrRd-9-A}{RGB}{255,255,204}
\definecolor{YlOrRd-9-2}{RGB}{255,237,160}
\definecolor{YlOrRd-9-C}{RGB}{255,237,160}
\definecolor{YlOrRd-9-3}{RGB}{254,217,118}
\definecolor{YlOrRd-9-D}{RGB}{254,217,118}
\definecolor{YlOrRd-9-4}{RGB}{254,178,76}
\definecolor{YlOrRd-9-F}{RGB}{254,178,76}
\definecolor{YlOrRd-9-5}{RGB}{253,141,60}
\definecolor{YlOrRd-9-G}{RGB}{253,141,60}
\definecolor{YlOrRd-9-6}{RGB}{252,78,42}
\definecolor{YlOrRd-9-H}{RGB}{252,78,42}
\definecolor{YlOrRd-9-7}{RGB}{227,26,28}
\definecolor{YlOrRd-9-J}{RGB}{227,26,28}
\definecolor{YlOrRd-9-8}{RGB}{189,0,38}
\definecolor{YlOrRd-9-K}{RGB}{189,0,38}
\definecolor{YlOrRd-9-9}{RGB}{128,0,38}
\definecolor{YlOrRd-9-M}{RGB}{128,0,38}
\definecolor{YlOrBr-3-1}{RGB}{255,247,188}
\definecolor{YlOrBr-3-C}{RGB}{255,247,188}
\definecolor{YlOrBr-3-2}{RGB}{254,196,79}
\definecolor{YlOrBr-3-F}{RGB}{254,196,79}
\definecolor{YlOrBr-3-3}{RGB}{217,95,14}
\definecolor{YlOrBr-3-I}{RGB}{217,95,14}
\definecolor{YlOrBr-4-1}{RGB}{255,255,212}
\definecolor{YlOrBr-4-B}{RGB}{255,255,212}
\definecolor{YlOrBr-4-2}{RGB}{254,217,142}
\definecolor{YlOrBr-4-E}{RGB}{254,217,142}
\definecolor{YlOrBr-4-3}{RGB}{254,153,41}
\definecolor{YlOrBr-4-G}{RGB}{254,153,41}
\definecolor{YlOrBr-4-4}{RGB}{204,76,2}
\definecolor{YlOrBr-4-J}{RGB}{204,76,2}
\definecolor{YlOrBr-5-1}{RGB}{255,255,212}
\definecolor{YlOrBr-5-B}{RGB}{255,255,212}
\definecolor{YlOrBr-5-2}{RGB}{254,217,142}
\definecolor{YlOrBr-5-E}{RGB}{254,217,142}
\definecolor{YlOrBr-5-3}{RGB}{254,153,41}
\definecolor{YlOrBr-5-G}{RGB}{254,153,41}
\definecolor{YlOrBr-5-4}{RGB}{217,95,14}
\definecolor{YlOrBr-5-I}{RGB}{217,95,14}
\definecolor{YlOrBr-5-5}{RGB}{153,52,4}
\definecolor{YlOrBr-5-K}{RGB}{153,52,4}
\definecolor{YlOrBr-6-1}{RGB}{255,255,212}
\definecolor{YlOrBr-6-B}{RGB}{255,255,212}
\definecolor{YlOrBr-6-2}{RGB}{254,227,145}
\definecolor{YlOrBr-6-D}{RGB}{254,227,145}
\definecolor{YlOrBr-6-3}{RGB}{254,196,79}
\definecolor{YlOrBr-6-F}{RGB}{254,196,79}
\definecolor{YlOrBr-6-4}{RGB}{254,153,41}
\definecolor{YlOrBr-6-G}{RGB}{254,153,41}
\definecolor{YlOrBr-6-5}{RGB}{217,95,14}
\definecolor{YlOrBr-6-I}{RGB}{217,95,14}
\definecolor{YlOrBr-6-6}{RGB}{153,52,4}
\definecolor{YlOrBr-6-K}{RGB}{153,52,4}
\definecolor{YlOrBr-7-1}{RGB}{255,255,212}
\definecolor{YlOrBr-7-B}{RGB}{255,255,212}
\definecolor{YlOrBr-7-2}{RGB}{254,227,145}
\definecolor{YlOrBr-7-D}{RGB}{254,227,145}
\definecolor{YlOrBr-7-3}{RGB}{254,196,79}
\definecolor{YlOrBr-7-F}{RGB}{254,196,79}
\definecolor{YlOrBr-7-4}{RGB}{254,153,41}
\definecolor{YlOrBr-7-G}{RGB}{254,153,41}
\definecolor{YlOrBr-7-5}{RGB}{236,112,20}
\definecolor{YlOrBr-7-H}{RGB}{236,112,20}
\definecolor{YlOrBr-7-6}{RGB}{204,76,2}
\definecolor{YlOrBr-7-J}{RGB}{204,76,2}
\definecolor{YlOrBr-7-7}{RGB}{140,45,4}
\definecolor{YlOrBr-7-L}{RGB}{140,45,4}
\definecolor{YlOrBr-8-1}{RGB}{255,255,229}
\definecolor{YlOrBr-8-A}{RGB}{255,255,229}
\definecolor{YlOrBr-8-2}{RGB}{255,247,188}
\definecolor{YlOrBr-8-C}{RGB}{255,247,188}
\definecolor{YlOrBr-8-3}{RGB}{254,227,145}
\definecolor{YlOrBr-8-D}{RGB}{254,227,145}
\definecolor{YlOrBr-8-4}{RGB}{254,196,79}
\definecolor{YlOrBr-8-F}{RGB}{254,196,79}
\definecolor{YlOrBr-8-5}{RGB}{254,153,41}
\definecolor{YlOrBr-8-G}{RGB}{254,153,41}
\definecolor{YlOrBr-8-6}{RGB}{236,112,20}
\definecolor{YlOrBr-8-H}{RGB}{236,112,20}
\definecolor{YlOrBr-8-7}{RGB}{204,76,2}
\definecolor{YlOrBr-8-J}{RGB}{204,76,2}
\definecolor{YlOrBr-8-8}{RGB}{140,45,4}
\definecolor{YlOrBr-8-L}{RGB}{140,45,4}
\definecolor{YlOrBr-9-1}{RGB}{255,255,229}
\definecolor{YlOrBr-9-A}{RGB}{255,255,229}
\definecolor{YlOrBr-9-2}{RGB}{255,247,188}
\definecolor{YlOrBr-9-C}{RGB}{255,247,188}
\definecolor{YlOrBr-9-3}{RGB}{254,227,145}
\definecolor{YlOrBr-9-D}{RGB}{254,227,145}
\definecolor{YlOrBr-9-4}{RGB}{254,196,79}
\definecolor{YlOrBr-9-F}{RGB}{254,196,79}
\definecolor{YlOrBr-9-5}{RGB}{254,153,41}
\definecolor{YlOrBr-9-G}{RGB}{254,153,41}
\definecolor{YlOrBr-9-6}{RGB}{236,112,20}
\definecolor{YlOrBr-9-H}{RGB}{236,112,20}
\definecolor{YlOrBr-9-7}{RGB}{204,76,2}
\definecolor{YlOrBr-9-J}{RGB}{204,76,2}
\definecolor{YlOrBr-9-8}{RGB}{153,52,4}
\definecolor{YlOrBr-9-K}{RGB}{153,52,4}
\definecolor{YlOrBr-9-9}{RGB}{102,37,6}
\definecolor{YlOrBr-9-M}{RGB}{102,37,6}
\definecolor{Purples-3-1}{RGB}{239,237,245}
\definecolor{Purples-3-C}{RGB}{239,237,245}
\definecolor{Purples-3-2}{RGB}{188,189,220}
\definecolor{Purples-3-F}{RGB}{188,189,220}
\definecolor{Purples-3-3}{RGB}{117,107,177}
\definecolor{Purples-3-I}{RGB}{117,107,177}
\definecolor{Purples-4-1}{RGB}{242,240,247}
\definecolor{Purples-4-B}{RGB}{242,240,247}
\definecolor{Purples-4-2}{RGB}{203,201,226}
\definecolor{Purples-4-E}{RGB}{203,201,226}
\definecolor{Purples-4-3}{RGB}{158,154,200}
\definecolor{Purples-4-G}{RGB}{158,154,200}
\definecolor{Purples-4-4}{RGB}{106,81,163}
\definecolor{Purples-4-J}{RGB}{106,81,163}
\definecolor{Purples-5-1}{RGB}{242,240,247}
\definecolor{Purples-5-B}{RGB}{242,240,247}
\definecolor{Purples-5-2}{RGB}{203,201,226}
\definecolor{Purples-5-E}{RGB}{203,201,226}
\definecolor{Purples-5-3}{RGB}{158,154,200}
\definecolor{Purples-5-G}{RGB}{158,154,200}
\definecolor{Purples-5-4}{RGB}{117,107,177}
\definecolor{Purples-5-I}{RGB}{117,107,177}
\definecolor{Purples-5-5}{RGB}{84,39,143}
\definecolor{Purples-5-K}{RGB}{84,39,143}
\definecolor{Purples-6-1}{RGB}{242,240,247}
\definecolor{Purples-6-B}{RGB}{242,240,247}
\definecolor{Purples-6-2}{RGB}{218,218,235}
\definecolor{Purples-6-D}{RGB}{218,218,235}
\definecolor{Purples-6-3}{RGB}{188,189,220}
\definecolor{Purples-6-F}{RGB}{188,189,220}
\definecolor{Purples-6-4}{RGB}{158,154,200}
\definecolor{Purples-6-G}{RGB}{158,154,200}
\definecolor{Purples-6-5}{RGB}{117,107,177}
\definecolor{Purples-6-I}{RGB}{117,107,177}
\definecolor{Purples-6-6}{RGB}{84,39,143}
\definecolor{Purples-6-K}{RGB}{84,39,143}
\definecolor{Purples-7-1}{RGB}{242,240,247}
\definecolor{Purples-7-B}{RGB}{242,240,247}
\definecolor{Purples-7-2}{RGB}{218,218,235}
\definecolor{Purples-7-D}{RGB}{218,218,235}
\definecolor{Purples-7-3}{RGB}{188,189,220}
\definecolor{Purples-7-F}{RGB}{188,189,220}
\definecolor{Purples-7-4}{RGB}{158,154,200}
\definecolor{Purples-7-G}{RGB}{158,154,200}
\definecolor{Purples-7-5}{RGB}{128,125,186}
\definecolor{Purples-7-H}{RGB}{128,125,186}
\definecolor{Purples-7-6}{RGB}{106,81,163}
\definecolor{Purples-7-J}{RGB}{106,81,163}
\definecolor{Purples-7-7}{RGB}{74,20,134}
\definecolor{Purples-7-L}{RGB}{74,20,134}
\definecolor{Purples-8-1}{RGB}{252,251,253}
\definecolor{Purples-8-A}{RGB}{252,251,253}
\definecolor{Purples-8-2}{RGB}{239,237,245}
\definecolor{Purples-8-C}{RGB}{239,237,245}
\definecolor{Purples-8-3}{RGB}{218,218,235}
\definecolor{Purples-8-D}{RGB}{218,218,235}
\definecolor{Purples-8-4}{RGB}{188,189,220}
\definecolor{Purples-8-F}{RGB}{188,189,220}
\definecolor{Purples-8-5}{RGB}{158,154,200}
\definecolor{Purples-8-G}{RGB}{158,154,200}
\definecolor{Purples-8-6}{RGB}{128,125,186}
\definecolor{Purples-8-H}{RGB}{128,125,186}
\definecolor{Purples-8-7}{RGB}{106,81,163}
\definecolor{Purples-8-J}{RGB}{106,81,163}
\definecolor{Purples-8-8}{RGB}{74,20,134}
\definecolor{Purples-8-L}{RGB}{74,20,134}
\definecolor{Purples-9-1}{RGB}{252,251,253}
\definecolor{Purples-9-A}{RGB}{252,251,253}
\definecolor{Purples-9-2}{RGB}{239,237,245}
\definecolor{Purples-9-C}{RGB}{239,237,245}
\definecolor{Purples-9-3}{RGB}{218,218,235}
\definecolor{Purples-9-D}{RGB}{218,218,235}
\definecolor{Purples-9-4}{RGB}{188,189,220}
\definecolor{Purples-9-F}{RGB}{188,189,220}
\definecolor{Purples-9-5}{RGB}{158,154,200}
\definecolor{Purples-9-G}{RGB}{158,154,200}
\definecolor{Purples-9-6}{RGB}{128,125,186}
\definecolor{Purples-9-H}{RGB}{128,125,186}
\definecolor{Purples-9-7}{RGB}{106,81,163}
\definecolor{Purples-9-J}{RGB}{106,81,163}
\definecolor{Purples-9-8}{RGB}{84,39,143}
\definecolor{Purples-9-K}{RGB}{84,39,143}
\definecolor{Purples-9-9}{RGB}{63,0,125}
\definecolor{Purples-9-M}{RGB}{63,0,125}
\definecolor{Blues-3-1}{RGB}{222,235,247}
\definecolor{Blues-3-C}{RGB}{222,235,247}
\definecolor{Blues-3-2}{RGB}{158,202,225}
\definecolor{Blues-3-F}{RGB}{158,202,225}
\definecolor{Blues-3-3}{RGB}{49,130,189}
\definecolor{Blues-3-I}{RGB}{49,130,189}
\definecolor{Blues-4-1}{RGB}{239,243,255}
\definecolor{Blues-4-B}{RGB}{239,243,255}
\definecolor{Blues-4-2}{RGB}{189,215,231}
\definecolor{Blues-4-E}{RGB}{189,215,231}
\definecolor{Blues-4-3}{RGB}{107,174,214}
\definecolor{Blues-4-G}{RGB}{107,174,214}
\definecolor{Blues-4-4}{RGB}{33,113,181}
\definecolor{Blues-4-J}{RGB}{33,113,181}
\definecolor{Blues-5-1}{RGB}{239,243,255}
\definecolor{Blues-5-B}{RGB}{239,243,255}
\definecolor{Blues-5-2}{RGB}{189,215,231}
\definecolor{Blues-5-E}{RGB}{189,215,231}
\definecolor{Blues-5-3}{RGB}{107,174,214}
\definecolor{Blues-5-G}{RGB}{107,174,214}
\definecolor{Blues-5-4}{RGB}{49,130,189}
\definecolor{Blues-5-I}{RGB}{49,130,189}
\definecolor{Blues-5-5}{RGB}{8,81,156}
\definecolor{Blues-5-K}{RGB}{8,81,156}
\definecolor{Blues-6-1}{RGB}{239,243,255}
\definecolor{Blues-6-B}{RGB}{239,243,255}
\definecolor{Blues-6-2}{RGB}{198,219,239}
\definecolor{Blues-6-D}{RGB}{198,219,239}
\definecolor{Blues-6-3}{RGB}{158,202,225}
\definecolor{Blues-6-F}{RGB}{158,202,225}
\definecolor{Blues-6-4}{RGB}{107,174,214}
\definecolor{Blues-6-G}{RGB}{107,174,214}
\definecolor{Blues-6-5}{RGB}{49,130,189}
\definecolor{Blues-6-I}{RGB}{49,130,189}
\definecolor{Blues-6-6}{RGB}{8,81,156}
\definecolor{Blues-6-K}{RGB}{8,81,156}
\definecolor{Blues-7-1}{RGB}{239,243,255}
\definecolor{Blues-7-B}{RGB}{239,243,255}
\definecolor{Blues-7-2}{RGB}{198,219,239}
\definecolor{Blues-7-D}{RGB}{198,219,239}
\definecolor{Blues-7-3}{RGB}{158,202,225}
\definecolor{Blues-7-F}{RGB}{158,202,225}
\definecolor{Blues-7-4}{RGB}{107,174,214}
\definecolor{Blues-7-G}{RGB}{107,174,214}
\definecolor{Blues-7-5}{RGB}{66,146,198}
\definecolor{Blues-7-H}{RGB}{66,146,198}
\definecolor{Blues-7-6}{RGB}{33,113,181}
\definecolor{Blues-7-J}{RGB}{33,113,181}
\definecolor{Blues-7-7}{RGB}{8,69,148}
\definecolor{Blues-7-L}{RGB}{8,69,148}
\definecolor{Blues-8-1}{RGB}{247,251,255}
\definecolor{Blues-8-A}{RGB}{247,251,255}
\definecolor{Blues-8-2}{RGB}{222,235,247}
\definecolor{Blues-8-C}{RGB}{222,235,247}
\definecolor{Blues-8-3}{RGB}{198,219,239}
\definecolor{Blues-8-D}{RGB}{198,219,239}
\definecolor{Blues-8-4}{RGB}{158,202,225}
\definecolor{Blues-8-F}{RGB}{158,202,225}
\definecolor{Blues-8-5}{RGB}{107,174,214}
\definecolor{Blues-8-G}{RGB}{107,174,214}
\definecolor{Blues-8-6}{RGB}{66,146,198}
\definecolor{Blues-8-H}{RGB}{66,146,198}
\definecolor{Blues-8-7}{RGB}{33,113,181}
\definecolor{Blues-8-J}{RGB}{33,113,181}
\definecolor{Blues-8-8}{RGB}{8,69,148}
\definecolor{Blues-8-L}{RGB}{8,69,148}
\definecolor{Blues-9-1}{RGB}{247,251,255}
\definecolor{Blues-9-A}{RGB}{247,251,255}
\definecolor{Blues-9-2}{RGB}{222,235,247}
\definecolor{Blues-9-C}{RGB}{222,235,247}
\definecolor{Blues-9-3}{RGB}{198,219,239}
\definecolor{Blues-9-D}{RGB}{198,219,239}
\definecolor{Blues-9-4}{RGB}{158,202,225}
\definecolor{Blues-9-F}{RGB}{158,202,225}
\definecolor{Blues-9-5}{RGB}{107,174,214}
\definecolor{Blues-9-G}{RGB}{107,174,214}
\definecolor{Blues-9-6}{RGB}{66,146,198}
\definecolor{Blues-9-H}{RGB}{66,146,198}
\definecolor{Blues-9-7}{RGB}{33,113,181}
\definecolor{Blues-9-J}{RGB}{33,113,181}
\definecolor{Blues-9-8}{RGB}{8,81,156}
\definecolor{Blues-9-K}{RGB}{8,81,156}
\definecolor{Blues-9-9}{RGB}{8,48,107}
\definecolor{Blues-9-M}{RGB}{8,48,107}
\definecolor{Greens-3-1}{RGB}{229,245,224}
\definecolor{Greens-3-C}{RGB}{229,245,224}
\definecolor{Greens-3-2}{RGB}{161,217,155}
\definecolor{Greens-3-F}{RGB}{161,217,155}
\definecolor{Greens-3-3}{RGB}{49,163,84}
\definecolor{Greens-3-I}{RGB}{49,163,84}
\definecolor{Greens-4-1}{RGB}{237,248,233}
\definecolor{Greens-4-B}{RGB}{237,248,233}
\definecolor{Greens-4-2}{RGB}{186,228,179}
\definecolor{Greens-4-E}{RGB}{186,228,179}
\definecolor{Greens-4-3}{RGB}{116,196,118}
\definecolor{Greens-4-G}{RGB}{116,196,118}
\definecolor{Greens-4-4}{RGB}{35,139,69}
\definecolor{Greens-4-J}{RGB}{35,139,69}
\definecolor{Greens-5-1}{RGB}{237,248,233}
\definecolor{Greens-5-B}{RGB}{237,248,233}
\definecolor{Greens-5-2}{RGB}{186,228,179}
\definecolor{Greens-5-E}{RGB}{186,228,179}
\definecolor{Greens-5-3}{RGB}{116,196,118}
\definecolor{Greens-5-G}{RGB}{116,196,118}
\definecolor{Greens-5-4}{RGB}{49,163,84}
\definecolor{Greens-5-I}{RGB}{49,163,84}
\definecolor{Greens-5-5}{RGB}{0,109,44}
\definecolor{Greens-5-K}{RGB}{0,109,44}
\definecolor{Greens-6-1}{RGB}{237,248,233}
\definecolor{Greens-6-B}{RGB}{237,248,233}
\definecolor{Greens-6-2}{RGB}{199,233,192}
\definecolor{Greens-6-D}{RGB}{199,233,192}
\definecolor{Greens-6-3}{RGB}{161,217,155}
\definecolor{Greens-6-F}{RGB}{161,217,155}
\definecolor{Greens-6-4}{RGB}{116,196,118}
\definecolor{Greens-6-G}{RGB}{116,196,118}
\definecolor{Greens-6-5}{RGB}{49,163,84}
\definecolor{Greens-6-I}{RGB}{49,163,84}
\definecolor{Greens-6-6}{RGB}{0,109,44}
\definecolor{Greens-6-K}{RGB}{0,109,44}
\definecolor{Greens-7-1}{RGB}{237,248,233}
\definecolor{Greens-7-B}{RGB}{237,248,233}
\definecolor{Greens-7-2}{RGB}{199,233,192}
\definecolor{Greens-7-D}{RGB}{199,233,192}
\definecolor{Greens-7-3}{RGB}{161,217,155}
\definecolor{Greens-7-F}{RGB}{161,217,155}
\definecolor{Greens-7-4}{RGB}{116,196,118}
\definecolor{Greens-7-G}{RGB}{116,196,118}
\definecolor{Greens-7-5}{RGB}{65,171,93}
\definecolor{Greens-7-H}{RGB}{65,171,93}
\definecolor{Greens-7-6}{RGB}{35,139,69}
\definecolor{Greens-7-J}{RGB}{35,139,69}
\definecolor{Greens-7-7}{RGB}{0,90,50}
\definecolor{Greens-7-L}{RGB}{0,90,50}
\definecolor{Greens-8-1}{RGB}{247,252,245}
\definecolor{Greens-8-A}{RGB}{247,252,245}
\definecolor{Greens-8-2}{RGB}{229,245,224}
\definecolor{Greens-8-C}{RGB}{229,245,224}
\definecolor{Greens-8-3}{RGB}{199,233,192}
\definecolor{Greens-8-D}{RGB}{199,233,192}
\definecolor{Greens-8-4}{RGB}{161,217,155}
\definecolor{Greens-8-F}{RGB}{161,217,155}
\definecolor{Greens-8-5}{RGB}{116,196,118}
\definecolor{Greens-8-G}{RGB}{116,196,118}
\definecolor{Greens-8-6}{RGB}{65,171,93}
\definecolor{Greens-8-H}{RGB}{65,171,93}
\definecolor{Greens-8-7}{RGB}{35,139,69}
\definecolor{Greens-8-J}{RGB}{35,139,69}
\definecolor{Greens-8-8}{RGB}{0,90,50}
\definecolor{Greens-8-L}{RGB}{0,90,50}
\definecolor{Greens-9-1}{RGB}{247,252,245}
\definecolor{Greens-9-A}{RGB}{247,252,245}
\definecolor{Greens-9-2}{RGB}{229,245,224}
\definecolor{Greens-9-C}{RGB}{229,245,224}
\definecolor{Greens-9-3}{RGB}{199,233,192}
\definecolor{Greens-9-D}{RGB}{199,233,192}
\definecolor{Greens-9-4}{RGB}{161,217,155}
\definecolor{Greens-9-F}{RGB}{161,217,155}
\definecolor{Greens-9-5}{RGB}{116,196,118}
\definecolor{Greens-9-G}{RGB}{116,196,118}
\definecolor{Greens-9-6}{RGB}{65,171,93}
\definecolor{Greens-9-H}{RGB}{65,171,93}
\definecolor{Greens-9-7}{RGB}{35,139,69}
\definecolor{Greens-9-J}{RGB}{35,139,69}
\definecolor{Greens-9-8}{RGB}{0,109,44}
\definecolor{Greens-9-K}{RGB}{0,109,44}
\definecolor{Greens-9-9}{RGB}{0,68,27}
\definecolor{Greens-9-M}{RGB}{0,68,27}
\definecolor{Oranges-3-1}{RGB}{254,230,206}
\definecolor{Oranges-3-C}{RGB}{254,230,206}
\definecolor{Oranges-3-2}{RGB}{253,174,107}
\definecolor{Oranges-3-F}{RGB}{253,174,107}
\definecolor{Oranges-3-3}{RGB}{230,85,13}
\definecolor{Oranges-3-I}{RGB}{230,85,13}
\definecolor{Oranges-4-1}{RGB}{254,237,222}
\definecolor{Oranges-4-B}{RGB}{254,237,222}
\definecolor{Oranges-4-2}{RGB}{253,190,133}
\definecolor{Oranges-4-E}{RGB}{253,190,133}
\definecolor{Oranges-4-3}{RGB}{253,141,60}
\definecolor{Oranges-4-G}{RGB}{253,141,60}
\definecolor{Oranges-4-4}{RGB}{217,71,1}
\definecolor{Oranges-4-J}{RGB}{217,71,1}
\definecolor{Oranges-5-1}{RGB}{254,237,222}
\definecolor{Oranges-5-B}{RGB}{254,237,222}
\definecolor{Oranges-5-2}{RGB}{253,190,133}
\definecolor{Oranges-5-E}{RGB}{253,190,133}
\definecolor{Oranges-5-3}{RGB}{253,141,60}
\definecolor{Oranges-5-G}{RGB}{253,141,60}
\definecolor{Oranges-5-4}{RGB}{230,85,13}
\definecolor{Oranges-5-I}{RGB}{230,85,13}
\definecolor{Oranges-5-5}{RGB}{166,54,3}
\definecolor{Oranges-5-K}{RGB}{166,54,3}
\definecolor{Oranges-6-1}{RGB}{254,237,222}
\definecolor{Oranges-6-B}{RGB}{254,237,222}
\definecolor{Oranges-6-2}{RGB}{253,208,162}
\definecolor{Oranges-6-D}{RGB}{253,208,162}
\definecolor{Oranges-6-3}{RGB}{253,174,107}
\definecolor{Oranges-6-F}{RGB}{253,174,107}
\definecolor{Oranges-6-4}{RGB}{253,141,60}
\definecolor{Oranges-6-G}{RGB}{253,141,60}
\definecolor{Oranges-6-5}{RGB}{230,85,13}
\definecolor{Oranges-6-I}{RGB}{230,85,13}
\definecolor{Oranges-6-6}{RGB}{166,54,3}
\definecolor{Oranges-6-K}{RGB}{166,54,3}
\definecolor{Oranges-7-1}{RGB}{254,237,222}
\definecolor{Oranges-7-B}{RGB}{254,237,222}
\definecolor{Oranges-7-2}{RGB}{253,208,162}
\definecolor{Oranges-7-D}{RGB}{253,208,162}
\definecolor{Oranges-7-3}{RGB}{253,174,107}
\definecolor{Oranges-7-F}{RGB}{253,174,107}
\definecolor{Oranges-7-4}{RGB}{253,141,60}
\definecolor{Oranges-7-G}{RGB}{253,141,60}
\definecolor{Oranges-7-5}{RGB}{241,105,19}
\definecolor{Oranges-7-H}{RGB}{241,105,19}
\definecolor{Oranges-7-6}{RGB}{217,72,1}
\definecolor{Oranges-7-J}{RGB}{217,72,1}
\definecolor{Oranges-7-7}{RGB}{140,45,4}
\definecolor{Oranges-7-L}{RGB}{140,45,4}
\definecolor{Oranges-8-1}{RGB}{255,245,235}
\definecolor{Oranges-8-A}{RGB}{255,245,235}
\definecolor{Oranges-8-2}{RGB}{254,230,206}
\definecolor{Oranges-8-C}{RGB}{254,230,206}
\definecolor{Oranges-8-3}{RGB}{253,208,162}
\definecolor{Oranges-8-D}{RGB}{253,208,162}
\definecolor{Oranges-8-4}{RGB}{253,174,107}
\definecolor{Oranges-8-F}{RGB}{253,174,107}
\definecolor{Oranges-8-5}{RGB}{253,141,60}
\definecolor{Oranges-8-G}{RGB}{253,141,60}
\definecolor{Oranges-8-6}{RGB}{241,105,19}
\definecolor{Oranges-8-H}{RGB}{241,105,19}
\definecolor{Oranges-8-7}{RGB}{217,72,1}
\definecolor{Oranges-8-J}{RGB}{217,72,1}
\definecolor{Oranges-8-8}{RGB}{140,45,4}
\definecolor{Oranges-8-L}{RGB}{140,45,4}
\definecolor{Oranges-9-1}{RGB}{255,245,235}
\definecolor{Oranges-9-A}{RGB}{255,245,235}
\definecolor{Oranges-9-2}{RGB}{254,230,206}
\definecolor{Oranges-9-C}{RGB}{254,230,206}
\definecolor{Oranges-9-3}{RGB}{253,208,162}
\definecolor{Oranges-9-D}{RGB}{253,208,162}
\definecolor{Oranges-9-4}{RGB}{253,174,107}
\definecolor{Oranges-9-F}{RGB}{253,174,107}
\definecolor{Oranges-9-5}{RGB}{253,141,60}
\definecolor{Oranges-9-G}{RGB}{253,141,60}
\definecolor{Oranges-9-6}{RGB}{241,105,19}
\definecolor{Oranges-9-H}{RGB}{241,105,19}
\definecolor{Oranges-9-7}{RGB}{217,72,1}
\definecolor{Oranges-9-J}{RGB}{217,72,1}
\definecolor{Oranges-9-8}{RGB}{166,54,3}
\definecolor{Oranges-9-K}{RGB}{166,54,3}
\definecolor{Oranges-9-9}{RGB}{127,39,4}
\definecolor{Oranges-9-M}{RGB}{127,39,4}
\definecolor{Reds-3-1}{RGB}{254,224,210}
\definecolor{Reds-3-C}{RGB}{254,224,210}
\definecolor{Reds-3-2}{RGB}{252,146,114}
\definecolor{Reds-3-F}{RGB}{252,146,114}
\definecolor{Reds-3-3}{RGB}{222,45,38}
\definecolor{Reds-3-I}{RGB}{222,45,38}
\definecolor{Reds-4-1}{RGB}{254,229,217}
\definecolor{Reds-4-B}{RGB}{254,229,217}
\definecolor{Reds-4-2}{RGB}{252,174,145}
\definecolor{Reds-4-E}{RGB}{252,174,145}
\definecolor{Reds-4-3}{RGB}{251,106,74}
\definecolor{Reds-4-G}{RGB}{251,106,74}
\definecolor{Reds-4-4}{RGB}{203,24,29}
\definecolor{Reds-4-J}{RGB}{203,24,29}
\definecolor{Reds-5-1}{RGB}{254,229,217}
\definecolor{Reds-5-B}{RGB}{254,229,217}
\definecolor{Reds-5-2}{RGB}{252,174,145}
\definecolor{Reds-5-E}{RGB}{252,174,145}
\definecolor{Reds-5-3}{RGB}{251,106,74}
\definecolor{Reds-5-G}{RGB}{251,106,74}
\definecolor{Reds-5-4}{RGB}{222,45,38}
\definecolor{Reds-5-I}{RGB}{222,45,38}
\definecolor{Reds-5-5}{RGB}{165,15,21}
\definecolor{Reds-5-K}{RGB}{165,15,21}
\definecolor{Reds-6-1}{RGB}{254,229,217}
\definecolor{Reds-6-B}{RGB}{254,229,217}
\definecolor{Reds-6-2}{RGB}{252,187,161}
\definecolor{Reds-6-D}{RGB}{252,187,161}
\definecolor{Reds-6-3}{RGB}{252,146,114}
\definecolor{Reds-6-F}{RGB}{252,146,114}
\definecolor{Reds-6-4}{RGB}{251,106,74}
\definecolor{Reds-6-G}{RGB}{251,106,74}
\definecolor{Reds-6-5}{RGB}{222,45,38}
\definecolor{Reds-6-I}{RGB}{222,45,38}
\definecolor{Reds-6-6}{RGB}{165,15,21}
\definecolor{Reds-6-K}{RGB}{165,15,21}
\definecolor{Reds-7-1}{RGB}{254,229,217}
\definecolor{Reds-7-B}{RGB}{254,229,217}
\definecolor{Reds-7-2}{RGB}{252,187,161}
\definecolor{Reds-7-D}{RGB}{252,187,161}
\definecolor{Reds-7-3}{RGB}{252,146,114}
\definecolor{Reds-7-F}{RGB}{252,146,114}
\definecolor{Reds-7-4}{RGB}{251,106,74}
\definecolor{Reds-7-G}{RGB}{251,106,74}
\definecolor{Reds-7-5}{RGB}{239,59,44}
\definecolor{Reds-7-H}{RGB}{239,59,44}
\definecolor{Reds-7-6}{RGB}{203,24,29}
\definecolor{Reds-7-J}{RGB}{203,24,29}
\definecolor{Reds-7-7}{RGB}{153,0,13}
\definecolor{Reds-7-L}{RGB}{153,0,13}
\definecolor{Reds-8-1}{RGB}{255,245,240}
\definecolor{Reds-8-A}{RGB}{255,245,240}
\definecolor{Reds-8-2}{RGB}{254,224,210}
\definecolor{Reds-8-C}{RGB}{254,224,210}
\definecolor{Reds-8-3}{RGB}{252,187,161}
\definecolor{Reds-8-D}{RGB}{252,187,161}
\definecolor{Reds-8-4}{RGB}{252,146,114}
\definecolor{Reds-8-F}{RGB}{252,146,114}
\definecolor{Reds-8-5}{RGB}{251,106,74}
\definecolor{Reds-8-G}{RGB}{251,106,74}
\definecolor{Reds-8-6}{RGB}{239,59,44}
\definecolor{Reds-8-H}{RGB}{239,59,44}
\definecolor{Reds-8-7}{RGB}{203,24,29}
\definecolor{Reds-8-J}{RGB}{203,24,29}
\definecolor{Reds-8-8}{RGB}{153,0,13}
\definecolor{Reds-8-L}{RGB}{153,0,13}
\definecolor{Reds-9-1}{RGB}{255,245,240}
\definecolor{Reds-9-A}{RGB}{255,245,240}
\definecolor{Reds-9-2}{RGB}{254,224,210}
\definecolor{Reds-9-C}{RGB}{254,224,210}
\definecolor{Reds-9-3}{RGB}{252,187,161}
\definecolor{Reds-9-D}{RGB}{252,187,161}
\definecolor{Reds-9-4}{RGB}{252,146,114}
\definecolor{Reds-9-F}{RGB}{252,146,114}
\definecolor{Reds-9-5}{RGB}{251,106,74}
\definecolor{Reds-9-G}{RGB}{251,106,74}
\definecolor{Reds-9-6}{RGB}{239,59,44}
\definecolor{Reds-9-H}{RGB}{239,59,44}
\definecolor{Reds-9-7}{RGB}{203,24,29}
\definecolor{Reds-9-J}{RGB}{203,24,29}
\definecolor{Reds-9-8}{RGB}{165,15,21}
\definecolor{Reds-9-K}{RGB}{165,15,21}
\definecolor{Reds-9-9}{RGB}{103,0,13}
\definecolor{Reds-9-M}{RGB}{103,0,13}
\definecolor{Greys-3-1}{RGB}{240,240,240}
\definecolor{Greys-3-C}{RGB}{240,240,240}
\definecolor{Greys-3-2}{RGB}{189,189,189}
\definecolor{Greys-3-F}{RGB}{189,189,189}
\definecolor{Greys-3-3}{RGB}{99,99,99}
\definecolor{Greys-3-I}{RGB}{99,99,99}
\definecolor{Greys-4-1}{RGB}{247,247,247}
\definecolor{Greys-4-B}{RGB}{247,247,247}
\definecolor{Greys-4-2}{RGB}{204,204,204}
\definecolor{Greys-4-E}{RGB}{204,204,204}
\definecolor{Greys-4-3}{RGB}{150,150,150}
\definecolor{Greys-4-G}{RGB}{150,150,150}
\definecolor{Greys-4-4}{RGB}{82,82,82}
\definecolor{Greys-4-J}{RGB}{82,82,82}
\definecolor{Greys-5-1}{RGB}{247,247,247}
\definecolor{Greys-5-B}{RGB}{247,247,247}
\definecolor{Greys-5-2}{RGB}{204,204,204}
\definecolor{Greys-5-E}{RGB}{204,204,204}
\definecolor{Greys-5-3}{RGB}{150,150,150}
\definecolor{Greys-5-G}{RGB}{150,150,150}
\definecolor{Greys-5-4}{RGB}{99,99,99}
\definecolor{Greys-5-I}{RGB}{99,99,99}
\definecolor{Greys-5-5}{RGB}{37,37,37}
\definecolor{Greys-5-K}{RGB}{37,37,37}
\definecolor{Greys-6-1}{RGB}{247,247,247}
\definecolor{Greys-6-B}{RGB}{247,247,247}
\definecolor{Greys-6-2}{RGB}{217,217,217}
\definecolor{Greys-6-D}{RGB}{217,217,217}
\definecolor{Greys-6-3}{RGB}{189,189,189}
\definecolor{Greys-6-F}{RGB}{189,189,189}
\definecolor{Greys-6-4}{RGB}{150,150,150}
\definecolor{Greys-6-G}{RGB}{150,150,150}
\definecolor{Greys-6-5}{RGB}{99,99,99}
\definecolor{Greys-6-I}{RGB}{99,99,99}
\definecolor{Greys-6-6}{RGB}{37,37,37}
\definecolor{Greys-6-K}{RGB}{37,37,37}
\definecolor{Greys-7-1}{RGB}{247,247,247}
\definecolor{Greys-7-B}{RGB}{247,247,247}
\definecolor{Greys-7-2}{RGB}{217,217,217}
\definecolor{Greys-7-D}{RGB}{217,217,217}
\definecolor{Greys-7-3}{RGB}{189,189,189}
\definecolor{Greys-7-F}{RGB}{189,189,189}
\definecolor{Greys-7-4}{RGB}{150,150,150}
\definecolor{Greys-7-G}{RGB}{150,150,150}
\definecolor{Greys-7-5}{RGB}{115,115,115}
\definecolor{Greys-7-H}{RGB}{115,115,115}
\definecolor{Greys-7-6}{RGB}{82,82,82}
\definecolor{Greys-7-J}{RGB}{82,82,82}
\definecolor{Greys-7-7}{RGB}{37,37,37}
\definecolor{Greys-7-L}{RGB}{37,37,37}
\definecolor{Greys-8-1}{RGB}{255,255,255}
\definecolor{Greys-8-A}{RGB}{255,255,255}
\definecolor{Greys-8-2}{RGB}{240,240,240}
\definecolor{Greys-8-C}{RGB}{240,240,240}
\definecolor{Greys-8-3}{RGB}{217,217,217}
\definecolor{Greys-8-D}{RGB}{217,217,217}
\definecolor{Greys-8-4}{RGB}{189,189,189}
\definecolor{Greys-8-F}{RGB}{189,189,189}
\definecolor{Greys-8-5}{RGB}{150,150,150}
\definecolor{Greys-8-G}{RGB}{150,150,150}
\definecolor{Greys-8-6}{RGB}{115,115,115}
\definecolor{Greys-8-H}{RGB}{115,115,115}
\definecolor{Greys-8-7}{RGB}{82,82,82}
\definecolor{Greys-8-J}{RGB}{82,82,82}
\definecolor{Greys-8-8}{RGB}{37,37,37}
\definecolor{Greys-8-L}{RGB}{37,37,37}
\definecolor{Greys-9-1}{RGB}{255,255,255}
\definecolor{Greys-9-A}{RGB}{255,255,255}
\definecolor{Greys-9-2}{RGB}{240,240,240}
\definecolor{Greys-9-C}{RGB}{240,240,240}
\definecolor{Greys-9-3}{RGB}{217,217,217}
\definecolor{Greys-9-D}{RGB}{217,217,217}
\definecolor{Greys-9-4}{RGB}{189,189,189}
\definecolor{Greys-9-F}{RGB}{189,189,189}
\definecolor{Greys-9-5}{RGB}{150,150,150}
\definecolor{Greys-9-G}{RGB}{150,150,150}
\definecolor{Greys-9-6}{RGB}{115,115,115}
\definecolor{Greys-9-H}{RGB}{115,115,115}
\definecolor{Greys-9-7}{RGB}{82,82,82}
\definecolor{Greys-9-J}{RGB}{82,82,82}
\definecolor{Greys-9-8}{RGB}{37,37,37}
\definecolor{Greys-9-K}{RGB}{37,37,37}
\definecolor{Greys-9-9}{RGB}{0,0,0}
\definecolor{Greys-9-M}{RGB}{0,0,0}
\definecolor{PuOr-3-1}{RGB}{241,163,64}
\definecolor{PuOr-3-E}{RGB}{241,163,64}
\definecolor{PuOr-3-2}{RGB}{247,247,247}
\definecolor{PuOr-3-H}{RGB}{247,247,247}
\definecolor{PuOr-3-3}{RGB}{153,142,195}
\definecolor{PuOr-3-K}{RGB}{153,142,195}
\definecolor{PuOr-4-1}{RGB}{230,97,1}
\definecolor{PuOr-4-C}{RGB}{230,97,1}
\definecolor{PuOr-4-2}{RGB}{253,184,99}
\definecolor{PuOr-4-F}{RGB}{253,184,99}
\definecolor{PuOr-4-3}{RGB}{178,171,210}
\definecolor{PuOr-4-J}{RGB}{178,171,210}
\definecolor{PuOr-4-4}{RGB}{94,60,153}
\definecolor{PuOr-4-M}{RGB}{94,60,153}
\definecolor{PuOr-5-1}{RGB}{230,97,1}
\definecolor{PuOr-5-C}{RGB}{230,97,1}
\definecolor{PuOr-5-2}{RGB}{253,184,99}
\definecolor{PuOr-5-F}{RGB}{253,184,99}
\definecolor{PuOr-5-3}{RGB}{247,247,247}
\definecolor{PuOr-5-H}{RGB}{247,247,247}
\definecolor{PuOr-5-4}{RGB}{178,171,210}
\definecolor{PuOr-5-J}{RGB}{178,171,210}
\definecolor{PuOr-5-5}{RGB}{94,60,153}
\definecolor{PuOr-5-M}{RGB}{94,60,153}
\definecolor{PuOr-6-1}{RGB}{179,88,6}
\definecolor{PuOr-6-B}{RGB}{179,88,6}
\definecolor{PuOr-6-2}{RGB}{241,163,64}
\definecolor{PuOr-6-E}{RGB}{241,163,64}
\definecolor{PuOr-6-3}{RGB}{254,224,182}
\definecolor{PuOr-6-G}{RGB}{254,224,182}
\definecolor{PuOr-6-4}{RGB}{216,218,235}
\definecolor{PuOr-6-I}{RGB}{216,218,235}
\definecolor{PuOr-6-5}{RGB}{153,142,195}
\definecolor{PuOr-6-K}{RGB}{153,142,195}
\definecolor{PuOr-6-6}{RGB}{84,39,136}
\definecolor{PuOr-6-N}{RGB}{84,39,136}
\definecolor{PuOr-7-1}{RGB}{179,88,6}
\definecolor{PuOr-7-B}{RGB}{179,88,6}
\definecolor{PuOr-7-2}{RGB}{241,163,64}
\definecolor{PuOr-7-E}{RGB}{241,163,64}
\definecolor{PuOr-7-3}{RGB}{254,224,182}
\definecolor{PuOr-7-G}{RGB}{254,224,182}
\definecolor{PuOr-7-4}{RGB}{247,247,247}
\definecolor{PuOr-7-H}{RGB}{247,247,247}
\definecolor{PuOr-7-5}{RGB}{216,218,235}
\definecolor{PuOr-7-I}{RGB}{216,218,235}
\definecolor{PuOr-7-6}{RGB}{153,142,195}
\definecolor{PuOr-7-K}{RGB}{153,142,195}
\definecolor{PuOr-7-7}{RGB}{84,39,136}
\definecolor{PuOr-7-N}{RGB}{84,39,136}
\definecolor{PuOr-8-1}{RGB}{179,88,6}
\definecolor{PuOr-8-B}{RGB}{179,88,6}
\definecolor{PuOr-8-2}{RGB}{224,130,20}
\definecolor{PuOr-8-D}{RGB}{224,130,20}
\definecolor{PuOr-8-3}{RGB}{253,184,99}
\definecolor{PuOr-8-F}{RGB}{253,184,99}
\definecolor{PuOr-8-4}{RGB}{254,224,182}
\definecolor{PuOr-8-G}{RGB}{254,224,182}
\definecolor{PuOr-8-5}{RGB}{216,218,235}
\definecolor{PuOr-8-I}{RGB}{216,218,235}
\definecolor{PuOr-8-6}{RGB}{178,171,210}
\definecolor{PuOr-8-J}{RGB}{178,171,210}
\definecolor{PuOr-8-7}{RGB}{128,115,172}
\definecolor{PuOr-8-L}{RGB}{128,115,172}
\definecolor{PuOr-8-8}{RGB}{84,39,136}
\definecolor{PuOr-8-N}{RGB}{84,39,136}
\definecolor{PuOr-9-1}{RGB}{179,88,6}
\definecolor{PuOr-9-B}{RGB}{179,88,6}
\definecolor{PuOr-9-2}{RGB}{224,130,20}
\definecolor{PuOr-9-D}{RGB}{224,130,20}
\definecolor{PuOr-9-3}{RGB}{253,184,99}
\definecolor{PuOr-9-F}{RGB}{253,184,99}
\definecolor{PuOr-9-4}{RGB}{254,224,182}
\definecolor{PuOr-9-G}{RGB}{254,224,182}
\definecolor{PuOr-9-5}{RGB}{247,247,247}
\definecolor{PuOr-9-H}{RGB}{247,247,247}
\definecolor{PuOr-9-6}{RGB}{216,218,235}
\definecolor{PuOr-9-I}{RGB}{216,218,235}
\definecolor{PuOr-9-7}{RGB}{178,171,210}
\definecolor{PuOr-9-J}{RGB}{178,171,210}
\definecolor{PuOr-9-8}{RGB}{128,115,172}
\definecolor{PuOr-9-L}{RGB}{128,115,172}
\definecolor{PuOr-9-9}{RGB}{84,39,136}
\definecolor{PuOr-9-N}{RGB}{84,39,136}
\definecolor{PuOr-10-1}{RGB}{127,59,8}
\definecolor{PuOr-10-A}{RGB}{127,59,8}
\definecolor{PuOr-10-2}{RGB}{179,88,6}
\definecolor{PuOr-10-B}{RGB}{179,88,6}
\definecolor{PuOr-10-3}{RGB}{224,130,20}
\definecolor{PuOr-10-D}{RGB}{224,130,20}
\definecolor{PuOr-10-4}{RGB}{253,184,99}
\definecolor{PuOr-10-F}{RGB}{253,184,99}
\definecolor{PuOr-10-5}{RGB}{254,224,182}
\definecolor{PuOr-10-G}{RGB}{254,224,182}
\definecolor{PuOr-10-6}{RGB}{216,218,235}
\definecolor{PuOr-10-I}{RGB}{216,218,235}
\definecolor{PuOr-10-7}{RGB}{178,171,210}
\definecolor{PuOr-10-J}{RGB}{178,171,210}
\definecolor{PuOr-10-8}{RGB}{128,115,172}
\definecolor{PuOr-10-L}{RGB}{128,115,172}
\definecolor{PuOr-10-9}{RGB}{84,39,136}
\definecolor{PuOr-10-N}{RGB}{84,39,136}
\definecolor{PuOr-10-10}{RGB}{45,0,75}
\definecolor{PuOr-10-O}{RGB}{45,0,75}
\definecolor{PuOr-11-1}{RGB}{127,59,8}
\definecolor{PuOr-11-A}{RGB}{127,59,8}
\definecolor{PuOr-11-2}{RGB}{179,88,6}
\definecolor{PuOr-11-B}{RGB}{179,88,6}
\definecolor{PuOr-11-3}{RGB}{224,130,20}
\definecolor{PuOr-11-D}{RGB}{224,130,20}
\definecolor{PuOr-11-4}{RGB}{253,184,99}
\definecolor{PuOr-11-F}{RGB}{253,184,99}
\definecolor{PuOr-11-5}{RGB}{254,224,182}
\definecolor{PuOr-11-G}{RGB}{254,224,182}
\definecolor{PuOr-11-6}{RGB}{247,247,247}
\definecolor{PuOr-11-H}{RGB}{247,247,247}
\definecolor{PuOr-11-7}{RGB}{216,218,235}
\definecolor{PuOr-11-I}{RGB}{216,218,235}
\definecolor{PuOr-11-8}{RGB}{178,171,210}
\definecolor{PuOr-11-J}{RGB}{178,171,210}
\definecolor{PuOr-11-9}{RGB}{128,115,172}
\definecolor{PuOr-11-L}{RGB}{128,115,172}
\definecolor{PuOr-11-10}{RGB}{84,39,136}
\definecolor{PuOr-11-N}{RGB}{84,39,136}
\definecolor{PuOr-11-11}{RGB}{45,0,75}
\definecolor{PuOr-11-O}{RGB}{45,0,75}
\definecolor{BrBG-3-1}{RGB}{216,179,101}
\definecolor{BrBG-3-E}{RGB}{216,179,101}
\definecolor{BrBG-3-2}{RGB}{245,245,245}
\definecolor{BrBG-3-H}{RGB}{245,245,245}
\definecolor{BrBG-3-3}{RGB}{90,180,172}
\definecolor{BrBG-3-K}{RGB}{90,180,172}
\definecolor{BrBG-4-1}{RGB}{166,97,26}
\definecolor{BrBG-4-C}{RGB}{166,97,26}
\definecolor{BrBG-4-2}{RGB}{223,194,125}
\definecolor{BrBG-4-F}{RGB}{223,194,125}
\definecolor{BrBG-4-3}{RGB}{128,205,193}
\definecolor{BrBG-4-J}{RGB}{128,205,193}
\definecolor{BrBG-4-4}{RGB}{1,133,113}
\definecolor{BrBG-4-M}{RGB}{1,133,113}
\definecolor{BrBG-5-1}{RGB}{166,97,26}
\definecolor{BrBG-5-C}{RGB}{166,97,26}
\definecolor{BrBG-5-2}{RGB}{223,194,125}
\definecolor{BrBG-5-F}{RGB}{223,194,125}
\definecolor{BrBG-5-3}{RGB}{245,245,245}
\definecolor{BrBG-5-H}{RGB}{245,245,245}
\definecolor{BrBG-5-4}{RGB}{128,205,193}
\definecolor{BrBG-5-J}{RGB}{128,205,193}
\definecolor{BrBG-5-5}{RGB}{1,133,113}
\definecolor{BrBG-5-M}{RGB}{1,133,113}
\definecolor{BrBG-6-1}{RGB}{140,81,10}
\definecolor{BrBG-6-B}{RGB}{140,81,10}
\definecolor{BrBG-6-2}{RGB}{216,179,101}
\definecolor{BrBG-6-E}{RGB}{216,179,101}
\definecolor{BrBG-6-3}{RGB}{246,232,195}
\definecolor{BrBG-6-G}{RGB}{246,232,195}
\definecolor{BrBG-6-4}{RGB}{199,234,229}
\definecolor{BrBG-6-I}{RGB}{199,234,229}
\definecolor{BrBG-6-5}{RGB}{90,180,172}
\definecolor{BrBG-6-K}{RGB}{90,180,172}
\definecolor{BrBG-6-6}{RGB}{1,102,94}
\definecolor{BrBG-6-N}{RGB}{1,102,94}
\definecolor{BrBG-7-1}{RGB}{140,81,10}
\definecolor{BrBG-7-B}{RGB}{140,81,10}
\definecolor{BrBG-7-2}{RGB}{216,179,101}
\definecolor{BrBG-7-E}{RGB}{216,179,101}
\definecolor{BrBG-7-3}{RGB}{246,232,195}
\definecolor{BrBG-7-G}{RGB}{246,232,195}
\definecolor{BrBG-7-4}{RGB}{245,245,245}
\definecolor{BrBG-7-H}{RGB}{245,245,245}
\definecolor{BrBG-7-5}{RGB}{199,234,229}
\definecolor{BrBG-7-I}{RGB}{199,234,229}
\definecolor{BrBG-7-6}{RGB}{90,180,172}
\definecolor{BrBG-7-K}{RGB}{90,180,172}
\definecolor{BrBG-7-7}{RGB}{1,102,94}
\definecolor{BrBG-7-N}{RGB}{1,102,94}
\definecolor{BrBG-8-1}{RGB}{140,81,10}
\definecolor{BrBG-8-B}{RGB}{140,81,10}
\definecolor{BrBG-8-2}{RGB}{191,129,45}
\definecolor{BrBG-8-D}{RGB}{191,129,45}
\definecolor{BrBG-8-3}{RGB}{223,194,125}
\definecolor{BrBG-8-F}{RGB}{223,194,125}
\definecolor{BrBG-8-4}{RGB}{246,232,195}
\definecolor{BrBG-8-G}{RGB}{246,232,195}
\definecolor{BrBG-8-5}{RGB}{199,234,229}
\definecolor{BrBG-8-I}{RGB}{199,234,229}
\definecolor{BrBG-8-6}{RGB}{128,205,193}
\definecolor{BrBG-8-J}{RGB}{128,205,193}
\definecolor{BrBG-8-7}{RGB}{53,151,143}
\definecolor{BrBG-8-L}{RGB}{53,151,143}
\definecolor{BrBG-8-8}{RGB}{1,102,94}
\definecolor{BrBG-8-N}{RGB}{1,102,94}
\definecolor{BrBG-9-1}{RGB}{140,81,10}
\definecolor{BrBG-9-B}{RGB}{140,81,10}
\definecolor{BrBG-9-2}{RGB}{191,129,45}
\definecolor{BrBG-9-D}{RGB}{191,129,45}
\definecolor{BrBG-9-3}{RGB}{223,194,125}
\definecolor{BrBG-9-F}{RGB}{223,194,125}
\definecolor{BrBG-9-4}{RGB}{246,232,195}
\definecolor{BrBG-9-G}{RGB}{246,232,195}
\definecolor{BrBG-9-5}{RGB}{245,245,245}
\definecolor{BrBG-9-H}{RGB}{245,245,245}
\definecolor{BrBG-9-6}{RGB}{199,234,229}
\definecolor{BrBG-9-I}{RGB}{199,234,229}
\definecolor{BrBG-9-7}{RGB}{128,205,193}
\definecolor{BrBG-9-J}{RGB}{128,205,193}
\definecolor{BrBG-9-8}{RGB}{53,151,143}
\definecolor{BrBG-9-L}{RGB}{53,151,143}
\definecolor{BrBG-9-9}{RGB}{1,102,94}
\definecolor{BrBG-9-N}{RGB}{1,102,94}
\definecolor{BrBG-10-1}{RGB}{84,48,5}
\definecolor{BrBG-10-A}{RGB}{84,48,5}
\definecolor{BrBG-10-2}{RGB}{140,81,10}
\definecolor{BrBG-10-B}{RGB}{140,81,10}
\definecolor{BrBG-10-3}{RGB}{191,129,45}
\definecolor{BrBG-10-D}{RGB}{191,129,45}
\definecolor{BrBG-10-4}{RGB}{223,194,125}
\definecolor{BrBG-10-F}{RGB}{223,194,125}
\definecolor{BrBG-10-5}{RGB}{246,232,195}
\definecolor{BrBG-10-G}{RGB}{246,232,195}
\definecolor{BrBG-10-6}{RGB}{199,234,229}
\definecolor{BrBG-10-I}{RGB}{199,234,229}
\definecolor{BrBG-10-7}{RGB}{128,205,193}
\definecolor{BrBG-10-J}{RGB}{128,205,193}
\definecolor{BrBG-10-8}{RGB}{53,151,143}
\definecolor{BrBG-10-L}{RGB}{53,151,143}
\definecolor{BrBG-10-9}{RGB}{1,102,94}
\definecolor{BrBG-10-N}{RGB}{1,102,94}
\definecolor{BrBG-10-10}{RGB}{0,60,48}
\definecolor{BrBG-10-O}{RGB}{0,60,48}
\definecolor{BrBG-11-1}{RGB}{84,48,5}
\definecolor{BrBG-11-A}{RGB}{84,48,5}
\definecolor{BrBG-11-2}{RGB}{140,81,10}
\definecolor{BrBG-11-B}{RGB}{140,81,10}
\definecolor{BrBG-11-3}{RGB}{191,129,45}
\definecolor{BrBG-11-D}{RGB}{191,129,45}
\definecolor{BrBG-11-4}{RGB}{223,194,125}
\definecolor{BrBG-11-F}{RGB}{223,194,125}
\definecolor{BrBG-11-5}{RGB}{246,232,195}
\definecolor{BrBG-11-G}{RGB}{246,232,195}
\definecolor{BrBG-11-6}{RGB}{245,245,245}
\definecolor{BrBG-11-H}{RGB}{245,245,245}
\definecolor{BrBG-11-7}{RGB}{199,234,229}
\definecolor{BrBG-11-I}{RGB}{199,234,229}
\definecolor{BrBG-11-8}{RGB}{128,205,193}
\definecolor{BrBG-11-J}{RGB}{128,205,193}
\definecolor{BrBG-11-9}{RGB}{53,151,143}
\definecolor{BrBG-11-L}{RGB}{53,151,143}
\definecolor{BrBG-11-10}{RGB}{1,102,94}
\definecolor{BrBG-11-N}{RGB}{1,102,94}
\definecolor{BrBG-11-11}{RGB}{0,60,48}
\definecolor{BrBG-11-O}{RGB}{0,60,48}
\definecolor{PRGn-3-1}{RGB}{175,141,195}
\definecolor{PRGn-3-E}{RGB}{175,141,195}
\definecolor{PRGn-3-2}{RGB}{247,247,247}
\definecolor{PRGn-3-H}{RGB}{247,247,247}
\definecolor{PRGn-3-3}{RGB}{127,191,123}
\definecolor{PRGn-3-K}{RGB}{127,191,123}
\definecolor{PRGn-4-1}{RGB}{123,50,148}
\definecolor{PRGn-4-C}{RGB}{123,50,148}
\definecolor{PRGn-4-2}{RGB}{194,165,207}
\definecolor{PRGn-4-F}{RGB}{194,165,207}
\definecolor{PRGn-4-3}{RGB}{166,219,160}
\definecolor{PRGn-4-J}{RGB}{166,219,160}
\definecolor{PRGn-4-4}{RGB}{0,136,55}
\definecolor{PRGn-4-M}{RGB}{0,136,55}
\definecolor{PRGn-5-1}{RGB}{123,50,148}
\definecolor{PRGn-5-C}{RGB}{123,50,148}
\definecolor{PRGn-5-2}{RGB}{194,165,207}
\definecolor{PRGn-5-F}{RGB}{194,165,207}
\definecolor{PRGn-5-3}{RGB}{247,247,247}
\definecolor{PRGn-5-H}{RGB}{247,247,247}
\definecolor{PRGn-5-4}{RGB}{166,219,160}
\definecolor{PRGn-5-J}{RGB}{166,219,160}
\definecolor{PRGn-5-5}{RGB}{0,136,55}
\definecolor{PRGn-5-M}{RGB}{0,136,55}
\definecolor{PRGn-6-1}{RGB}{118,42,131}
\definecolor{PRGn-6-B}{RGB}{118,42,131}
\definecolor{PRGn-6-2}{RGB}{175,141,195}
\definecolor{PRGn-6-E}{RGB}{175,141,195}
\definecolor{PRGn-6-3}{RGB}{231,212,232}
\definecolor{PRGn-6-G}{RGB}{231,212,232}
\definecolor{PRGn-6-4}{RGB}{217,240,211}
\definecolor{PRGn-6-I}{RGB}{217,240,211}
\definecolor{PRGn-6-5}{RGB}{127,191,123}
\definecolor{PRGn-6-K}{RGB}{127,191,123}
\definecolor{PRGn-6-6}{RGB}{27,120,55}
\definecolor{PRGn-6-N}{RGB}{27,120,55}
\definecolor{PRGn-7-1}{RGB}{118,42,131}
\definecolor{PRGn-7-B}{RGB}{118,42,131}
\definecolor{PRGn-7-2}{RGB}{175,141,195}
\definecolor{PRGn-7-E}{RGB}{175,141,195}
\definecolor{PRGn-7-3}{RGB}{231,212,232}
\definecolor{PRGn-7-G}{RGB}{231,212,232}
\definecolor{PRGn-7-4}{RGB}{247,247,247}
\definecolor{PRGn-7-H}{RGB}{247,247,247}
\definecolor{PRGn-7-5}{RGB}{217,240,211}
\definecolor{PRGn-7-I}{RGB}{217,240,211}
\definecolor{PRGn-7-6}{RGB}{127,191,123}
\definecolor{PRGn-7-K}{RGB}{127,191,123}
\definecolor{PRGn-7-7}{RGB}{27,120,55}
\definecolor{PRGn-7-N}{RGB}{27,120,55}
\definecolor{PRGn-8-1}{RGB}{118,42,131}
\definecolor{PRGn-8-B}{RGB}{118,42,131}
\definecolor{PRGn-8-2}{RGB}{153,112,171}
\definecolor{PRGn-8-D}{RGB}{153,112,171}
\definecolor{PRGn-8-3}{RGB}{194,165,207}
\definecolor{PRGn-8-F}{RGB}{194,165,207}
\definecolor{PRGn-8-4}{RGB}{231,212,232}
\definecolor{PRGn-8-G}{RGB}{231,212,232}
\definecolor{PRGn-8-5}{RGB}{217,240,211}
\definecolor{PRGn-8-I}{RGB}{217,240,211}
\definecolor{PRGn-8-6}{RGB}{166,219,160}
\definecolor{PRGn-8-J}{RGB}{166,219,160}
\definecolor{PRGn-8-7}{RGB}{90,174,97}
\definecolor{PRGn-8-L}{RGB}{90,174,97}
\definecolor{PRGn-8-8}{RGB}{27,120,55}
\definecolor{PRGn-8-N}{RGB}{27,120,55}
\definecolor{PRGn-9-1}{RGB}{118,42,131}
\definecolor{PRGn-9-B}{RGB}{118,42,131}
\definecolor{PRGn-9-2}{RGB}{153,112,171}
\definecolor{PRGn-9-D}{RGB}{153,112,171}
\definecolor{PRGn-9-3}{RGB}{194,165,207}
\definecolor{PRGn-9-F}{RGB}{194,165,207}
\definecolor{PRGn-9-4}{RGB}{231,212,232}
\definecolor{PRGn-9-G}{RGB}{231,212,232}
\definecolor{PRGn-9-5}{RGB}{247,247,247}
\definecolor{PRGn-9-H}{RGB}{247,247,247}
\definecolor{PRGn-9-6}{RGB}{217,240,211}
\definecolor{PRGn-9-I}{RGB}{217,240,211}
\definecolor{PRGn-9-7}{RGB}{166,219,160}
\definecolor{PRGn-9-J}{RGB}{166,219,160}
\definecolor{PRGn-9-8}{RGB}{90,174,97}
\definecolor{PRGn-9-L}{RGB}{90,174,97}
\definecolor{PRGn-9-9}{RGB}{27,120,55}
\definecolor{PRGn-9-N}{RGB}{27,120,55}
\definecolor{PRGn-10-1}{RGB}{64,0,75}
\definecolor{PRGn-10-A}{RGB}{64,0,75}
\definecolor{PRGn-10-2}{RGB}{118,42,131}
\definecolor{PRGn-10-B}{RGB}{118,42,131}
\definecolor{PRGn-10-3}{RGB}{153,112,171}
\definecolor{PRGn-10-D}{RGB}{153,112,171}
\definecolor{PRGn-10-4}{RGB}{194,165,207}
\definecolor{PRGn-10-F}{RGB}{194,165,207}
\definecolor{PRGn-10-5}{RGB}{231,212,232}
\definecolor{PRGn-10-G}{RGB}{231,212,232}
\definecolor{PRGn-10-6}{RGB}{217,240,211}
\definecolor{PRGn-10-I}{RGB}{217,240,211}
\definecolor{PRGn-10-7}{RGB}{166,219,160}
\definecolor{PRGn-10-J}{RGB}{166,219,160}
\definecolor{PRGn-10-8}{RGB}{90,174,97}
\definecolor{PRGn-10-L}{RGB}{90,174,97}
\definecolor{PRGn-10-9}{RGB}{27,120,55}
\definecolor{PRGn-10-N}{RGB}{27,120,55}
\definecolor{PRGn-10-10}{RGB}{0,68,27}
\definecolor{PRGn-10-O}{RGB}{0,68,27}
\definecolor{PRGn-11-1}{RGB}{64,0,75}
\definecolor{PRGn-11-A}{RGB}{64,0,75}
\definecolor{PRGn-11-2}{RGB}{118,42,131}
\definecolor{PRGn-11-B}{RGB}{118,42,131}
\definecolor{PRGn-11-3}{RGB}{153,112,171}
\definecolor{PRGn-11-D}{RGB}{153,112,171}
\definecolor{PRGn-11-4}{RGB}{194,165,207}
\definecolor{PRGn-11-F}{RGB}{194,165,207}
\definecolor{PRGn-11-5}{RGB}{231,212,232}
\definecolor{PRGn-11-G}{RGB}{231,212,232}
\definecolor{PRGn-11-6}{RGB}{247,247,247}
\definecolor{PRGn-11-H}{RGB}{247,247,247}
\definecolor{PRGn-11-7}{RGB}{217,240,211}
\definecolor{PRGn-11-I}{RGB}{217,240,211}
\definecolor{PRGn-11-8}{RGB}{166,219,160}
\definecolor{PRGn-11-J}{RGB}{166,219,160}
\definecolor{PRGn-11-9}{RGB}{90,174,97}
\definecolor{PRGn-11-L}{RGB}{90,174,97}
\definecolor{PRGn-11-10}{RGB}{27,120,55}
\definecolor{PRGn-11-N}{RGB}{27,120,55}
\definecolor{PRGn-11-11}{RGB}{0,68,27}
\definecolor{PRGn-11-O}{RGB}{0,68,27}
\definecolor{PiYG-3-1}{RGB}{233,163,201}
\definecolor{PiYG-3-E}{RGB}{233,163,201}
\definecolor{PiYG-3-2}{RGB}{247,247,247}
\definecolor{PiYG-3-H}{RGB}{247,247,247}
\definecolor{PiYG-3-3}{RGB}{161,215,106}
\definecolor{PiYG-3-K}{RGB}{161,215,106}
\definecolor{PiYG-4-1}{RGB}{208,28,139}
\definecolor{PiYG-4-C}{RGB}{208,28,139}
\definecolor{PiYG-4-2}{RGB}{241,182,218}
\definecolor{PiYG-4-F}{RGB}{241,182,218}
\definecolor{PiYG-4-3}{RGB}{184,225,134}
\definecolor{PiYG-4-J}{RGB}{184,225,134}
\definecolor{PiYG-4-4}{RGB}{77,172,38}
\definecolor{PiYG-4-M}{RGB}{77,172,38}
\definecolor{PiYG-5-1}{RGB}{208,28,139}
\definecolor{PiYG-5-C}{RGB}{208,28,139}
\definecolor{PiYG-5-2}{RGB}{241,182,218}
\definecolor{PiYG-5-F}{RGB}{241,182,218}
\definecolor{PiYG-5-3}{RGB}{247,247,247}
\definecolor{PiYG-5-H}{RGB}{247,247,247}
\definecolor{PiYG-5-4}{RGB}{184,225,134}
\definecolor{PiYG-5-J}{RGB}{184,225,134}
\definecolor{PiYG-5-5}{RGB}{77,172,38}
\definecolor{PiYG-5-M}{RGB}{77,172,38}
\definecolor{PiYG-6-1}{RGB}{197,27,125}
\definecolor{PiYG-6-B}{RGB}{197,27,125}
\definecolor{PiYG-6-2}{RGB}{233,163,201}
\definecolor{PiYG-6-E}{RGB}{233,163,201}
\definecolor{PiYG-6-3}{RGB}{253,224,239}
\definecolor{PiYG-6-G}{RGB}{253,224,239}
\definecolor{PiYG-6-4}{RGB}{230,245,208}
\definecolor{PiYG-6-I}{RGB}{230,245,208}
\definecolor{PiYG-6-5}{RGB}{161,215,106}
\definecolor{PiYG-6-K}{RGB}{161,215,106}
\definecolor{PiYG-6-6}{RGB}{77,146,33}
\definecolor{PiYG-6-N}{RGB}{77,146,33}
\definecolor{PiYG-7-1}{RGB}{197,27,125}
\definecolor{PiYG-7-B}{RGB}{197,27,125}
\definecolor{PiYG-7-2}{RGB}{233,163,201}
\definecolor{PiYG-7-E}{RGB}{233,163,201}
\definecolor{PiYG-7-3}{RGB}{253,224,239}
\definecolor{PiYG-7-G}{RGB}{253,224,239}
\definecolor{PiYG-7-4}{RGB}{247,247,247}
\definecolor{PiYG-7-H}{RGB}{247,247,247}
\definecolor{PiYG-7-5}{RGB}{230,245,208}
\definecolor{PiYG-7-I}{RGB}{230,245,208}
\definecolor{PiYG-7-6}{RGB}{161,215,106}
\definecolor{PiYG-7-K}{RGB}{161,215,106}
\definecolor{PiYG-7-7}{RGB}{77,146,33}
\definecolor{PiYG-7-N}{RGB}{77,146,33}
\definecolor{PiYG-8-1}{RGB}{197,27,125}
\definecolor{PiYG-8-B}{RGB}{197,27,125}
\definecolor{PiYG-8-2}{RGB}{222,119,174}
\definecolor{PiYG-8-D}{RGB}{222,119,174}
\definecolor{PiYG-8-3}{RGB}{241,182,218}
\definecolor{PiYG-8-F}{RGB}{241,182,218}
\definecolor{PiYG-8-4}{RGB}{253,224,239}
\definecolor{PiYG-8-G}{RGB}{253,224,239}
\definecolor{PiYG-8-5}{RGB}{230,245,208}
\definecolor{PiYG-8-I}{RGB}{230,245,208}
\definecolor{PiYG-8-6}{RGB}{184,225,134}
\definecolor{PiYG-8-J}{RGB}{184,225,134}
\definecolor{PiYG-8-7}{RGB}{127,188,65}
\definecolor{PiYG-8-L}{RGB}{127,188,65}
\definecolor{PiYG-8-8}{RGB}{77,146,33}
\definecolor{PiYG-8-N}{RGB}{77,146,33}
\definecolor{PiYG-9-1}{RGB}{197,27,125}
\definecolor{PiYG-9-B}{RGB}{197,27,125}
\definecolor{PiYG-9-2}{RGB}{222,119,174}
\definecolor{PiYG-9-D}{RGB}{222,119,174}
\definecolor{PiYG-9-3}{RGB}{241,182,218}
\definecolor{PiYG-9-F}{RGB}{241,182,218}
\definecolor{PiYG-9-4}{RGB}{253,224,239}
\definecolor{PiYG-9-G}{RGB}{253,224,239}
\definecolor{PiYG-9-5}{RGB}{247,247,247}
\definecolor{PiYG-9-H}{RGB}{247,247,247}
\definecolor{PiYG-9-6}{RGB}{230,245,208}
\definecolor{PiYG-9-I}{RGB}{230,245,208}
\definecolor{PiYG-9-7}{RGB}{184,225,134}
\definecolor{PiYG-9-J}{RGB}{184,225,134}
\definecolor{PiYG-9-8}{RGB}{127,188,65}
\definecolor{PiYG-9-L}{RGB}{127,188,65}
\definecolor{PiYG-9-9}{RGB}{77,146,33}
\definecolor{PiYG-9-N}{RGB}{77,146,33}
\definecolor{PiYG-10-1}{RGB}{142,1,82}
\definecolor{PiYG-10-A}{RGB}{142,1,82}
\definecolor{PiYG-10-2}{RGB}{197,27,125}
\definecolor{PiYG-10-B}{RGB}{197,27,125}
\definecolor{PiYG-10-3}{RGB}{222,119,174}
\definecolor{PiYG-10-D}{RGB}{222,119,174}
\definecolor{PiYG-10-4}{RGB}{241,182,218}
\definecolor{PiYG-10-F}{RGB}{241,182,218}
\definecolor{PiYG-10-5}{RGB}{253,224,239}
\definecolor{PiYG-10-G}{RGB}{253,224,239}
\definecolor{PiYG-10-6}{RGB}{230,245,208}
\definecolor{PiYG-10-I}{RGB}{230,245,208}
\definecolor{PiYG-10-7}{RGB}{184,225,134}
\definecolor{PiYG-10-J}{RGB}{184,225,134}
\definecolor{PiYG-10-8}{RGB}{127,188,65}
\definecolor{PiYG-10-L}{RGB}{127,188,65}
\definecolor{PiYG-10-9}{RGB}{77,146,33}
\definecolor{PiYG-10-N}{RGB}{77,146,33}
\definecolor{PiYG-10-10}{RGB}{39,100,25}
\definecolor{PiYG-10-O}{RGB}{39,100,25}
\definecolor{PiYG-11-1}{RGB}{142,1,82}
\definecolor{PiYG-11-A}{RGB}{142,1,82}
\definecolor{PiYG-11-2}{RGB}{197,27,125}
\definecolor{PiYG-11-B}{RGB}{197,27,125}
\definecolor{PiYG-11-3}{RGB}{222,119,174}
\definecolor{PiYG-11-D}{RGB}{222,119,174}
\definecolor{PiYG-11-4}{RGB}{241,182,218}
\definecolor{PiYG-11-F}{RGB}{241,182,218}
\definecolor{PiYG-11-5}{RGB}{253,224,239}
\definecolor{PiYG-11-G}{RGB}{253,224,239}
\definecolor{PiYG-11-6}{RGB}{247,247,247}
\definecolor{PiYG-11-H}{RGB}{247,247,247}
\definecolor{PiYG-11-7}{RGB}{230,245,208}
\definecolor{PiYG-11-I}{RGB}{230,245,208}
\definecolor{PiYG-11-8}{RGB}{184,225,134}
\definecolor{PiYG-11-J}{RGB}{184,225,134}
\definecolor{PiYG-11-9}{RGB}{127,188,65}
\definecolor{PiYG-11-L}{RGB}{127,188,65}
\definecolor{PiYG-11-10}{RGB}{77,146,33}
\definecolor{PiYG-11-N}{RGB}{77,146,33}
\definecolor{PiYG-11-11}{RGB}{39,100,25}
\definecolor{PiYG-11-O}{RGB}{39,100,25}
\definecolor{RdBu-3-1}{RGB}{239,138,98}
\definecolor{RdBu-3-E}{RGB}{239,138,98}
\definecolor{RdBu-3-2}{RGB}{247,247,247}
\definecolor{RdBu-3-H}{RGB}{247,247,247}
\definecolor{RdBu-3-3}{RGB}{103,169,207}
\definecolor{RdBu-3-K}{RGB}{103,169,207}
\definecolor{RdBu-4-1}{RGB}{202,0,32}
\definecolor{RdBu-4-C}{RGB}{202,0,32}
\definecolor{RdBu-4-2}{RGB}{244,165,130}
\definecolor{RdBu-4-F}{RGB}{244,165,130}
\definecolor{RdBu-4-3}{RGB}{146,197,222}
\definecolor{RdBu-4-J}{RGB}{146,197,222}
\definecolor{RdBu-4-4}{RGB}{5,113,176}
\definecolor{RdBu-4-M}{RGB}{5,113,176}
\definecolor{RdBu-5-1}{RGB}{202,0,32}
\definecolor{RdBu-5-C}{RGB}{202,0,32}
\definecolor{RdBu-5-2}{RGB}{244,165,130}
\definecolor{RdBu-5-F}{RGB}{244,165,130}
\definecolor{RdBu-5-3}{RGB}{247,247,247}
\definecolor{RdBu-5-H}{RGB}{247,247,247}
\definecolor{RdBu-5-4}{RGB}{146,197,222}
\definecolor{RdBu-5-J}{RGB}{146,197,222}
\definecolor{RdBu-5-5}{RGB}{5,113,176}
\definecolor{RdBu-5-M}{RGB}{5,113,176}
\definecolor{RdBu-6-1}{RGB}{178,24,43}
\definecolor{RdBu-6-B}{RGB}{178,24,43}
\definecolor{RdBu-6-2}{RGB}{239,138,98}
\definecolor{RdBu-6-E}{RGB}{239,138,98}
\definecolor{RdBu-6-3}{RGB}{253,219,199}
\definecolor{RdBu-6-G}{RGB}{253,219,199}
\definecolor{RdBu-6-4}{RGB}{209,229,240}
\definecolor{RdBu-6-I}{RGB}{209,229,240}
\definecolor{RdBu-6-5}{RGB}{103,169,207}
\definecolor{RdBu-6-K}{RGB}{103,169,207}
\definecolor{RdBu-6-6}{RGB}{33,102,172}
\definecolor{RdBu-6-N}{RGB}{33,102,172}
\definecolor{RdBu-7-1}{RGB}{178,24,43}
\definecolor{RdBu-7-B}{RGB}{178,24,43}
\definecolor{RdBu-7-2}{RGB}{239,138,98}
\definecolor{RdBu-7-E}{RGB}{239,138,98}
\definecolor{RdBu-7-3}{RGB}{253,219,199}
\definecolor{RdBu-7-G}{RGB}{253,219,199}
\definecolor{RdBu-7-4}{RGB}{247,247,247}
\definecolor{RdBu-7-H}{RGB}{247,247,247}
\definecolor{RdBu-7-5}{RGB}{209,229,240}
\definecolor{RdBu-7-I}{RGB}{209,229,240}
\definecolor{RdBu-7-6}{RGB}{103,169,207}
\definecolor{RdBu-7-K}{RGB}{103,169,207}
\definecolor{RdBu-7-7}{RGB}{33,102,172}
\definecolor{RdBu-7-N}{RGB}{33,102,172}
\definecolor{RdBu-8-1}{RGB}{178,24,43}
\definecolor{RdBu-8-B}{RGB}{178,24,43}
\definecolor{RdBu-8-2}{RGB}{214,96,77}
\definecolor{RdBu-8-D}{RGB}{214,96,77}
\definecolor{RdBu-8-3}{RGB}{244,165,130}
\definecolor{RdBu-8-F}{RGB}{244,165,130}
\definecolor{RdBu-8-4}{RGB}{253,219,199}
\definecolor{RdBu-8-G}{RGB}{253,219,199}
\definecolor{RdBu-8-5}{RGB}{209,229,240}
\definecolor{RdBu-8-I}{RGB}{209,229,240}
\definecolor{RdBu-8-6}{RGB}{146,197,222}
\definecolor{RdBu-8-J}{RGB}{146,197,222}
\definecolor{RdBu-8-7}{RGB}{67,147,195}
\definecolor{RdBu-8-L}{RGB}{67,147,195}
\definecolor{RdBu-8-8}{RGB}{33,102,172}
\definecolor{RdBu-8-N}{RGB}{33,102,172}
\definecolor{RdBu-9-1}{RGB}{178,24,43}
\definecolor{RdBu-9-B}{RGB}{178,24,43}
\definecolor{RdBu-9-2}{RGB}{214,96,77}
\definecolor{RdBu-9-D}{RGB}{214,96,77}
\definecolor{RdBu-9-3}{RGB}{244,165,130}
\definecolor{RdBu-9-F}{RGB}{244,165,130}
\definecolor{RdBu-9-4}{RGB}{253,219,199}
\definecolor{RdBu-9-G}{RGB}{253,219,199}
\definecolor{RdBu-9-5}{RGB}{247,247,247}
\definecolor{RdBu-9-H}{RGB}{247,247,247}
\definecolor{RdBu-9-6}{RGB}{209,229,240}
\definecolor{RdBu-9-I}{RGB}{209,229,240}
\definecolor{RdBu-9-7}{RGB}{146,197,222}
\definecolor{RdBu-9-J}{RGB}{146,197,222}
\definecolor{RdBu-9-8}{RGB}{67,147,195}
\definecolor{RdBu-9-L}{RGB}{67,147,195}
\definecolor{RdBu-9-9}{RGB}{33,102,172}
\definecolor{RdBu-9-N}{RGB}{33,102,172}
\definecolor{RdBu-10-1}{RGB}{103,0,31}
\definecolor{RdBu-10-A}{RGB}{103,0,31}
\definecolor{RdBu-10-2}{RGB}{178,24,43}
\definecolor{RdBu-10-B}{RGB}{178,24,43}
\definecolor{RdBu-10-3}{RGB}{214,96,77}
\definecolor{RdBu-10-D}{RGB}{214,96,77}
\definecolor{RdBu-10-4}{RGB}{244,165,130}
\definecolor{RdBu-10-F}{RGB}{244,165,130}
\definecolor{RdBu-10-5}{RGB}{253,219,199}
\definecolor{RdBu-10-G}{RGB}{253,219,199}
\definecolor{RdBu-10-6}{RGB}{209,229,240}
\definecolor{RdBu-10-I}{RGB}{209,229,240}
\definecolor{RdBu-10-7}{RGB}{146,197,222}
\definecolor{RdBu-10-J}{RGB}{146,197,222}
\definecolor{RdBu-10-8}{RGB}{67,147,195}
\definecolor{RdBu-10-L}{RGB}{67,147,195}
\definecolor{RdBu-10-9}{RGB}{33,102,172}
\definecolor{RdBu-10-N}{RGB}{33,102,172}
\definecolor{RdBu-10-10}{RGB}{5,48,97}
\definecolor{RdBu-10-O}{RGB}{5,48,97}
\definecolor{RdBu-11-1}{RGB}{103,0,31}
\definecolor{RdBu-11-A}{RGB}{103,0,31}
\definecolor{RdBu-11-2}{RGB}{178,24,43}
\definecolor{RdBu-11-B}{RGB}{178,24,43}
\definecolor{RdBu-11-3}{RGB}{214,96,77}
\definecolor{RdBu-11-D}{RGB}{214,96,77}
\definecolor{RdBu-11-4}{RGB}{244,165,130}
\definecolor{RdBu-11-F}{RGB}{244,165,130}
\definecolor{RdBu-11-5}{RGB}{253,219,199}
\definecolor{RdBu-11-G}{RGB}{253,219,199}
\definecolor{RdBu-11-6}{RGB}{247,247,247}
\definecolor{RdBu-11-H}{RGB}{247,247,247}
\definecolor{RdBu-11-7}{RGB}{209,229,240}
\definecolor{RdBu-11-I}{RGB}{209,229,240}
\definecolor{RdBu-11-8}{RGB}{146,197,222}
\definecolor{RdBu-11-J}{RGB}{146,197,222}
\definecolor{RdBu-11-9}{RGB}{67,147,195}
\definecolor{RdBu-11-L}{RGB}{67,147,195}
\definecolor{RdBu-11-10}{RGB}{33,102,172}
\definecolor{RdBu-11-N}{RGB}{33,102,172}
\definecolor{RdBu-11-11}{RGB}{5,48,97}
\definecolor{RdBu-11-O}{RGB}{5,48,97}
\definecolor{RdGy-3-1}{RGB}{239,138,98}
\definecolor{RdGy-3-E}{RGB}{239,138,98}
\definecolor{RdGy-3-2}{RGB}{255,255,255}
\definecolor{RdGy-3-H}{RGB}{255,255,255}
\definecolor{RdGy-3-3}{RGB}{153,153,153}
\definecolor{RdGy-3-K}{RGB}{153,153,153}
\definecolor{RdGy-4-1}{RGB}{202,0,32}
\definecolor{RdGy-4-C}{RGB}{202,0,32}
\definecolor{RdGy-4-2}{RGB}{244,165,130}
\definecolor{RdGy-4-F}{RGB}{244,165,130}
\definecolor{RdGy-4-3}{RGB}{186,186,186}
\definecolor{RdGy-4-J}{RGB}{186,186,186}
\definecolor{RdGy-4-4}{RGB}{64,64,64}
\definecolor{RdGy-4-M}{RGB}{64,64,64}
\definecolor{RdGy-5-1}{RGB}{202,0,32}
\definecolor{RdGy-5-C}{RGB}{202,0,32}
\definecolor{RdGy-5-2}{RGB}{244,165,130}
\definecolor{RdGy-5-F}{RGB}{244,165,130}
\definecolor{RdGy-5-3}{RGB}{255,255,255}
\definecolor{RdGy-5-H}{RGB}{255,255,255}
\definecolor{RdGy-5-4}{RGB}{186,186,186}
\definecolor{RdGy-5-J}{RGB}{186,186,186}
\definecolor{RdGy-5-5}{RGB}{64,64,64}
\definecolor{RdGy-5-M}{RGB}{64,64,64}
\definecolor{RdGy-6-1}{RGB}{178,24,43}
\definecolor{RdGy-6-B}{RGB}{178,24,43}
\definecolor{RdGy-6-2}{RGB}{239,138,98}
\definecolor{RdGy-6-E}{RGB}{239,138,98}
\definecolor{RdGy-6-3}{RGB}{253,219,199}
\definecolor{RdGy-6-G}{RGB}{253,219,199}
\definecolor{RdGy-6-4}{RGB}{224,224,224}
\definecolor{RdGy-6-I}{RGB}{224,224,224}
\definecolor{RdGy-6-5}{RGB}{153,153,153}
\definecolor{RdGy-6-K}{RGB}{153,153,153}
\definecolor{RdGy-6-6}{RGB}{77,77,77}
\definecolor{RdGy-6-N}{RGB}{77,77,77}
\definecolor{RdGy-7-1}{RGB}{178,24,43}
\definecolor{RdGy-7-B}{RGB}{178,24,43}
\definecolor{RdGy-7-2}{RGB}{239,138,98}
\definecolor{RdGy-7-E}{RGB}{239,138,98}
\definecolor{RdGy-7-3}{RGB}{253,219,199}
\definecolor{RdGy-7-G}{RGB}{253,219,199}
\definecolor{RdGy-7-4}{RGB}{255,255,255}
\definecolor{RdGy-7-H}{RGB}{255,255,255}
\definecolor{RdGy-7-5}{RGB}{224,224,224}
\definecolor{RdGy-7-I}{RGB}{224,224,224}
\definecolor{RdGy-7-6}{RGB}{153,153,153}
\definecolor{RdGy-7-K}{RGB}{153,153,153}
\definecolor{RdGy-7-7}{RGB}{77,77,77}
\definecolor{RdGy-7-N}{RGB}{77,77,77}
\definecolor{RdGy-8-1}{RGB}{178,24,43}
\definecolor{RdGy-8-B}{RGB}{178,24,43}
\definecolor{RdGy-8-2}{RGB}{214,96,77}
\definecolor{RdGy-8-D}{RGB}{214,96,77}
\definecolor{RdGy-8-3}{RGB}{244,165,130}
\definecolor{RdGy-8-F}{RGB}{244,165,130}
\definecolor{RdGy-8-4}{RGB}{253,219,199}
\definecolor{RdGy-8-G}{RGB}{253,219,199}
\definecolor{RdGy-8-5}{RGB}{224,224,224}
\definecolor{RdGy-8-I}{RGB}{224,224,224}
\definecolor{RdGy-8-6}{RGB}{186,186,186}
\definecolor{RdGy-8-J}{RGB}{186,186,186}
\definecolor{RdGy-8-7}{RGB}{135,135,135}
\definecolor{RdGy-8-L}{RGB}{135,135,135}
\definecolor{RdGy-8-8}{RGB}{77,77,77}
\definecolor{RdGy-8-N}{RGB}{77,77,77}
\definecolor{RdGy-9-1}{RGB}{178,24,43}
\definecolor{RdGy-9-B}{RGB}{178,24,43}
\definecolor{RdGy-9-2}{RGB}{214,96,77}
\definecolor{RdGy-9-D}{RGB}{214,96,77}
\definecolor{RdGy-9-3}{RGB}{244,165,130}
\definecolor{RdGy-9-F}{RGB}{244,165,130}
\definecolor{RdGy-9-4}{RGB}{253,219,199}
\definecolor{RdGy-9-G}{RGB}{253,219,199}
\definecolor{RdGy-9-5}{RGB}{255,255,255}
\definecolor{RdGy-9-H}{RGB}{255,255,255}
\definecolor{RdGy-9-6}{RGB}{224,224,224}
\definecolor{RdGy-9-I}{RGB}{224,224,224}
\definecolor{RdGy-9-7}{RGB}{186,186,186}
\definecolor{RdGy-9-J}{RGB}{186,186,186}
\definecolor{RdGy-9-8}{RGB}{135,135,135}
\definecolor{RdGy-9-L}{RGB}{135,135,135}
\definecolor{RdGy-9-9}{RGB}{77,77,77}
\definecolor{RdGy-9-N}{RGB}{77,77,77}
\definecolor{RdGy-10-1}{RGB}{103,0,31}
\definecolor{RdGy-10-A}{RGB}{103,0,31}
\definecolor{RdGy-10-2}{RGB}{178,24,43}
\definecolor{RdGy-10-B}{RGB}{178,24,43}
\definecolor{RdGy-10-3}{RGB}{214,96,77}
\definecolor{RdGy-10-D}{RGB}{214,96,77}
\definecolor{RdGy-10-4}{RGB}{244,165,130}
\definecolor{RdGy-10-F}{RGB}{244,165,130}
\definecolor{RdGy-10-5}{RGB}{253,219,199}
\definecolor{RdGy-10-G}{RGB}{253,219,199}
\definecolor{RdGy-10-6}{RGB}{224,224,224}
\definecolor{RdGy-10-I}{RGB}{224,224,224}
\definecolor{RdGy-10-7}{RGB}{186,186,186}
\definecolor{RdGy-10-J}{RGB}{186,186,186}
\definecolor{RdGy-10-8}{RGB}{135,135,135}
\definecolor{RdGy-10-L}{RGB}{135,135,135}
\definecolor{RdGy-10-9}{RGB}{77,77,77}
\definecolor{RdGy-10-N}{RGB}{77,77,77}
\definecolor{RdGy-10-10}{RGB}{26,26,26}
\definecolor{RdGy-10-O}{RGB}{26,26,26}
\definecolor{RdGy-11-1}{RGB}{103,0,31}
\definecolor{RdGy-11-A}{RGB}{103,0,31}
\definecolor{RdGy-11-2}{RGB}{178,24,43}
\definecolor{RdGy-11-B}{RGB}{178,24,43}
\definecolor{RdGy-11-3}{RGB}{214,96,77}
\definecolor{RdGy-11-D}{RGB}{214,96,77}
\definecolor{RdGy-11-4}{RGB}{244,165,130}
\definecolor{RdGy-11-F}{RGB}{244,165,130}
\definecolor{RdGy-11-5}{RGB}{253,219,199}
\definecolor{RdGy-11-G}{RGB}{253,219,199}
\definecolor{RdGy-11-6}{RGB}{255,255,255}
\definecolor{RdGy-11-H}{RGB}{255,255,255}
\definecolor{RdGy-11-7}{RGB}{224,224,224}
\definecolor{RdGy-11-I}{RGB}{224,224,224}
\definecolor{RdGy-11-8}{RGB}{186,186,186}
\definecolor{RdGy-11-J}{RGB}{186,186,186}
\definecolor{RdGy-11-9}{RGB}{135,135,135}
\definecolor{RdGy-11-L}{RGB}{135,135,135}
\definecolor{RdGy-11-10}{RGB}{77,77,77}
\definecolor{RdGy-11-N}{RGB}{77,77,77}
\definecolor{RdGy-11-11}{RGB}{26,26,26}
\definecolor{RdGy-11-O}{RGB}{26,26,26}
\definecolor{RdYlBu-3-1}{RGB}{252,141,89}
\definecolor{RdYlBu-3-E}{RGB}{252,141,89}
\definecolor{RdYlBu-3-2}{RGB}{255,255,191}
\definecolor{RdYlBu-3-H}{RGB}{255,255,191}
\definecolor{RdYlBu-3-3}{RGB}{145,191,219}
\definecolor{RdYlBu-3-K}{RGB}{145,191,219}
\definecolor{RdYlBu-4-1}{RGB}{215,25,28}
\definecolor{RdYlBu-4-C}{RGB}{215,25,28}
\definecolor{RdYlBu-4-2}{RGB}{253,174,97}
\definecolor{RdYlBu-4-F}{RGB}{253,174,97}
\definecolor{RdYlBu-4-3}{RGB}{171,217,233}
\definecolor{RdYlBu-4-J}{RGB}{171,217,233}
\definecolor{RdYlBu-4-4}{RGB}{44,123,182}
\definecolor{RdYlBu-4-M}{RGB}{44,123,182}
\definecolor{RdYlBu-5-1}{RGB}{215,25,28}
\definecolor{RdYlBu-5-C}{RGB}{215,25,28}
\definecolor{RdYlBu-5-2}{RGB}{253,174,97}
\definecolor{RdYlBu-5-F}{RGB}{253,174,97}
\definecolor{RdYlBu-5-3}{RGB}{255,255,191}
\definecolor{RdYlBu-5-H}{RGB}{255,255,191}
\definecolor{RdYlBu-5-4}{RGB}{171,217,233}
\definecolor{RdYlBu-5-J}{RGB}{171,217,233}
\definecolor{RdYlBu-5-5}{RGB}{44,123,182}
\definecolor{RdYlBu-5-M}{RGB}{44,123,182}
\definecolor{RdYlBu-6-1}{RGB}{215,48,39}
\definecolor{RdYlBu-6-B}{RGB}{215,48,39}
\definecolor{RdYlBu-6-2}{RGB}{252,141,89}
\definecolor{RdYlBu-6-E}{RGB}{252,141,89}
\definecolor{RdYlBu-6-3}{RGB}{254,224,144}
\definecolor{RdYlBu-6-G}{RGB}{254,224,144}
\definecolor{RdYlBu-6-4}{RGB}{224,243,248}
\definecolor{RdYlBu-6-I}{RGB}{224,243,248}
\definecolor{RdYlBu-6-5}{RGB}{145,191,219}
\definecolor{RdYlBu-6-K}{RGB}{145,191,219}
\definecolor{RdYlBu-6-6}{RGB}{69,117,180}
\definecolor{RdYlBu-6-N}{RGB}{69,117,180}
\definecolor{RdYlBu-7-1}{RGB}{215,48,39}
\definecolor{RdYlBu-7-B}{RGB}{215,48,39}
\definecolor{RdYlBu-7-2}{RGB}{252,141,89}
\definecolor{RdYlBu-7-E}{RGB}{252,141,89}
\definecolor{RdYlBu-7-3}{RGB}{254,224,144}
\definecolor{RdYlBu-7-G}{RGB}{254,224,144}
\definecolor{RdYlBu-7-4}{RGB}{255,255,191}
\definecolor{RdYlBu-7-H}{RGB}{255,255,191}
\definecolor{RdYlBu-7-5}{RGB}{224,243,248}
\definecolor{RdYlBu-7-I}{RGB}{224,243,248}
\definecolor{RdYlBu-7-6}{RGB}{145,191,219}
\definecolor{RdYlBu-7-K}{RGB}{145,191,219}
\definecolor{RdYlBu-7-7}{RGB}{69,117,180}
\definecolor{RdYlBu-7-N}{RGB}{69,117,180}
\definecolor{RdYlBu-8-1}{RGB}{215,48,39}
\definecolor{RdYlBu-8-B}{RGB}{215,48,39}
\definecolor{RdYlBu-8-2}{RGB}{244,109,67}
\definecolor{RdYlBu-8-D}{RGB}{244,109,67}
\definecolor{RdYlBu-8-3}{RGB}{253,174,97}
\definecolor{RdYlBu-8-F}{RGB}{253,174,97}
\definecolor{RdYlBu-8-4}{RGB}{254,224,144}
\definecolor{RdYlBu-8-G}{RGB}{254,224,144}
\definecolor{RdYlBu-8-5}{RGB}{224,243,248}
\definecolor{RdYlBu-8-I}{RGB}{224,243,248}
\definecolor{RdYlBu-8-6}{RGB}{171,217,233}
\definecolor{RdYlBu-8-J}{RGB}{171,217,233}
\definecolor{RdYlBu-8-7}{RGB}{116,173,209}
\definecolor{RdYlBu-8-L}{RGB}{116,173,209}
\definecolor{RdYlBu-8-8}{RGB}{69,117,180}
\definecolor{RdYlBu-8-N}{RGB}{69,117,180}
\definecolor{RdYlBu-9-1}{RGB}{215,48,39}
\definecolor{RdYlBu-9-B}{RGB}{215,48,39}
\definecolor{RdYlBu-9-2}{RGB}{244,109,67}
\definecolor{RdYlBu-9-D}{RGB}{244,109,67}
\definecolor{RdYlBu-9-3}{RGB}{253,174,97}
\definecolor{RdYlBu-9-F}{RGB}{253,174,97}
\definecolor{RdYlBu-9-4}{RGB}{254,224,144}
\definecolor{RdYlBu-9-G}{RGB}{254,224,144}
\definecolor{RdYlBu-9-5}{RGB}{255,255,191}
\definecolor{RdYlBu-9-H}{RGB}{255,255,191}
\definecolor{RdYlBu-9-6}{RGB}{224,243,248}
\definecolor{RdYlBu-9-I}{RGB}{224,243,248}
\definecolor{RdYlBu-9-7}{RGB}{171,217,233}
\definecolor{RdYlBu-9-J}{RGB}{171,217,233}
\definecolor{RdYlBu-9-8}{RGB}{116,173,209}
\definecolor{RdYlBu-9-L}{RGB}{116,173,209}
\definecolor{RdYlBu-9-9}{RGB}{69,117,180}
\definecolor{RdYlBu-9-N}{RGB}{69,117,180}
\definecolor{RdYlBu-10-1}{RGB}{165,0,38}
\definecolor{RdYlBu-10-A}{RGB}{165,0,38}
\definecolor{RdYlBu-10-2}{RGB}{215,48,39}
\definecolor{RdYlBu-10-B}{RGB}{215,48,39}
\definecolor{RdYlBu-10-3}{RGB}{244,109,67}
\definecolor{RdYlBu-10-D}{RGB}{244,109,67}
\definecolor{RdYlBu-10-4}{RGB}{253,174,97}
\definecolor{RdYlBu-10-F}{RGB}{253,174,97}
\definecolor{RdYlBu-10-5}{RGB}{254,224,144}
\definecolor{RdYlBu-10-G}{RGB}{254,224,144}
\definecolor{RdYlBu-10-6}{RGB}{224,243,248}
\definecolor{RdYlBu-10-I}{RGB}{224,243,248}
\definecolor{RdYlBu-10-7}{RGB}{171,217,233}
\definecolor{RdYlBu-10-J}{RGB}{171,217,233}
\definecolor{RdYlBu-10-8}{RGB}{116,173,209}
\definecolor{RdYlBu-10-L}{RGB}{116,173,209}
\definecolor{RdYlBu-10-9}{RGB}{69,117,180}
\definecolor{RdYlBu-10-N}{RGB}{69,117,180}
\definecolor{RdYlBu-10-10}{RGB}{49,54,149}
\definecolor{RdYlBu-10-O}{RGB}{49,54,149}
\definecolor{RdYlBu-11-1}{RGB}{165,0,38}
\definecolor{RdYlBu-11-A}{RGB}{165,0,38}
\definecolor{RdYlBu-11-2}{RGB}{215,48,39}
\definecolor{RdYlBu-11-B}{RGB}{215,48,39}
\definecolor{RdYlBu-11-3}{RGB}{244,109,67}
\definecolor{RdYlBu-11-D}{RGB}{244,109,67}
\definecolor{RdYlBu-11-4}{RGB}{253,174,97}
\definecolor{RdYlBu-11-F}{RGB}{253,174,97}
\definecolor{RdYlBu-11-5}{RGB}{254,224,144}
\definecolor{RdYlBu-11-G}{RGB}{254,224,144}
\definecolor{RdYlBu-11-6}{RGB}{255,255,191}
\definecolor{RdYlBu-11-H}{RGB}{255,255,191}
\definecolor{RdYlBu-11-7}{RGB}{224,243,248}
\definecolor{RdYlBu-11-I}{RGB}{224,243,248}
\definecolor{RdYlBu-11-8}{RGB}{171,217,233}
\definecolor{RdYlBu-11-J}{RGB}{171,217,233}
\definecolor{RdYlBu-11-9}{RGB}{116,173,209}
\definecolor{RdYlBu-11-L}{RGB}{116,173,209}
\definecolor{RdYlBu-11-10}{RGB}{69,117,180}
\definecolor{RdYlBu-11-N}{RGB}{69,117,180}
\definecolor{RdYlBu-11-11}{RGB}{49,54,149}
\definecolor{RdYlBu-11-O}{RGB}{49,54,149}
\definecolor{Spectral-3-1}{RGB}{252,141,89}
\definecolor{Spectral-3-E}{RGB}{252,141,89}
\definecolor{Spectral-3-2}{RGB}{255,255,191}
\definecolor{Spectral-3-H}{RGB}{255,255,191}
\definecolor{Spectral-3-3}{RGB}{153,213,148}
\definecolor{Spectral-3-K}{RGB}{153,213,148}
\definecolor{Spectral-4-1}{RGB}{215,25,28}
\definecolor{Spectral-4-C}{RGB}{215,25,28}
\definecolor{Spectral-4-2}{RGB}{253,174,97}
\definecolor{Spectral-4-F}{RGB}{253,174,97}
\definecolor{Spectral-4-3}{RGB}{171,221,164}
\definecolor{Spectral-4-J}{RGB}{171,221,164}
\definecolor{Spectral-4-4}{RGB}{43,131,186}
\definecolor{Spectral-4-M}{RGB}{43,131,186}
\definecolor{Spectral-5-1}{RGB}{215,25,28}
\definecolor{Spectral-5-C}{RGB}{215,25,28}
\definecolor{Spectral-5-2}{RGB}{253,174,97}
\definecolor{Spectral-5-F}{RGB}{253,174,97}
\definecolor{Spectral-5-3}{RGB}{255,255,191}
\definecolor{Spectral-5-H}{RGB}{255,255,191}
\definecolor{Spectral-5-4}{RGB}{171,221,164}
\definecolor{Spectral-5-J}{RGB}{171,221,164}
\definecolor{Spectral-5-5}{RGB}{43,131,186}
\definecolor{Spectral-5-M}{RGB}{43,131,186}
\definecolor{Spectral-6-1}{RGB}{213,62,79}
\definecolor{Spectral-6-B}{RGB}{213,62,79}
\definecolor{Spectral-6-2}{RGB}{252,141,89}
\definecolor{Spectral-6-E}{RGB}{252,141,89}
\definecolor{Spectral-6-3}{RGB}{254,224,139}
\definecolor{Spectral-6-G}{RGB}{254,224,139}
\definecolor{Spectral-6-4}{RGB}{230,245,152}
\definecolor{Spectral-6-I}{RGB}{230,245,152}
\definecolor{Spectral-6-5}{RGB}{153,213,148}
\definecolor{Spectral-6-K}{RGB}{153,213,148}
\definecolor{Spectral-6-6}{RGB}{50,136,189}
\definecolor{Spectral-6-N}{RGB}{50,136,189}
\definecolor{Spectral-7-1}{RGB}{213,62,79}
\definecolor{Spectral-7-B}{RGB}{213,62,79}
\definecolor{Spectral-7-2}{RGB}{252,141,89}
\definecolor{Spectral-7-E}{RGB}{252,141,89}
\definecolor{Spectral-7-3}{RGB}{254,224,139}
\definecolor{Spectral-7-G}{RGB}{254,224,139}
\definecolor{Spectral-7-4}{RGB}{255,255,191}
\definecolor{Spectral-7-H}{RGB}{255,255,191}
\definecolor{Spectral-7-5}{RGB}{230,245,152}
\definecolor{Spectral-7-I}{RGB}{230,245,152}
\definecolor{Spectral-7-6}{RGB}{153,213,148}
\definecolor{Spectral-7-K}{RGB}{153,213,148}
\definecolor{Spectral-7-7}{RGB}{50,136,189}
\definecolor{Spectral-7-N}{RGB}{50,136,189}
\definecolor{Spectral-8-1}{RGB}{213,62,79}
\definecolor{Spectral-8-B}{RGB}{213,62,79}
\definecolor{Spectral-8-2}{RGB}{244,109,67}
\definecolor{Spectral-8-D}{RGB}{244,109,67}
\definecolor{Spectral-8-3}{RGB}{253,174,97}
\definecolor{Spectral-8-F}{RGB}{253,174,97}
\definecolor{Spectral-8-4}{RGB}{254,224,139}
\definecolor{Spectral-8-G}{RGB}{254,224,139}
\definecolor{Spectral-8-5}{RGB}{230,245,152}
\definecolor{Spectral-8-I}{RGB}{230,245,152}
\definecolor{Spectral-8-6}{RGB}{171,221,164}
\definecolor{Spectral-8-J}{RGB}{171,221,164}
\definecolor{Spectral-8-7}{RGB}{102,194,165}
\definecolor{Spectral-8-L}{RGB}{102,194,165}
\definecolor{Spectral-8-8}{RGB}{50,136,189}
\definecolor{Spectral-8-N}{RGB}{50,136,189}
\definecolor{Spectral-9-1}{RGB}{213,62,79}
\definecolor{Spectral-9-B}{RGB}{213,62,79}
\definecolor{Spectral-9-2}{RGB}{244,109,67}
\definecolor{Spectral-9-D}{RGB}{244,109,67}
\definecolor{Spectral-9-3}{RGB}{253,174,97}
\definecolor{Spectral-9-F}{RGB}{253,174,97}
\definecolor{Spectral-9-4}{RGB}{254,224,139}
\definecolor{Spectral-9-G}{RGB}{254,224,139}
\definecolor{Spectral-9-5}{RGB}{255,255,191}
\definecolor{Spectral-9-H}{RGB}{255,255,191}
\definecolor{Spectral-9-6}{RGB}{230,245,152}
\definecolor{Spectral-9-I}{RGB}{230,245,152}
\definecolor{Spectral-9-7}{RGB}{171,221,164}
\definecolor{Spectral-9-J}{RGB}{171,221,164}
\definecolor{Spectral-9-8}{RGB}{102,194,165}
\definecolor{Spectral-9-L}{RGB}{102,194,165}
\definecolor{Spectral-9-9}{RGB}{50,136,189}
\definecolor{Spectral-9-N}{RGB}{50,136,189}
\definecolor{Spectral-10-1}{RGB}{158,1,66}
\definecolor{Spectral-10-A}{RGB}{158,1,66}
\definecolor{Spectral-10-2}{RGB}{213,62,79}
\definecolor{Spectral-10-B}{RGB}{213,62,79}
\definecolor{Spectral-10-3}{RGB}{244,109,67}
\definecolor{Spectral-10-D}{RGB}{244,109,67}
\definecolor{Spectral-10-4}{RGB}{253,174,97}
\definecolor{Spectral-10-F}{RGB}{253,174,97}
\definecolor{Spectral-10-5}{RGB}{254,224,139}
\definecolor{Spectral-10-G}{RGB}{254,224,139}
\definecolor{Spectral-10-6}{RGB}{230,245,152}
\definecolor{Spectral-10-I}{RGB}{230,245,152}
\definecolor{Spectral-10-7}{RGB}{171,221,164}
\definecolor{Spectral-10-J}{RGB}{171,221,164}
\definecolor{Spectral-10-8}{RGB}{102,194,165}
\definecolor{Spectral-10-L}{RGB}{102,194,165}
\definecolor{Spectral-10-9}{RGB}{50,136,189}
\definecolor{Spectral-10-N}{RGB}{50,136,189}
\definecolor{Spectral-10-10}{RGB}{94,79,162}
\definecolor{Spectral-10-O}{RGB}{94,79,162}
\definecolor{Spectral-11-1}{RGB}{158,1,66}
\definecolor{Spectral-11-A}{RGB}{158,1,66}
\definecolor{Spectral-11-2}{RGB}{213,62,79}
\definecolor{Spectral-11-B}{RGB}{213,62,79}
\definecolor{Spectral-11-3}{RGB}{244,109,67}
\definecolor{Spectral-11-D}{RGB}{244,109,67}
\definecolor{Spectral-11-4}{RGB}{253,174,97}
\definecolor{Spectral-11-F}{RGB}{253,174,97}
\definecolor{Spectral-11-5}{RGB}{254,224,139}
\definecolor{Spectral-11-G}{RGB}{254,224,139}
\definecolor{Spectral-11-6}{RGB}{255,255,191}
\definecolor{Spectral-11-H}{RGB}{255,255,191}
\definecolor{Spectral-11-7}{RGB}{230,245,152}
\definecolor{Spectral-11-I}{RGB}{230,245,152}
\definecolor{Spectral-11-8}{RGB}{171,221,164}
\definecolor{Spectral-11-J}{RGB}{171,221,164}
\definecolor{Spectral-11-9}{RGB}{102,194,165}
\definecolor{Spectral-11-L}{RGB}{102,194,165}
\definecolor{Spectral-11-10}{RGB}{50,136,189}
\definecolor{Spectral-11-N}{RGB}{50,136,189}
\definecolor{Spectral-11-11}{RGB}{94,79,162}
\definecolor{Spectral-11-O}{RGB}{94,79,162}
\definecolor{RdYlGn-3-1}{RGB}{252,141,89}
\definecolor{RdYlGn-3-E}{RGB}{252,141,89}
\definecolor{RdYlGn-3-2}{RGB}{255,255,191}
\definecolor{RdYlGn-3-H}{RGB}{255,255,191}
\definecolor{RdYlGn-3-3}{RGB}{145,207,96}
\definecolor{RdYlGn-3-K}{RGB}{145,207,96}
\definecolor{RdYlGn-4-1}{RGB}{215,25,28}
\definecolor{RdYlGn-4-C}{RGB}{215,25,28}
\definecolor{RdYlGn-4-2}{RGB}{253,174,97}
\definecolor{RdYlGn-4-F}{RGB}{253,174,97}
\definecolor{RdYlGn-4-3}{RGB}{166,217,106}
\definecolor{RdYlGn-4-J}{RGB}{166,217,106}
\definecolor{RdYlGn-4-4}{RGB}{26,150,65}
\definecolor{RdYlGn-4-M}{RGB}{26,150,65}
\definecolor{RdYlGn-5-1}{RGB}{215,25,28}
\definecolor{RdYlGn-5-C}{RGB}{215,25,28}
\definecolor{RdYlGn-5-2}{RGB}{253,174,97}
\definecolor{RdYlGn-5-F}{RGB}{253,174,97}
\definecolor{RdYlGn-5-3}{RGB}{255,255,191}
\definecolor{RdYlGn-5-H}{RGB}{255,255,191}
\definecolor{RdYlGn-5-4}{RGB}{166,217,106}
\definecolor{RdYlGn-5-J}{RGB}{166,217,106}
\definecolor{RdYlGn-5-5}{RGB}{26,150,65}
\definecolor{RdYlGn-5-M}{RGB}{26,150,65}
\definecolor{RdYlGn-6-1}{RGB}{215,48,39}
\definecolor{RdYlGn-6-B}{RGB}{215,48,39}
\definecolor{RdYlGn-6-2}{RGB}{252,141,89}
\definecolor{RdYlGn-6-E}{RGB}{252,141,89}
\definecolor{RdYlGn-6-3}{RGB}{254,224,139}
\definecolor{RdYlGn-6-G}{RGB}{254,224,139}
\definecolor{RdYlGn-6-4}{RGB}{217,239,139}
\definecolor{RdYlGn-6-I}{RGB}{217,239,139}
\definecolor{RdYlGn-6-5}{RGB}{145,207,96}
\definecolor{RdYlGn-6-K}{RGB}{145,207,96}
\definecolor{RdYlGn-6-6}{RGB}{26,152,80}
\definecolor{RdYlGn-6-N}{RGB}{26,152,80}
\definecolor{RdYlGn-7-1}{RGB}{215,48,39}
\definecolor{RdYlGn-7-B}{RGB}{215,48,39}
\definecolor{RdYlGn-7-2}{RGB}{252,141,89}
\definecolor{RdYlGn-7-E}{RGB}{252,141,89}
\definecolor{RdYlGn-7-3}{RGB}{254,224,139}
\definecolor{RdYlGn-7-G}{RGB}{254,224,139}
\definecolor{RdYlGn-7-4}{RGB}{255,255,191}
\definecolor{RdYlGn-7-H}{RGB}{255,255,191}
\definecolor{RdYlGn-7-5}{RGB}{217,239,139}
\definecolor{RdYlGn-7-I}{RGB}{217,239,139}
\definecolor{RdYlGn-7-6}{RGB}{145,207,96}
\definecolor{RdYlGn-7-K}{RGB}{145,207,96}
\definecolor{RdYlGn-7-7}{RGB}{26,152,80}
\definecolor{RdYlGn-7-N}{RGB}{26,152,80}
\definecolor{RdYlGn-8-1}{RGB}{215,48,39}
\definecolor{RdYlGn-8-B}{RGB}{215,48,39}
\definecolor{RdYlGn-8-2}{RGB}{244,109,67}
\definecolor{RdYlGn-8-D}{RGB}{244,109,67}
\definecolor{RdYlGn-8-3}{RGB}{253,174,97}
\definecolor{RdYlGn-8-F}{RGB}{253,174,97}
\definecolor{RdYlGn-8-4}{RGB}{254,224,139}
\definecolor{RdYlGn-8-G}{RGB}{254,224,139}
\definecolor{RdYlGn-8-5}{RGB}{217,239,139}
\definecolor{RdYlGn-8-I}{RGB}{217,239,139}
\definecolor{RdYlGn-8-6}{RGB}{166,217,106}
\definecolor{RdYlGn-8-J}{RGB}{166,217,106}
\definecolor{RdYlGn-8-7}{RGB}{102,189,99}
\definecolor{RdYlGn-8-L}{RGB}{102,189,99}
\definecolor{RdYlGn-8-8}{RGB}{26,152,80}
\definecolor{RdYlGn-8-N}{RGB}{26,152,80}
\definecolor{RdYlGn-9-1}{RGB}{215,48,39}
\definecolor{RdYlGn-9-B}{RGB}{215,48,39}
\definecolor{RdYlGn-9-2}{RGB}{244,109,67}
\definecolor{RdYlGn-9-D}{RGB}{244,109,67}
\definecolor{RdYlGn-9-3}{RGB}{253,174,97}
\definecolor{RdYlGn-9-F}{RGB}{253,174,97}
\definecolor{RdYlGn-9-4}{RGB}{254,224,139}
\definecolor{RdYlGn-9-G}{RGB}{254,224,139}
\definecolor{RdYlGn-9-5}{RGB}{255,255,191}
\definecolor{RdYlGn-9-H}{RGB}{255,255,191}
\definecolor{RdYlGn-9-6}{RGB}{217,239,139}
\definecolor{RdYlGn-9-I}{RGB}{217,239,139}
\definecolor{RdYlGn-9-7}{RGB}{166,217,106}
\definecolor{RdYlGn-9-J}{RGB}{166,217,106}
\definecolor{RdYlGn-9-8}{RGB}{102,189,99}
\definecolor{RdYlGn-9-L}{RGB}{102,189,99}
\definecolor{RdYlGn-9-9}{RGB}{26,152,80}
\definecolor{RdYlGn-9-N}{RGB}{26,152,80}
\definecolor{RdYlGn-10-1}{RGB}{165,0,38}
\definecolor{RdYlGn-10-A}{RGB}{165,0,38}
\definecolor{RdYlGn-10-2}{RGB}{215,48,39}
\definecolor{RdYlGn-10-B}{RGB}{215,48,39}
\definecolor{RdYlGn-10-3}{RGB}{244,109,67}
\definecolor{RdYlGn-10-D}{RGB}{244,109,67}
\definecolor{RdYlGn-10-4}{RGB}{253,174,97}
\definecolor{RdYlGn-10-F}{RGB}{253,174,97}
\definecolor{RdYlGn-10-5}{RGB}{254,224,139}
\definecolor{RdYlGn-10-G}{RGB}{254,224,139}
\definecolor{RdYlGn-10-6}{RGB}{217,239,139}
\definecolor{RdYlGn-10-I}{RGB}{217,239,139}
\definecolor{RdYlGn-10-7}{RGB}{166,217,106}
\definecolor{RdYlGn-10-J}{RGB}{166,217,106}
\definecolor{RdYlGn-10-8}{RGB}{102,189,99}
\definecolor{RdYlGn-10-L}{RGB}{102,189,99}
\definecolor{RdYlGn-10-9}{RGB}{26,152,80}
\definecolor{RdYlGn-10-N}{RGB}{26,152,80}
\definecolor{RdYlGn-10-10}{RGB}{0,104,55}
\definecolor{RdYlGn-10-O}{RGB}{0,104,55}
\definecolor{RdYlGn-11-1}{RGB}{165,0,38}
\definecolor{RdYlGn-11-A}{RGB}{165,0,38}
\definecolor{RdYlGn-11-2}{RGB}{215,48,39}
\definecolor{RdYlGn-11-B}{RGB}{215,48,39}
\definecolor{RdYlGn-11-3}{RGB}{244,109,67}
\definecolor{RdYlGn-11-D}{RGB}{244,109,67}
\definecolor{RdYlGn-11-4}{RGB}{253,174,97}
\definecolor{RdYlGn-11-F}{RGB}{253,174,97}
\definecolor{RdYlGn-11-5}{RGB}{254,224,139}
\definecolor{RdYlGn-11-G}{RGB}{254,224,139}
\definecolor{RdYlGn-11-6}{RGB}{255,255,191}
\definecolor{RdYlGn-11-H}{RGB}{255,255,191}
\definecolor{RdYlGn-11-7}{RGB}{217,239,139}
\definecolor{RdYlGn-11-I}{RGB}{217,239,139}
\definecolor{RdYlGn-11-8}{RGB}{166,217,106}
\definecolor{RdYlGn-11-J}{RGB}{166,217,106}
\definecolor{RdYlGn-11-9}{RGB}{102,189,99}
\definecolor{RdYlGn-11-L}{RGB}{102,189,99}
\definecolor{RdYlGn-11-10}{RGB}{26,152,80}
\definecolor{RdYlGn-11-N}{RGB}{26,152,80}
\definecolor{RdYlGn-11-11}{RGB}{0,104,55}
\definecolor{RdYlGn-11-O}{RGB}{0,104,55}
\definecolor{Set3-3-1}{RGB}{141,211,199}
\definecolor{Set3-3-A}{RGB}{141,211,199}
\definecolor{Set3-3-2}{RGB}{255,255,179}
\definecolor{Set3-3-B}{RGB}{255,255,179}
\definecolor{Set3-3-3}{RGB}{190,186,218}
\definecolor{Set3-3-C}{RGB}{190,186,218}
\definecolor{Set3-4-1}{RGB}{141,211,199}
\definecolor{Set3-4-A}{RGB}{141,211,199}
\definecolor{Set3-4-2}{RGB}{255,255,179}
\definecolor{Set3-4-B}{RGB}{255,255,179}
\definecolor{Set3-4-3}{RGB}{190,186,218}
\definecolor{Set3-4-C}{RGB}{190,186,218}
\definecolor{Set3-4-4}{RGB}{251,128,114}
\definecolor{Set3-4-D}{RGB}{251,128,114}
\definecolor{Set3-5-1}{RGB}{141,211,199}
\definecolor{Set3-5-A}{RGB}{141,211,199}
\definecolor{Set3-5-2}{RGB}{255,255,179}
\definecolor{Set3-5-B}{RGB}{255,255,179}
\definecolor{Set3-5-3}{RGB}{190,186,218}
\definecolor{Set3-5-C}{RGB}{190,186,218}
\definecolor{Set3-5-4}{RGB}{251,128,114}
\definecolor{Set3-5-D}{RGB}{251,128,114}
\definecolor{Set3-5-5}{RGB}{128,177,211}
\definecolor{Set3-5-E}{RGB}{128,177,211}
\definecolor{Set3-6-1}{RGB}{141,211,199}
\definecolor{Set3-6-A}{RGB}{141,211,199}
\definecolor{Set3-6-2}{RGB}{255,255,179}
\definecolor{Set3-6-B}{RGB}{255,255,179}
\definecolor{Set3-6-3}{RGB}{190,186,218}
\definecolor{Set3-6-C}{RGB}{190,186,218}
\definecolor{Set3-6-4}{RGB}{251,128,114}
\definecolor{Set3-6-D}{RGB}{251,128,114}
\definecolor{Set3-6-5}{RGB}{128,177,211}
\definecolor{Set3-6-E}{RGB}{128,177,211}
\definecolor{Set3-6-6}{RGB}{253,180,98}
\definecolor{Set3-6-F}{RGB}{253,180,98}
\definecolor{Set3-7-1}{RGB}{141,211,199}
\definecolor{Set3-7-A}{RGB}{141,211,199}
\definecolor{Set3-7-2}{RGB}{255,255,179}
\definecolor{Set3-7-B}{RGB}{255,255,179}
\definecolor{Set3-7-3}{RGB}{190,186,218}
\definecolor{Set3-7-C}{RGB}{190,186,218}
\definecolor{Set3-7-4}{RGB}{251,128,114}
\definecolor{Set3-7-D}{RGB}{251,128,114}
\definecolor{Set3-7-5}{RGB}{128,177,211}
\definecolor{Set3-7-E}{RGB}{128,177,211}
\definecolor{Set3-7-6}{RGB}{253,180,98}
\definecolor{Set3-7-F}{RGB}{253,180,98}
\definecolor{Set3-7-7}{RGB}{179,222,105}
\definecolor{Set3-7-G}{RGB}{179,222,105}
\definecolor{Set3-8-1}{RGB}{141,211,199}
\definecolor{Set3-8-A}{RGB}{141,211,199}
\definecolor{Set3-8-2}{RGB}{255,255,179}
\definecolor{Set3-8-B}{RGB}{255,255,179}
\definecolor{Set3-8-3}{RGB}{190,186,218}
\definecolor{Set3-8-C}{RGB}{190,186,218}
\definecolor{Set3-8-4}{RGB}{251,128,114}
\definecolor{Set3-8-D}{RGB}{251,128,114}
\definecolor{Set3-8-5}{RGB}{128,177,211}
\definecolor{Set3-8-E}{RGB}{128,177,211}
\definecolor{Set3-8-6}{RGB}{253,180,98}
\definecolor{Set3-8-F}{RGB}{253,180,98}
\definecolor{Set3-8-7}{RGB}{179,222,105}
\definecolor{Set3-8-G}{RGB}{179,222,105}
\definecolor{Set3-8-8}{RGB}{252,205,229}
\definecolor{Set3-8-H}{RGB}{252,205,229}
\definecolor{Set3-9-1}{RGB}{141,211,199}
\definecolor{Set3-9-A}{RGB}{141,211,199}
\definecolor{Set3-9-2}{RGB}{255,255,179}
\definecolor{Set3-9-B}{RGB}{255,255,179}
\definecolor{Set3-9-3}{RGB}{190,186,218}
\definecolor{Set3-9-C}{RGB}{190,186,218}
\definecolor{Set3-9-4}{RGB}{251,128,114}
\definecolor{Set3-9-D}{RGB}{251,128,114}
\definecolor{Set3-9-5}{RGB}{128,177,211}
\definecolor{Set3-9-E}{RGB}{128,177,211}
\definecolor{Set3-9-6}{RGB}{253,180,98}
\definecolor{Set3-9-F}{RGB}{253,180,98}
\definecolor{Set3-9-7}{RGB}{179,222,105}
\definecolor{Set3-9-G}{RGB}{179,222,105}
\definecolor{Set3-9-8}{RGB}{252,205,229}
\definecolor{Set3-9-H}{RGB}{252,205,229}
\definecolor{Set3-9-9}{RGB}{217,217,217}
\definecolor{Set3-9-I}{RGB}{217,217,217}
\definecolor{Set3-10-1}{RGB}{141,211,199}
\definecolor{Set3-10-A}{RGB}{141,211,199}
\definecolor{Set3-10-2}{RGB}{255,255,179}
\definecolor{Set3-10-B}{RGB}{255,255,179}
\definecolor{Set3-10-3}{RGB}{190,186,218}
\definecolor{Set3-10-C}{RGB}{190,186,218}
\definecolor{Set3-10-4}{RGB}{251,128,114}
\definecolor{Set3-10-D}{RGB}{251,128,114}
\definecolor{Set3-10-5}{RGB}{128,177,211}
\definecolor{Set3-10-E}{RGB}{128,177,211}
\definecolor{Set3-10-6}{RGB}{253,180,98}
\definecolor{Set3-10-F}{RGB}{253,180,98}
\definecolor{Set3-10-7}{RGB}{179,222,105}
\definecolor{Set3-10-G}{RGB}{179,222,105}
\definecolor{Set3-10-8}{RGB}{252,205,229}
\definecolor{Set3-10-H}{RGB}{252,205,229}
\definecolor{Set3-10-9}{RGB}{217,217,217}
\definecolor{Set3-10-I}{RGB}{217,217,217}
\definecolor{Set3-10-10}{RGB}{188,128,189}
\definecolor{Set3-10-J}{RGB}{188,128,189}
\definecolor{Set3-11-1}{RGB}{141,211,199}
\definecolor{Set3-11-A}{RGB}{141,211,199}
\definecolor{Set3-11-2}{RGB}{255,255,179}
\definecolor{Set3-11-B}{RGB}{255,255,179}
\definecolor{Set3-11-3}{RGB}{190,186,218}
\definecolor{Set3-11-C}{RGB}{190,186,218}
\definecolor{Set3-11-4}{RGB}{251,128,114}
\definecolor{Set3-11-D}{RGB}{251,128,114}
\definecolor{Set3-11-5}{RGB}{128,177,211}
\definecolor{Set3-11-E}{RGB}{128,177,211}
\definecolor{Set3-11-6}{RGB}{253,180,98}
\definecolor{Set3-11-F}{RGB}{253,180,98}
\definecolor{Set3-11-7}{RGB}{179,222,105}
\definecolor{Set3-11-G}{RGB}{179,222,105}
\definecolor{Set3-11-8}{RGB}{252,205,229}
\definecolor{Set3-11-H}{RGB}{252,205,229}
\definecolor{Set3-11-9}{RGB}{217,217,217}
\definecolor{Set3-11-I}{RGB}{217,217,217}
\definecolor{Set3-11-10}{RGB}{188,128,189}
\definecolor{Set3-11-J}{RGB}{188,128,189}
\definecolor{Set3-11-11}{RGB}{204,235,197}
\definecolor{Set3-11-K}{RGB}{204,235,197}
\definecolor{Set3-12-1}{RGB}{141,211,199}
\definecolor{Set3-12-A}{RGB}{141,211,199}
\definecolor{Set3-12-2}{RGB}{255,255,179}
\definecolor{Set3-12-B}{RGB}{255,255,179}
\definecolor{Set3-12-3}{RGB}{190,186,218}
\definecolor{Set3-12-C}{RGB}{190,186,218}
\definecolor{Set3-12-4}{RGB}{251,128,114}
\definecolor{Set3-12-D}{RGB}{251,128,114}
\definecolor{Set3-12-5}{RGB}{128,177,211}
\definecolor{Set3-12-E}{RGB}{128,177,211}
\definecolor{Set3-12-6}{RGB}{253,180,98}
\definecolor{Set3-12-F}{RGB}{253,180,98}
\definecolor{Set3-12-7}{RGB}{179,222,105}
\definecolor{Set3-12-G}{RGB}{179,222,105}
\definecolor{Set3-12-8}{RGB}{252,205,229}
\definecolor{Set3-12-H}{RGB}{252,205,229}
\definecolor{Set3-12-9}{RGB}{217,217,217}
\definecolor{Set3-12-I}{RGB}{217,217,217}
\definecolor{Set3-12-10}{RGB}{188,128,189}
\definecolor{Set3-12-J}{RGB}{188,128,189}
\definecolor{Set3-12-11}{RGB}{204,235,197}
\definecolor{Set3-12-K}{RGB}{204,235,197}
\definecolor{Set3-12-12}{RGB}{255,237,111}
\definecolor{Set3-12-L}{RGB}{255,237,111}
\definecolor{Pastel1-3-1}{RGB}{251,180,174}
\definecolor{Pastel1-3-A}{RGB}{251,180,174}
\definecolor{Pastel1-3-2}{RGB}{179,205,227}
\definecolor{Pastel1-3-B}{RGB}{179,205,227}
\definecolor{Pastel1-3-3}{RGB}{204,235,197}
\definecolor{Pastel1-3-C}{RGB}{204,235,197}
\definecolor{Pastel1-4-1}{RGB}{251,180,174}
\definecolor{Pastel1-4-A}{RGB}{251,180,174}
\definecolor{Pastel1-4-2}{RGB}{179,205,227}
\definecolor{Pastel1-4-B}{RGB}{179,205,227}
\definecolor{Pastel1-4-3}{RGB}{204,235,197}
\definecolor{Pastel1-4-C}{RGB}{204,235,197}
\definecolor{Pastel1-4-4}{RGB}{222,203,228}
\definecolor{Pastel1-4-D}{RGB}{222,203,228}
\definecolor{Pastel1-5-1}{RGB}{251,180,174}
\definecolor{Pastel1-5-A}{RGB}{251,180,174}
\definecolor{Pastel1-5-2}{RGB}{179,205,227}
\definecolor{Pastel1-5-B}{RGB}{179,205,227}
\definecolor{Pastel1-5-3}{RGB}{204,235,197}
\definecolor{Pastel1-5-C}{RGB}{204,235,197}
\definecolor{Pastel1-5-4}{RGB}{222,203,228}
\definecolor{Pastel1-5-D}{RGB}{222,203,228}
\definecolor{Pastel1-5-5}{RGB}{254,217,166}
\definecolor{Pastel1-5-E}{RGB}{254,217,166}
\definecolor{Pastel1-6-1}{RGB}{251,180,174}
\definecolor{Pastel1-6-A}{RGB}{251,180,174}
\definecolor{Pastel1-6-2}{RGB}{179,205,227}
\definecolor{Pastel1-6-B}{RGB}{179,205,227}
\definecolor{Pastel1-6-3}{RGB}{204,235,197}
\definecolor{Pastel1-6-C}{RGB}{204,235,197}
\definecolor{Pastel1-6-4}{RGB}{222,203,228}
\definecolor{Pastel1-6-D}{RGB}{222,203,228}
\definecolor{Pastel1-6-5}{RGB}{254,217,166}
\definecolor{Pastel1-6-E}{RGB}{254,217,166}
\definecolor{Pastel1-6-6}{RGB}{255,255,204}
\definecolor{Pastel1-6-F}{RGB}{255,255,204}
\definecolor{Pastel1-7-1}{RGB}{251,180,174}
\definecolor{Pastel1-7-A}{RGB}{251,180,174}
\definecolor{Pastel1-7-2}{RGB}{179,205,227}
\definecolor{Pastel1-7-B}{RGB}{179,205,227}
\definecolor{Pastel1-7-3}{RGB}{204,235,197}
\definecolor{Pastel1-7-C}{RGB}{204,235,197}
\definecolor{Pastel1-7-4}{RGB}{222,203,228}
\definecolor{Pastel1-7-D}{RGB}{222,203,228}
\definecolor{Pastel1-7-5}{RGB}{254,217,166}
\definecolor{Pastel1-7-E}{RGB}{254,217,166}
\definecolor{Pastel1-7-6}{RGB}{255,255,204}
\definecolor{Pastel1-7-F}{RGB}{255,255,204}
\definecolor{Pastel1-7-7}{RGB}{229,216,189}
\definecolor{Pastel1-7-G}{RGB}{229,216,189}
\definecolor{Pastel1-8-1}{RGB}{251,180,174}
\definecolor{Pastel1-8-A}{RGB}{251,180,174}
\definecolor{Pastel1-8-2}{RGB}{179,205,227}
\definecolor{Pastel1-8-B}{RGB}{179,205,227}
\definecolor{Pastel1-8-3}{RGB}{204,235,197}
\definecolor{Pastel1-8-C}{RGB}{204,235,197}
\definecolor{Pastel1-8-4}{RGB}{222,203,228}
\definecolor{Pastel1-8-D}{RGB}{222,203,228}
\definecolor{Pastel1-8-5}{RGB}{254,217,166}
\definecolor{Pastel1-8-E}{RGB}{254,217,166}
\definecolor{Pastel1-8-6}{RGB}{255,255,204}
\definecolor{Pastel1-8-F}{RGB}{255,255,204}
\definecolor{Pastel1-8-7}{RGB}{229,216,189}
\definecolor{Pastel1-8-G}{RGB}{229,216,189}
\definecolor{Pastel1-8-8}{RGB}{253,218,236}
\definecolor{Pastel1-8-H}{RGB}{253,218,236}
\definecolor{Pastel1-9-1}{RGB}{251,180,174}
\definecolor{Pastel1-9-A}{RGB}{251,180,174}
\definecolor{Pastel1-9-2}{RGB}{179,205,227}
\definecolor{Pastel1-9-B}{RGB}{179,205,227}
\definecolor{Pastel1-9-3}{RGB}{204,235,197}
\definecolor{Pastel1-9-C}{RGB}{204,235,197}
\definecolor{Pastel1-9-4}{RGB}{222,203,228}
\definecolor{Pastel1-9-D}{RGB}{222,203,228}
\definecolor{Pastel1-9-5}{RGB}{254,217,166}
\definecolor{Pastel1-9-E}{RGB}{254,217,166}
\definecolor{Pastel1-9-6}{RGB}{255,255,204}
\definecolor{Pastel1-9-F}{RGB}{255,255,204}
\definecolor{Pastel1-9-7}{RGB}{229,216,189}
\definecolor{Pastel1-9-G}{RGB}{229,216,189}
\definecolor{Pastel1-9-8}{RGB}{253,218,236}
\definecolor{Pastel1-9-H}{RGB}{253,218,236}
\definecolor{Pastel1-9-9}{RGB}{242,242,242}
\definecolor{Pastel1-9-I}{RGB}{242,242,242}
\definecolor{Set1-3-1}{RGB}{228,26,28}
\definecolor{Set1-3-A}{RGB}{228,26,28}
\definecolor{Set1-3-2}{RGB}{55,126,184}
\definecolor{Set1-3-B}{RGB}{55,126,184}
\definecolor{Set1-3-3}{RGB}{77,175,74}
\definecolor{Set1-3-C}{RGB}{77,175,74}
\definecolor{Set1-4-1}{RGB}{228,26,28}
\definecolor{Set1-4-A}{RGB}{228,26,28}
\definecolor{Set1-4-2}{RGB}{55,126,184}
\definecolor{Set1-4-B}{RGB}{55,126,184}
\definecolor{Set1-4-3}{RGB}{77,175,74}
\definecolor{Set1-4-C}{RGB}{77,175,74}
\definecolor{Set1-4-4}{RGB}{152,78,163}
\definecolor{Set1-4-D}{RGB}{152,78,163}
\definecolor{Set1-5-1}{RGB}{228,26,28}
\definecolor{Set1-5-A}{RGB}{228,26,28}
\definecolor{Set1-5-2}{RGB}{55,126,184}
\definecolor{Set1-5-B}{RGB}{55,126,184}
\definecolor{Set1-5-3}{RGB}{77,175,74}
\definecolor{Set1-5-C}{RGB}{77,175,74}
\definecolor{Set1-5-4}{RGB}{152,78,163}
\definecolor{Set1-5-D}{RGB}{152,78,163}
\definecolor{Set1-5-5}{RGB}{255,127,0}
\definecolor{Set1-5-E}{RGB}{255,127,0}
\definecolor{Set1-6-1}{RGB}{228,26,28}
\definecolor{Set1-6-A}{RGB}{228,26,28}
\definecolor{Set1-6-2}{RGB}{55,126,184}
\definecolor{Set1-6-B}{RGB}{55,126,184}
\definecolor{Set1-6-3}{RGB}{77,175,74}
\definecolor{Set1-6-C}{RGB}{77,175,74}
\definecolor{Set1-6-4}{RGB}{152,78,163}
\definecolor{Set1-6-D}{RGB}{152,78,163}
\definecolor{Set1-6-5}{RGB}{255,127,0}
\definecolor{Set1-6-E}{RGB}{255,127,0}
\definecolor{Set1-6-6}{RGB}{255,255,51}
\definecolor{Set1-6-F}{RGB}{255,255,51}
\definecolor{Set1-7-1}{RGB}{228,26,28}
\definecolor{Set1-7-A}{RGB}{228,26,28}
\definecolor{Set1-7-2}{RGB}{55,126,184}
\definecolor{Set1-7-B}{RGB}{55,126,184}
\definecolor{Set1-7-3}{RGB}{77,175,74}
\definecolor{Set1-7-C}{RGB}{77,175,74}
\definecolor{Set1-7-4}{RGB}{152,78,163}
\definecolor{Set1-7-D}{RGB}{152,78,163}
\definecolor{Set1-7-5}{RGB}{255,127,0}
\definecolor{Set1-7-E}{RGB}{255,127,0}
\definecolor{Set1-7-6}{RGB}{255,255,51}
\definecolor{Set1-7-F}{RGB}{255,255,51}
\definecolor{Set1-7-7}{RGB}{166,86,40}
\definecolor{Set1-7-G}{RGB}{166,86,40}
\definecolor{Set1-8-1}{RGB}{228,26,28}
\definecolor{Set1-8-A}{RGB}{228,26,28}
\definecolor{Set1-8-2}{RGB}{55,126,184}
\definecolor{Set1-8-B}{RGB}{55,126,184}
\definecolor{Set1-8-3}{RGB}{77,175,74}
\definecolor{Set1-8-C}{RGB}{77,175,74}
\definecolor{Set1-8-4}{RGB}{152,78,163}
\definecolor{Set1-8-D}{RGB}{152,78,163}
\definecolor{Set1-8-5}{RGB}{255,127,0}
\definecolor{Set1-8-E}{RGB}{255,127,0}
\definecolor{Set1-8-6}{RGB}{255,255,51}
\definecolor{Set1-8-F}{RGB}{255,255,51}
\definecolor{Set1-8-7}{RGB}{166,86,40}
\definecolor{Set1-8-G}{RGB}{166,86,40}
\definecolor{Set1-8-8}{RGB}{247,129,191}
\definecolor{Set1-8-H}{RGB}{247,129,191}
\definecolor{Set1-9-1}{RGB}{228,26,28}
\definecolor{Set1-9-A}{RGB}{228,26,28}
\definecolor{Set1-9-2}{RGB}{55,126,184}
\definecolor{Set1-9-B}{RGB}{55,126,184}
\definecolor{Set1-9-3}{RGB}{77,175,74}
\definecolor{Set1-9-C}{RGB}{77,175,74}
\definecolor{Set1-9-4}{RGB}{152,78,163}
\definecolor{Set1-9-D}{RGB}{152,78,163}
\definecolor{Set1-9-5}{RGB}{255,127,0}
\definecolor{Set1-9-E}{RGB}{255,127,0}
\definecolor{Set1-9-6}{RGB}{255,255,51}
\definecolor{Set1-9-F}{RGB}{255,255,51}
\definecolor{Set1-9-7}{RGB}{166,86,40}
\definecolor{Set1-9-G}{RGB}{166,86,40}
\definecolor{Set1-9-8}{RGB}{247,129,191}
\definecolor{Set1-9-H}{RGB}{247,129,191}
\definecolor{Set1-9-9}{RGB}{153,153,153}
\definecolor{Set1-9-I}{RGB}{153,153,153}
\definecolor{Pastel2-3-1}{RGB}{179,226,205}
\definecolor{Pastel2-3-A}{RGB}{179,226,205}
\definecolor{Pastel2-3-2}{RGB}{253,205,172}
\definecolor{Pastel2-3-B}{RGB}{253,205,172}
\definecolor{Pastel2-3-3}{RGB}{203,213,232}
\definecolor{Pastel2-3-C}{RGB}{203,213,232}
\definecolor{Pastel2-4-1}{RGB}{179,226,205}
\definecolor{Pastel2-4-A}{RGB}{179,226,205}
\definecolor{Pastel2-4-2}{RGB}{253,205,172}
\definecolor{Pastel2-4-B}{RGB}{253,205,172}
\definecolor{Pastel2-4-3}{RGB}{203,213,232}
\definecolor{Pastel2-4-C}{RGB}{203,213,232}
\definecolor{Pastel2-4-4}{RGB}{244,202,228}
\definecolor{Pastel2-4-D}{RGB}{244,202,228}
\definecolor{Pastel2-5-1}{RGB}{179,226,205}
\definecolor{Pastel2-5-A}{RGB}{179,226,205}
\definecolor{Pastel2-5-2}{RGB}{253,205,172}
\definecolor{Pastel2-5-B}{RGB}{253,205,172}
\definecolor{Pastel2-5-3}{RGB}{203,213,232}
\definecolor{Pastel2-5-C}{RGB}{203,213,232}
\definecolor{Pastel2-5-4}{RGB}{244,202,228}
\definecolor{Pastel2-5-D}{RGB}{244,202,228}
\definecolor{Pastel2-5-5}{RGB}{230,245,201}
\definecolor{Pastel2-5-E}{RGB}{230,245,201}
\definecolor{Pastel2-6-1}{RGB}{179,226,205}
\definecolor{Pastel2-6-A}{RGB}{179,226,205}
\definecolor{Pastel2-6-2}{RGB}{253,205,172}
\definecolor{Pastel2-6-B}{RGB}{253,205,172}
\definecolor{Pastel2-6-3}{RGB}{203,213,232}
\definecolor{Pastel2-6-C}{RGB}{203,213,232}
\definecolor{Pastel2-6-4}{RGB}{244,202,228}
\definecolor{Pastel2-6-D}{RGB}{244,202,228}
\definecolor{Pastel2-6-5}{RGB}{230,245,201}
\definecolor{Pastel2-6-E}{RGB}{230,245,201}
\definecolor{Pastel2-6-6}{RGB}{255,242,174}
\definecolor{Pastel2-6-F}{RGB}{255,242,174}
\definecolor{Pastel2-7-1}{RGB}{179,226,205}
\definecolor{Pastel2-7-A}{RGB}{179,226,205}
\definecolor{Pastel2-7-2}{RGB}{253,205,172}
\definecolor{Pastel2-7-B}{RGB}{253,205,172}
\definecolor{Pastel2-7-3}{RGB}{203,213,232}
\definecolor{Pastel2-7-C}{RGB}{203,213,232}
\definecolor{Pastel2-7-4}{RGB}{244,202,228}
\definecolor{Pastel2-7-D}{RGB}{244,202,228}
\definecolor{Pastel2-7-5}{RGB}{230,245,201}
\definecolor{Pastel2-7-E}{RGB}{230,245,201}
\definecolor{Pastel2-7-6}{RGB}{255,242,174}
\definecolor{Pastel2-7-F}{RGB}{255,242,174}
\definecolor{Pastel2-7-7}{RGB}{241,226,204}
\definecolor{Pastel2-7-G}{RGB}{241,226,204}
\definecolor{Pastel2-8-1}{RGB}{179,226,205}
\definecolor{Pastel2-8-A}{RGB}{179,226,205}
\definecolor{Pastel2-8-2}{RGB}{253,205,172}
\definecolor{Pastel2-8-B}{RGB}{253,205,172}
\definecolor{Pastel2-8-3}{RGB}{203,213,232}
\definecolor{Pastel2-8-C}{RGB}{203,213,232}
\definecolor{Pastel2-8-4}{RGB}{244,202,228}
\definecolor{Pastel2-8-D}{RGB}{244,202,228}
\definecolor{Pastel2-8-5}{RGB}{230,245,201}
\definecolor{Pastel2-8-E}{RGB}{230,245,201}
\definecolor{Pastel2-8-6}{RGB}{255,242,174}
\definecolor{Pastel2-8-F}{RGB}{255,242,174}
\definecolor{Pastel2-8-7}{RGB}{241,226,204}
\definecolor{Pastel2-8-G}{RGB}{241,226,204}
\definecolor{Pastel2-8-8}{RGB}{204,204,204}
\definecolor{Pastel2-8-H}{RGB}{204,204,204}
\definecolor{Set2-3-1}{RGB}{102,194,165}
\definecolor{Set2-3-A}{RGB}{102,194,165}
\definecolor{Set2-3-2}{RGB}{252,141,98}
\definecolor{Set2-3-B}{RGB}{252,141,98}
\definecolor{Set2-3-3}{RGB}{141,160,203}
\definecolor{Set2-3-C}{RGB}{141,160,203}
\definecolor{Set2-4-1}{RGB}{102,194,165}
\definecolor{Set2-4-A}{RGB}{102,194,165}
\definecolor{Set2-4-2}{RGB}{252,141,98}
\definecolor{Set2-4-B}{RGB}{252,141,98}
\definecolor{Set2-4-3}{RGB}{141,160,203}
\definecolor{Set2-4-C}{RGB}{141,160,203}
\definecolor{Set2-4-4}{RGB}{231,138,195}
\definecolor{Set2-4-D}{RGB}{231,138,195}
\definecolor{Set2-5-1}{RGB}{102,194,165}
\definecolor{Set2-5-A}{RGB}{102,194,165}
\definecolor{Set2-5-2}{RGB}{252,141,98}
\definecolor{Set2-5-B}{RGB}{252,141,98}
\definecolor{Set2-5-3}{RGB}{141,160,203}
\definecolor{Set2-5-C}{RGB}{141,160,203}
\definecolor{Set2-5-4}{RGB}{231,138,195}
\definecolor{Set2-5-D}{RGB}{231,138,195}
\definecolor{Set2-5-5}{RGB}{166,216,84}
\definecolor{Set2-5-E}{RGB}{166,216,84}
\definecolor{Set2-6-1}{RGB}{102,194,165}
\definecolor{Set2-6-A}{RGB}{102,194,165}
\definecolor{Set2-6-2}{RGB}{252,141,98}
\definecolor{Set2-6-B}{RGB}{252,141,98}
\definecolor{Set2-6-3}{RGB}{141,160,203}
\definecolor{Set2-6-C}{RGB}{141,160,203}
\definecolor{Set2-6-4}{RGB}{231,138,195}
\definecolor{Set2-6-D}{RGB}{231,138,195}
\definecolor{Set2-6-5}{RGB}{166,216,84}
\definecolor{Set2-6-E}{RGB}{166,216,84}
\definecolor{Set2-6-6}{RGB}{255,217,47}
\definecolor{Set2-6-F}{RGB}{255,217,47}
\definecolor{Set2-7-1}{RGB}{102,194,165}
\definecolor{Set2-7-A}{RGB}{102,194,165}
\definecolor{Set2-7-2}{RGB}{252,141,98}
\definecolor{Set2-7-B}{RGB}{252,141,98}
\definecolor{Set2-7-3}{RGB}{141,160,203}
\definecolor{Set2-7-C}{RGB}{141,160,203}
\definecolor{Set2-7-4}{RGB}{231,138,195}
\definecolor{Set2-7-D}{RGB}{231,138,195}
\definecolor{Set2-7-5}{RGB}{166,216,84}
\definecolor{Set2-7-E}{RGB}{166,216,84}
\definecolor{Set2-7-6}{RGB}{255,217,47}
\definecolor{Set2-7-F}{RGB}{255,217,47}
\definecolor{Set2-7-7}{RGB}{229,196,148}
\definecolor{Set2-7-G}{RGB}{229,196,148}
\definecolor{Set2-8-1}{RGB}{102,194,165}
\definecolor{Set2-8-A}{RGB}{102,194,165}
\definecolor{Set2-8-2}{RGB}{252,141,98}
\definecolor{Set2-8-B}{RGB}{252,141,98}
\definecolor{Set2-8-3}{RGB}{141,160,203}
\definecolor{Set2-8-C}{RGB}{141,160,203}
\definecolor{Set2-8-4}{RGB}{231,138,195}
\definecolor{Set2-8-D}{RGB}{231,138,195}
\definecolor{Set2-8-5}{RGB}{166,216,84}
\definecolor{Set2-8-E}{RGB}{166,216,84}
\definecolor{Set2-8-6}{RGB}{255,217,47}
\definecolor{Set2-8-F}{RGB}{255,217,47}
\definecolor{Set2-8-7}{RGB}{229,196,148}
\definecolor{Set2-8-G}{RGB}{229,196,148}
\definecolor{Set2-8-8}{RGB}{179,179,179}
\definecolor{Set2-8-H}{RGB}{179,179,179}
\definecolor{Dark2-3-1}{RGB}{27,158,119}
\definecolor{Dark2-3-A}{RGB}{27,158,119}
\definecolor{Dark2-3-2}{RGB}{217,95,2}
\definecolor{Dark2-3-B}{RGB}{217,95,2}
\definecolor{Dark2-3-3}{RGB}{117,112,179}
\definecolor{Dark2-3-C}{RGB}{117,112,179}
\definecolor{Dark2-4-1}{RGB}{27,158,119}
\definecolor{Dark2-4-A}{RGB}{27,158,119}
\definecolor{Dark2-4-2}{RGB}{217,95,2}
\definecolor{Dark2-4-B}{RGB}{217,95,2}
\definecolor{Dark2-4-3}{RGB}{117,112,179}
\definecolor{Dark2-4-C}{RGB}{117,112,179}
\definecolor{Dark2-4-4}{RGB}{231,41,138}
\definecolor{Dark2-4-D}{RGB}{231,41,138}
\definecolor{Dark2-5-1}{RGB}{27,158,119}
\definecolor{Dark2-5-A}{RGB}{27,158,119}
\definecolor{Dark2-5-2}{RGB}{217,95,2}
\definecolor{Dark2-5-B}{RGB}{217,95,2}
\definecolor{Dark2-5-3}{RGB}{117,112,179}
\definecolor{Dark2-5-C}{RGB}{117,112,179}
\definecolor{Dark2-5-4}{RGB}{231,41,138}
\definecolor{Dark2-5-D}{RGB}{231,41,138}
\definecolor{Dark2-5-5}{RGB}{102,166,30}
\definecolor{Dark2-5-E}{RGB}{102,166,30}
\definecolor{Dark2-6-1}{RGB}{27,158,119}
\definecolor{Dark2-6-A}{RGB}{27,158,119}
\definecolor{Dark2-6-2}{RGB}{217,95,2}
\definecolor{Dark2-6-B}{RGB}{217,95,2}
\definecolor{Dark2-6-3}{RGB}{117,112,179}
\definecolor{Dark2-6-C}{RGB}{117,112,179}
\definecolor{Dark2-6-4}{RGB}{231,41,138}
\definecolor{Dark2-6-D}{RGB}{231,41,138}
\definecolor{Dark2-6-5}{RGB}{102,166,30}
\definecolor{Dark2-6-E}{RGB}{102,166,30}
\definecolor{Dark2-6-6}{RGB}{230,171,2}
\definecolor{Dark2-6-F}{RGB}{230,171,2}
\definecolor{Dark2-7-1}{RGB}{27,158,119}
\definecolor{Dark2-7-A}{RGB}{27,158,119}
\definecolor{Dark2-7-2}{RGB}{217,95,2}
\definecolor{Dark2-7-B}{RGB}{217,95,2}
\definecolor{Dark2-7-3}{RGB}{117,112,179}
\definecolor{Dark2-7-C}{RGB}{117,112,179}
\definecolor{Dark2-7-4}{RGB}{231,41,138}
\definecolor{Dark2-7-D}{RGB}{231,41,138}
\definecolor{Dark2-7-5}{RGB}{102,166,30}
\definecolor{Dark2-7-E}{RGB}{102,166,30}
\definecolor{Dark2-7-6}{RGB}{230,171,2}
\definecolor{Dark2-7-F}{RGB}{230,171,2}
\definecolor{Dark2-7-7}{RGB}{166,118,29}
\definecolor{Dark2-7-G}{RGB}{166,118,29}
\definecolor{Dark2-8-1}{RGB}{27,158,119}
\definecolor{Dark2-8-A}{RGB}{27,158,119}
\definecolor{Dark2-8-2}{RGB}{217,95,2}
\definecolor{Dark2-8-B}{RGB}{217,95,2}
\definecolor{Dark2-8-3}{RGB}{117,112,179}
\definecolor{Dark2-8-C}{RGB}{117,112,179}
\definecolor{Dark2-8-4}{RGB}{231,41,138}
\definecolor{Dark2-8-D}{RGB}{231,41,138}
\definecolor{Dark2-8-5}{RGB}{102,166,30}
\definecolor{Dark2-8-E}{RGB}{102,166,30}
\definecolor{Dark2-8-6}{RGB}{230,171,2}
\definecolor{Dark2-8-F}{RGB}{230,171,2}
\definecolor{Dark2-8-7}{RGB}{166,118,29}
\definecolor{Dark2-8-G}{RGB}{166,118,29}
\definecolor{Dark2-8-8}{RGB}{102,102,102}
\definecolor{Dark2-8-H}{RGB}{102,102,102}
\definecolor{Paired-3-1}{RGB}{166,206,227}
\definecolor{Paired-3-A}{RGB}{166,206,227}
\definecolor{Paired-3-2}{RGB}{31,120,180}
\definecolor{Paired-3-B}{RGB}{31,120,180}
\definecolor{Paired-3-3}{RGB}{178,223,138}
\definecolor{Paired-3-C}{RGB}{178,223,138}
\definecolor{Paired-4-1}{RGB}{166,206,227}
\definecolor{Paired-4-A}{RGB}{166,206,227}
\definecolor{Paired-4-2}{RGB}{31,120,180}
\definecolor{Paired-4-B}{RGB}{31,120,180}
\definecolor{Paired-4-3}{RGB}{178,223,138}
\definecolor{Paired-4-C}{RGB}{178,223,138}
\definecolor{Paired-4-4}{RGB}{51,160,44}
\definecolor{Paired-4-D}{RGB}{51,160,44}
\definecolor{Paired-5-1}{RGB}{166,206,227}
\definecolor{Paired-5-A}{RGB}{166,206,227}
\definecolor{Paired-5-2}{RGB}{31,120,180}
\definecolor{Paired-5-B}{RGB}{31,120,180}
\definecolor{Paired-5-3}{RGB}{178,223,138}
\definecolor{Paired-5-C}{RGB}{178,223,138}
\definecolor{Paired-5-4}{RGB}{51,160,44}
\definecolor{Paired-5-D}{RGB}{51,160,44}
\definecolor{Paired-5-5}{RGB}{251,154,153}
\definecolor{Paired-5-E}{RGB}{251,154,153}
\definecolor{Paired-6-1}{RGB}{166,206,227}
\definecolor{Paired-6-A}{RGB}{166,206,227}
\definecolor{Paired-6-2}{RGB}{31,120,180}
\definecolor{Paired-6-B}{RGB}{31,120,180}
\definecolor{Paired-6-3}{RGB}{178,223,138}
\definecolor{Paired-6-C}{RGB}{178,223,138}
\definecolor{Paired-6-4}{RGB}{51,160,44}
\definecolor{Paired-6-D}{RGB}{51,160,44}
\definecolor{Paired-6-5}{RGB}{251,154,153}
\definecolor{Paired-6-E}{RGB}{251,154,153}
\definecolor{Paired-6-6}{RGB}{227,26,28}
\definecolor{Paired-6-F}{RGB}{227,26,28}
\definecolor{Paired-7-1}{RGB}{166,206,227}
\definecolor{Paired-7-A}{RGB}{166,206,227}
\definecolor{Paired-7-2}{RGB}{31,120,180}
\definecolor{Paired-7-B}{RGB}{31,120,180}
\definecolor{Paired-7-3}{RGB}{178,223,138}
\definecolor{Paired-7-C}{RGB}{178,223,138}
\definecolor{Paired-7-4}{RGB}{51,160,44}
\definecolor{Paired-7-D}{RGB}{51,160,44}
\definecolor{Paired-7-5}{RGB}{251,154,153}
\definecolor{Paired-7-E}{RGB}{251,154,153}
\definecolor{Paired-7-6}{RGB}{227,26,28}
\definecolor{Paired-7-F}{RGB}{227,26,28}
\definecolor{Paired-7-7}{RGB}{253,191,111}
\definecolor{Paired-7-G}{RGB}{253,191,111}
\definecolor{Paired-8-1}{RGB}{166,206,227}
\definecolor{Paired-8-A}{RGB}{166,206,227}
\definecolor{Paired-8-2}{RGB}{31,120,180}
\definecolor{Paired-8-B}{RGB}{31,120,180}
\definecolor{Paired-8-3}{RGB}{178,223,138}
\definecolor{Paired-8-C}{RGB}{178,223,138}
\definecolor{Paired-8-4}{RGB}{51,160,44}
\definecolor{Paired-8-D}{RGB}{51,160,44}
\definecolor{Paired-8-5}{RGB}{251,154,153}
\definecolor{Paired-8-E}{RGB}{251,154,153}
\definecolor{Paired-8-6}{RGB}{227,26,28}
\definecolor{Paired-8-F}{RGB}{227,26,28}
\definecolor{Paired-8-7}{RGB}{253,191,111}
\definecolor{Paired-8-G}{RGB}{253,191,111}
\definecolor{Paired-8-8}{RGB}{255,127,0}
\definecolor{Paired-8-H}{RGB}{255,127,0}
\definecolor{Paired-9-1}{RGB}{166,206,227}
\definecolor{Paired-9-A}{RGB}{166,206,227}
\definecolor{Paired-9-2}{RGB}{31,120,180}
\definecolor{Paired-9-B}{RGB}{31,120,180}
\definecolor{Paired-9-3}{RGB}{178,223,138}
\definecolor{Paired-9-C}{RGB}{178,223,138}
\definecolor{Paired-9-4}{RGB}{51,160,44}
\definecolor{Paired-9-D}{RGB}{51,160,44}
\definecolor{Paired-9-5}{RGB}{251,154,153}
\definecolor{Paired-9-E}{RGB}{251,154,153}
\definecolor{Paired-9-6}{RGB}{227,26,28}
\definecolor{Paired-9-F}{RGB}{227,26,28}
\definecolor{Paired-9-7}{RGB}{253,191,111}
\definecolor{Paired-9-G}{RGB}{253,191,111}
\definecolor{Paired-9-8}{RGB}{255,127,0}
\definecolor{Paired-9-H}{RGB}{255,127,0}
\definecolor{Paired-9-9}{RGB}{202,178,214}
\definecolor{Paired-9-I}{RGB}{202,178,214}
\definecolor{Paired-10-1}{RGB}{166,206,227}
\definecolor{Paired-10-A}{RGB}{166,206,227}
\definecolor{Paired-10-2}{RGB}{31,120,180}
\definecolor{Paired-10-B}{RGB}{31,120,180}
\definecolor{Paired-10-3}{RGB}{178,223,138}
\definecolor{Paired-10-C}{RGB}{178,223,138}
\definecolor{Paired-10-4}{RGB}{51,160,44}
\definecolor{Paired-10-D}{RGB}{51,160,44}
\definecolor{Paired-10-5}{RGB}{251,154,153}
\definecolor{Paired-10-E}{RGB}{251,154,153}
\definecolor{Paired-10-6}{RGB}{227,26,28}
\definecolor{Paired-10-F}{RGB}{227,26,28}
\definecolor{Paired-10-7}{RGB}{253,191,111}
\definecolor{Paired-10-G}{RGB}{253,191,111}
\definecolor{Paired-10-8}{RGB}{255,127,0}
\definecolor{Paired-10-H}{RGB}{255,127,0}
\definecolor{Paired-10-9}{RGB}{202,178,214}
\definecolor{Paired-10-I}{RGB}{202,178,214}
\definecolor{Paired-10-10}{RGB}{106,61,154}
\definecolor{Paired-10-J}{RGB}{106,61,154}
\definecolor{Paired-11-1}{RGB}{166,206,227}
\definecolor{Paired-11-A}{RGB}{166,206,227}
\definecolor{Paired-11-2}{RGB}{31,120,180}
\definecolor{Paired-11-B}{RGB}{31,120,180}
\definecolor{Paired-11-3}{RGB}{178,223,138}
\definecolor{Paired-11-C}{RGB}{178,223,138}
\definecolor{Paired-11-4}{RGB}{51,160,44}
\definecolor{Paired-11-D}{RGB}{51,160,44}
\definecolor{Paired-11-5}{RGB}{251,154,153}
\definecolor{Paired-11-E}{RGB}{251,154,153}
\definecolor{Paired-11-6}{RGB}{227,26,28}
\definecolor{Paired-11-F}{RGB}{227,26,28}
\definecolor{Paired-11-7}{RGB}{253,191,111}
\definecolor{Paired-11-G}{RGB}{253,191,111}
\definecolor{Paired-11-8}{RGB}{255,127,0}
\definecolor{Paired-11-H}{RGB}{255,127,0}
\definecolor{Paired-11-9}{RGB}{202,178,214}
\definecolor{Paired-11-I}{RGB}{202,178,214}
\definecolor{Paired-11-10}{RGB}{106,61,154}
\definecolor{Paired-11-J}{RGB}{106,61,154}
\definecolor{Paired-11-11}{RGB}{255,255,153}
\definecolor{Paired-11-K}{RGB}{255,255,153}
\definecolor{Paired-12-1}{RGB}{166,206,227}
\definecolor{Paired-12-A}{RGB}{166,206,227}
\definecolor{Paired-12-2}{RGB}{31,120,180}
\definecolor{Paired-12-B}{RGB}{31,120,180}
\definecolor{Paired-12-3}{RGB}{178,223,138}
\definecolor{Paired-12-C}{RGB}{178,223,138}
\definecolor{Paired-12-4}{RGB}{51,160,44}
\definecolor{Paired-12-D}{RGB}{51,160,44}
\definecolor{Paired-12-5}{RGB}{251,154,153}
\definecolor{Paired-12-E}{RGB}{251,154,153}
\definecolor{Paired-12-6}{RGB}{227,26,28}
\definecolor{Paired-12-F}{RGB}{227,26,28}
\definecolor{Paired-12-7}{RGB}{253,191,111}
\definecolor{Paired-12-G}{RGB}{253,191,111}
\definecolor{Paired-12-8}{RGB}{255,127,0}
\definecolor{Paired-12-H}{RGB}{255,127,0}
\definecolor{Paired-12-9}{RGB}{202,178,214}
\definecolor{Paired-12-I}{RGB}{202,178,214}
\definecolor{Paired-12-10}{RGB}{106,61,154}
\definecolor{Paired-12-J}{RGB}{106,61,154}
\definecolor{Paired-12-11}{RGB}{255,255,153}
\definecolor{Paired-12-K}{RGB}{255,255,153}
\definecolor{Paired-12-12}{RGB}{177,89,40}
\definecolor{Paired-12-L}{RGB}{177,89,40}
\definecolor{Accent-3-1}{RGB}{127,201,127}
\definecolor{Accent-3-A}{RGB}{127,201,127}
\definecolor{Accent-3-2}{RGB}{190,174,212}
\definecolor{Accent-3-B}{RGB}{190,174,212}
\definecolor{Accent-3-3}{RGB}{253,192,134}
\definecolor{Accent-3-C}{RGB}{253,192,134}
\definecolor{Accent-4-1}{RGB}{127,201,127}
\definecolor{Accent-4-A}{RGB}{127,201,127}
\definecolor{Accent-4-2}{RGB}{190,174,212}
\definecolor{Accent-4-B}{RGB}{190,174,212}
\definecolor{Accent-4-3}{RGB}{253,192,134}
\definecolor{Accent-4-C}{RGB}{253,192,134}
\definecolor{Accent-4-4}{RGB}{255,255,153}
\definecolor{Accent-4-D}{RGB}{255,255,153}
\definecolor{Accent-5-1}{RGB}{127,201,127}
\definecolor{Accent-5-A}{RGB}{127,201,127}
\definecolor{Accent-5-2}{RGB}{190,174,212}
\definecolor{Accent-5-B}{RGB}{190,174,212}
\definecolor{Accent-5-3}{RGB}{253,192,134}
\definecolor{Accent-5-C}{RGB}{253,192,134}
\definecolor{Accent-5-4}{RGB}{255,255,153}
\definecolor{Accent-5-D}{RGB}{255,255,153}
\definecolor{Accent-5-5}{RGB}{56,108,176}
\definecolor{Accent-5-E}{RGB}{56,108,176}
\definecolor{Accent-6-1}{RGB}{127,201,127}
\definecolor{Accent-6-A}{RGB}{127,201,127}
\definecolor{Accent-6-2}{RGB}{190,174,212}
\definecolor{Accent-6-B}{RGB}{190,174,212}
\definecolor{Accent-6-3}{RGB}{253,192,134}
\definecolor{Accent-6-C}{RGB}{253,192,134}
\definecolor{Accent-6-4}{RGB}{255,255,153}
\definecolor{Accent-6-D}{RGB}{255,255,153}
\definecolor{Accent-6-5}{RGB}{56,108,176}
\definecolor{Accent-6-E}{RGB}{56,108,176}
\definecolor{Accent-6-6}{RGB}{240,2,127}
\definecolor{Accent-6-F}{RGB}{240,2,127}
\definecolor{Accent-7-1}{RGB}{127,201,127}
\definecolor{Accent-7-A}{RGB}{127,201,127}
\definecolor{Accent-7-2}{RGB}{190,174,212}
\definecolor{Accent-7-B}{RGB}{190,174,212}
\definecolor{Accent-7-3}{RGB}{253,192,134}
\definecolor{Accent-7-C}{RGB}{253,192,134}
\definecolor{Accent-7-4}{RGB}{255,255,153}
\definecolor{Accent-7-D}{RGB}{255,255,153}
\definecolor{Accent-7-5}{RGB}{56,108,176}
\definecolor{Accent-7-E}{RGB}{56,108,176}
\definecolor{Accent-7-6}{RGB}{240,2,127}
\definecolor{Accent-7-F}{RGB}{240,2,127}
\definecolor{Accent-7-7}{RGB}{191,91,23}
\definecolor{Accent-7-G}{RGB}{191,91,23}
\definecolor{Accent-8-1}{RGB}{127,201,127}
\definecolor{Accent-8-A}{RGB}{127,201,127}
\definecolor{Accent-8-2}{RGB}{190,174,212}
\definecolor{Accent-8-B}{RGB}{190,174,212}
\definecolor{Accent-8-3}{RGB}{253,192,134}
\definecolor{Accent-8-C}{RGB}{253,192,134}
\definecolor{Accent-8-4}{RGB}{255,255,153}
\definecolor{Accent-8-D}{RGB}{255,255,153}
\definecolor{Accent-8-5}{RGB}{56,108,176}
\definecolor{Accent-8-E}{RGB}{56,108,176}
\definecolor{Accent-8-6}{RGB}{240,2,127}
\definecolor{Accent-8-F}{RGB}{240,2,127}
\definecolor{Accent-8-7}{RGB}{191,91,23}
\definecolor{Accent-8-G}{RGB}{191,91,23}
\definecolor{Accent-8-8}{RGB}{102,102,102}
\definecolor{Accent-8-H}{RGB}{102,102,102}

\def\minipagewidth{.3\textwidth}
\def\colorfirstplot{Paired-6-4}
\def\colorsecondplot{Paired-6-6}

\newcommand\addnmiplot[4]{
\begin{tikzpicture}
\begin{axis}[
		enlarge x limits=false,
		xmin=0,
		xmax=1,
		xtick= {0, 0.5, 1},
		ymin=0.6,
		ymax=1.05,
		xlabel=$\beta$,
		scale only axis, 
		xlabel near ticks,
		ylabel near ticks,
		width=100pt, 
	#2		
	]
\addplot+[draw=Paired-6-2,
				mark=none,
				line width=1pt,
				style=solid]				
					table[x=beta,y=NMI] {#1};
#3;
			
#4			 ;

\addplot+[draw=Paired-6-2,
				mark=none,
				line width=1pt,
				style=dashed]				
					table[x=beta,y=NMIMOD] {#1};

\end{axis}	

\begin{axis}[
		yticklabel style = {color=\colorsecondplot},
		enlarge x limits=false,
		 axis y line*=right,
		xtick= {0, 0.5, 1},
		ytick={1},
		yticklabel={\phantom{400}},
		xmin=0,
		xmax=1,
		ymin=0,
		ymax=1,
		scale only axis, 
		width=100pt, 
	]

%
%

\end{axis}		
					
\end{tikzpicture}
}

\def\fakelegend{
\addplot+[draw=Paired-6-2,
				mark=none,
				line width=1pt,
				style=dashed]				
					coordinates {(-1,0)};		
\addlegendentry{$Q$ based partition}	
\addplot+[draw=\colorfirstplot,
				mark=none,
				line width=1pt,
				style=solid]				
					coordinates {(-1,0)};
\addplot+[draw=\colorsecondplot,
				mark=none,
				line width=1pt,
				style=solid]				
					coordinates {(-1,0)};		

\addplot+[draw=\colorfirstplot,
				mark=none,
				line width=1pt,
				style=dashed]				
					coordinates {(-1,0)};
\addplot+[draw=\colorsecondplot,
				mark=none,
				line width=1pt,
				style=dashed]				
					coordinates {(-1,0)};		

}

\newcommand\addstabilityplot[4]{
\begin{tikzpicture}
\begin{semilogyaxis}[
		yticklabel style = {color=\colorfirstplot},
		ytick style = {color=\colorfirstplot},
		enlarge x limits=false,
		 axis y line*=left,
		 xticklabel=\empty,
		xmin=0,
		xmax=1,
		ymin=1000,
		ymax=100000,
		scale only axis, 
		ylabel near ticks,
		width=100pt, 
	#2		
	]
	
\addlegendimage{
 legend image code/.code={%
 			\draw[line width=1pt, draw=\colorfirstplot, fill=\colorfirstplot] (0cm,0cm) -- (.6cm,0cm);
 			\node[draw=none](t) at (.7cm,0cm){/};
			\draw[line width=1pt, draw=\colorsecondplot, fill=\colorsecondplot] (0.8cm,0cm) -- (1.4cm,0cm);
        }
}
\addlegendentry{$f_A$ based partition}

\addlegendimage{
 legend image code/.code={%
 			\draw[dashed, line width=1pt, draw=\colorfirstplot, fill=\colorfirstplot] (0cm,0cm) -- (.6cm,0cm);
 			\node[draw=none](t) at (.7cm,0cm){/};
			\draw[dashed, line width=1pt, draw=\colorsecondplot, fill=\colorsecondplot] (0.8cm,0cm) -- (1.4cm,0cm);
        }
}
\addlegendentry{planted partition}		

\addplot+[draw=\colorfirstplot,
				mark=none,
				line width=1pt,
				style=solid]				
					table[x=beta,y=NC2] {#1};

\addplot+[draw=\colorfirstplot,
				mark=none,
				line width=1pt,
				style=dashed]				
					table[x=beta,y=NC2T] {#1};

#4;

\end{semilogyaxis}

\begin{axis}[
		yticklabel style = {color=\colorsecondplot},
		ylabel style = {color=\colorsecondplot},
		enlarge x limits=false,
		 axis y line*=right,
		xtick= {0, 0.5, 1},
		xmin=0,
		xmax=1,
		ymin=0,
		ymax=400,
		scale only axis, 
		xlabel near ticks,
		ylabel near ticks,
		width=100pt, 
#3
	]

\addplot+[draw=\colorsecondplot,
				mark=none,
				line width=1pt,
				style=solid]				
					table[x=beta,y=C] {#1};

\addplot+[draw=\colorsecondplot,
				mark=none,
				line width=1pt,
				style=dashed]				
					table[x=beta,y=CT] {#1};

\end{axis}

\end{tikzpicture}
}

\begin{figure}
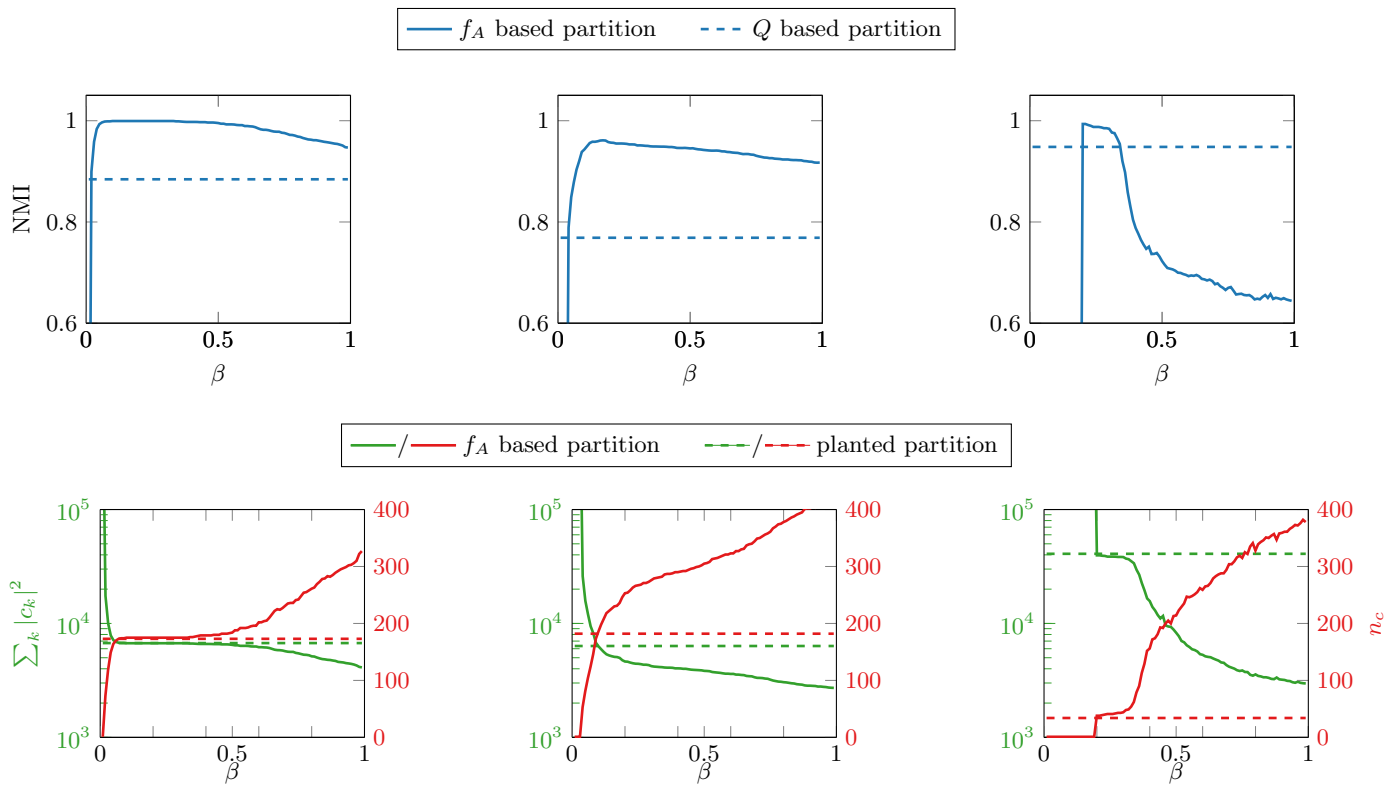

\centering
\pgfplotslegendfromname{legend_multi_plot}
\vskip10pt
\begin{minipage}[b]{\minipagewidth}
  \centering
  \addnmiplot {figs/G1.dat}{ylabel={\small {NMI}},xlabel =$\beta$, legend columns=4, legend style={/tikz/every even column/.append style={column sep=0.5cm}}, legend to name={legend_multi_plot} , }{\addlegendentry{$f_A$ based partition}}{\fakelegend}
  \end{minipage}
\hfill  
\begin{minipage}[b]{\minipagewidth}
  \centering
    \addnmiplot {figs/G2.dat}{ylabel=\phantom{\small {NMI}},xlabel =$\beta$}{}{}
\end{minipage}
  \hfill
  \begin{minipage}[b]{\minipagewidth}
  \centering
      \addnmiplot {figs/G3.dat}{ylabel=\phantom{\small {NMI}},xlabel =$\beta$}{}{}
\end{minipage}

\pgfplotslegendfromname{legend_second_plot}
\vskip8pt
\begin{minipage}[b]{\minipagewidth}
  \centering
  \addstabilityplot{figs/G1.dat}{ylabel={\color{\colorfirstplot}$\sum_k\left|c_k\right|^2$},xlabel =$\beta$, legend columns=4, legend style={/tikz/every even column/.append style={column sep=0.5cm}}, legend to name={legend_second_plot} }{}{}
\end{minipage}
\hfill
\begin{minipage}[b]{\minipagewidth}
  \centering
  \addstabilityplot{figs/G2.dat}{ylabel=\phantom{$\sum_k\left|c_k\right|^2$},xlabel =$\beta$, legend columns=4, legend style={/tikz/every even column/.append style={column sep=0.5cm}}, legend to name={legend_second_plot_I} }{}{}
\end{minipage}
\hfill
\begin{minipage}[b]{\minipagewidth}
  \centering
  \addstabilityplot{figs/G3.dat}{ylabel=\phantom{$\sum_k\left|c_k\right|^2$},xlabel =$\beta$, legend columns=4, legend style={/tikz/every even column/.append style={column sep=0.5cm}}, legend to name={legend_second_plot_II} }{ylabel={\color{\colorsecondplot}$n_c$}}{}
\end{minipage}

\caption{(color online) Quality and stability of the partitions returned by $f_A$ as a function of $\beta$ (continuous curve). The plots on the first row show the normalized mutual information (NMI) of the partitions extracted using $f_A$ with respect to the ``ground truth'' partition planted by the LFR benchmark model (solid line), compared to that obtained with the modularity (dashed line). The second row shows the number of communities (red, increasing with $\beta$) and their sum-of-square size (green, decreasing in $\beta$). The value of those criteria for the planted partition is also provided (dashed line). The difficulty of extracting communities increases from left to right (different $\mu_t$ and $\mu_w$ values in the LFR graph generator, see main text). For easy (left column) and medium (middle column) difficulty graphs, $f_A$ consistently outperforms modularity. For difficult graphs (right column), it is necessary to tune $\beta$ in a precise range in order to obtain, with our function $f_A$, results comparable to the modularity. This good range of $\beta$ values depends on the graph (not shown), which shows a limitation of our value function.}

\label{fig:results}
\end{figure}

Figure~\ref{fig:results} presents the results for $f_A$ as a function of $\beta$ and compares it to those obtained with the well-established modularity  criterion\cite{Newman2004}.

For easy to detect communities ($\mu_t=0.3$ and $\mu_w=0.1$, first column in Figure~\ref{fig:results}), sub-partitions of the planted communities are almost always found except for very small values of $\beta$. The planted communities are moreover exactly recovered for a wide range of values of $\beta$ smaller than 0.5. Selecting any value of $\beta$ should thus be expected to yield good results in such cases.

Reasonable results are also observed for planted communities of ``medium'' difficulty ($\mu_t=0.6$ and $\mu_w=0.3$, second column of Figure~\ref{fig:results}). The extracted communities do not correspond exactly to the planted one, but the results outperform those obtained with the modularity criterion (\cite{Newman2004}, see table \ref{table:costfunction}) for almost all values of $\beta$. This could be explained by the fact that modularity is known to hit a ``resolution limit'' which prevents it from correctly extracting communities in this situation. Our cost function $f_A$ apparently does not suffer from this problem.

The limitations of $f_A$ appear more clearly for graphs with harder-to-find communities ($\mu_t=0.8$ and $\mu_w=0.6$, third column of Figure~\ref{fig:results} ). 
Only a very narrow range of $\beta$ values give reasonable and more or less stable results -- i.e.~an a priori selected value of $\beta$ would almost surely fail. This is complementary to the modularity, which precisely handles this situation well.

In conclusion, the function $f_A$ which was proposed on purely academic grounds appears to give reasonable partitions in many cases, but of course not always. An issue is the choice of $\beta$, which can strongly affect the quality of the partitions produced. Acceptable results appear not too difficult to obtain by scanning different values and analyzing the evolution of certain stability measures, i.e. selecting for example values of $\beta$ for which these measures appear ``reasonable'' and stable with respect to small variations of $\beta$. 
Such strategies would however not be compatible with our axiomatic approach. Indeed, when $\beta$ is not fixed a priori, the property of locality would almost surely be lost, and consistent improvement would not be a priori guaranteed either (those are the two properties where different graphs must be compared). There might be ``soft'' tuning strategies which preserve the axiomatic properties, e.g.~letting $\beta$ depend on the number of nodes or replacing $R(|c_k|)$ by some slightly more complicated dependence function. 
Such explorations however go beyond the main purpose of the present paper.

Similar observations apply to $f_B$, see Figure \ref{fig:results_B}. Reasonable results are obtained for appropriate tuning of \emph{both} $\alpha$ and $\beta$, but the ``good'' tuning values depend on the graph. 
Since such tuning departs from our axiomatic approach, we do not include more extensive experimental results about $f_B$ in this paper.


\def\minipagewidth{.3\textwidth}
\def\colorfirstplot{Paired-6-4}
\def\colorsecondplot{Paired-6-6}

\newcommand\addnmiplotB[7]{
\begin{tikzpicture}
\begin{axis}[
		enlarge x limits=false,
		xmin=0,
		xmax=1,
		xtick= {0, 0.5, 1},
		ytick= {0, 0.5, 0.7, 0.8, 0.9, 1}, 
		ymin=0.6,
		ymax=1.05,
		scale only axis, 
		width=100pt, 
	#2		
	]
\addplot+[draw=Paired-6-2,
				mark=none,
				line width=1pt,
				style=solid]				
					table[x=beta,y=NMI] {#1};
#3;
			
#4			 ;

\addplot+[draw=Paired-12-8,
				mark=none,
				line width=1pt,
				style=solid]				
					table[x=beta,y=NMI#5] {#1};
\lega;					
\addplot+[draw=Paired-12-10,
				mark=none,
				line width=1pt,
				style=solid]				
					table[x=beta,y=NMI#6] {#1};
\legb					
\addplot+[draw=Paired-12-12,
				mark=none,
				line width=1pt,
				style=solid]				
					table[x=beta,y=NMI#7] {#1};			
\legc;									

\addplot+[draw=Paired-6-2,
				mark=none,
				line width=1pt,
				style=dashed]				
					table[x=beta,y=NMIMOD] {#1};
						
\end{axis}			
					
\end{tikzpicture}
}

\newcommand\addstabilityplotB[2]{
\begin{tikzpicture}
\begin{semilogyaxis}[
		yticklabel style = {color=\colorfirstplot},
		enlarge x limits=false,
		 axis y line*=left,
		 xticklabel=\empty,
		xmin=0,
		xmax=1,
		ymin=1000,
		ymax=100000,
		scale only axis, 
		width=100pt, 
	#2		
	]

\addplot+[draw=\colorfirstplot,
				mark=none,
				line width=1pt,
				style=solid]				
					table[x=beta,y=NC2] {#1};

\addplot+[draw=\colorfirstplot,
				mark=none,
				line width=1pt,
				style=dashed]				
					table[x=beta,y=NC2T] {#1};

\end{semilogyaxis}

\begin{axis}[
		yticklabel style = {color=\colorsecondplot},
		enlarge x limits=false,
		 axis y line*=right,
		xtick= {0, 0.5, 1},
		xmin=0,
		xmax=1,
		ymin=0,
		ymax=400,
		scale only axis, 
		width=100pt, 
	#2		
	]

\addplot+[draw=\colorsecondplot,
				mark=none,
				line width=1pt,
				style=solid]				
					table[x=beta,y=C] {#1};

\addplot+[draw=\colorsecondplot,
				mark=none,
				line width=1pt,
				style=dashed]				
					table[x=beta,y=CT] {#1};

\end{axis}

\end{tikzpicture}
}

\newcommand\tmplegend[3]{%
    \def\lega{#1}%
    \def\legb{#2}%
    \def\legc{#3}%
}

\begin{figure}[t]
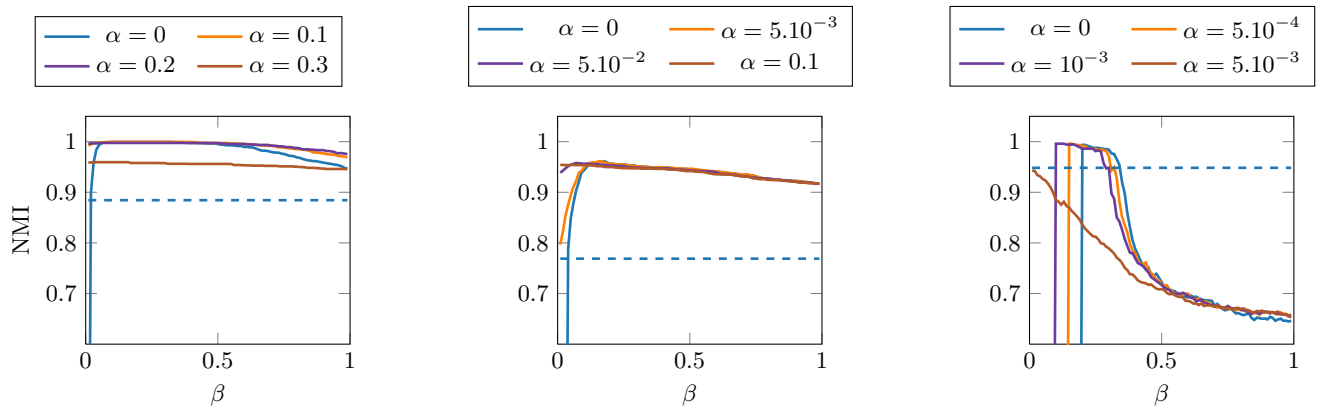

\centering
\begin{minipage}[b]{\minipagewidth}
  \centering
  \pgfplotslegendfromname{legend_multi_plot_1}
  \vskip10pt
			 \tmplegend 	{\addlegendentry{$\alpha=0.1$}} 
			 						{\addlegendentry{$\alpha=0.2$}} 
			 						{\addlegendentry{$\alpha=0.3$}}
  \addnmiplotB{figs/G1_B.dat}{ylabel={\small {NMI}}, xlabel =$\beta$, legend columns=2, legend style={/tikz/every even column/.append style={column sep=0.2cm}, legend to name={legend_multi_plot_1}}}{\addlegendentry{$\alpha=0$}}{}{0.1}{0.2}{0.3}  
\end{minipage}
\hfill
\begin{minipage}[b]{\minipagewidth}
  \centering
    \pgfplotslegendfromname{legend_multi_plot_2}
  \vskip10pt
			 \tmplegend 	{\addlegendentry{$\alpha=5.10^{-3}$}} 
			 						{\addlegendentry{$\alpha=5.10^{-2}$}} 
			 						{\addlegendentry{$\alpha=0.1$}}
    \addnmiplotB{figs/G2_B.dat}{ylabel=\phantom{\small {NMI}},  xlabel =$\beta$, legend columns=2, legend style={/tikz/every even column/.append style={column sep=0.2cm}, legend to name={legend_multi_plot_2}}} {\addlegendentry{$\alpha=0$}}{}{0.005}{0.05}{0.1} 
\end{minipage}
\hfill
\begin{minipage}[b]{\minipagewidth}
  \centering
    \pgfplotslegendfromname{legend_multi_plot_3}
  	\vskip10pt
			 \tmplegend 	{\addlegendentry{$\alpha=5.10^{-4}$}} 
			 						{\addlegendentry{$\alpha=10^{-3}$}} 
			 						{\addlegendentry{$\alpha=5.10^{-3}$}}
	\addnmiplotB{figs/G3_B.dat}{ylabel=\phantom{\small {NMI}},  xlabel =$\beta$, legend columns=2, legend style={/tikz/every even column/.append style={column sep=0.2cm}, legend to name={legend_multi_plot_3}}} {\addlegendentry{$\alpha=0$}}{}{0.0005}{0.001}{0.005} %
\end{minipage}

\caption{(color online) Quality of the partitions obtained using $f_B$, as a function of $\beta$ for different values of $\alpha$. As previously, the quality is measured using the normalized mutual information (NMI) of the partitions extracted with respect to the ``ground truth'' partition (solid line). The NMI obtained by the partition extracted using the modularity is also provided (dashed line). The columns correspond respectively to easy, medium and hard graphs as in Figure~\ref{fig:results}. The quality obtained by $f_A$ (i.e.~$\alpha=0$) can be outperformed with a fine tuning of the parameter $\alpha$ but the performance is highly sensitive to this parameter.
This includes interdependencies between $\alpha$ and $\beta$ as can be seen in the third column.}
\label{fig:results_B}
\end{figure}

\section{Conclusion}

We have considered an axiomatic approach towards defining communities as the optimum of a value function. We show that without relying on self-loops, it is not possible to satisfy a complete set of standard properties.
Our main message is that this impossibility remains even when replacing Kleinberg's very strong consistent improvement requirement by a weaker and seemingly more natural form of consistency. 
We further show, by explicit construction, that by slightly restricting the set of a priori expected communities --- e.g.~excluding the case of all isolated nodes --- it becomes possible to satisfy all the axioms, and reach in some situations a performance comparable to modularity. Furthermore, our constructions do satisfy the complete set of axioms when self-loops can be used to satisfy richness. These two points clarify precisely how some previous papers were able to circumvent Kleinberg's impossibility result.

This points towards several options for future research. First comes the necessity to select a subset of the standard axioms; excluding a set of a priori uninteresting partitions has been identified as one economical way to do this, but there might be others. Second and maybe more importantly, while our experimental investigations show good performance for certain parameter values, they also return extremely poor partitions for other parameter values, even though the corresponding functions still satisfy all the axioms. This demonstrates that our set of axioms is by no means sufficient to single out useful value functions, and new properties should therefore be defined. A third direction would be to extend or circumvent our impossibility result in a framework departing from value functions.

Finally, the experimental investigations with our simple value functions provide results that might be of independent interest for community partitioning. Indeed, despite their simplicity, our value functions outperform modularity in several cases and for well-chosen parameter values. This suggests an adaptive tuning procedure for the value function parameters. Note that the axioms would no longer necessarily be satisfied when introducing such adaptive tuning.

\acknowledgements{This work is supported by the  DYSCO network (Dynamical Systems, Control, and Optimization), funded by the Interuniversity Attraction Poles Programme, initiated by the Belgian Federal Science Policy. 

The authors want to thank Vincent Traag, Vincent Blondel and Renaud Lambiotte for occasional but very helpful discussions on this topic, in addition to drawing our attention to it and providing useful references. This paper presents research results of the Belgian Network DYSCO (Dynamical Systems, Control, and Optimization), funded by the Inter-university Attraction Poles program, initiated by the Belgian State, Science Policy Office. QUANTIC is a joint project team involving INRIA Paris (host institution), Mines ParisTech, Ecole Normale Superieure Paris, Universit\'e Pierre \& Marie Curie Paris and CNRS.
}


\appendix

\section{Proof of Lemma \ref{lem:positive_weights}} \label{app:proof_positive_weights}

Remember that there are only finitely many possible partitions. 
Let $\sigma^{(2)}$ be (one of) the second best partition(s) for the graph $G(V,W)$, that is, $f(G,\sigma^{(2)})\geq f(G,\sigma')$ for every $\sigma'\neq \sigma$, and let $\epsilon =\frac{1}{2}\prt{f(G,\sigma)- f(G,\sigma^{(2)})}$. By continuity, for every $\sigma'$ we can pick a $\delta'>0$ such that $\norm{W-W'}<\delta'$ implies 
$\abs{\prt{f(G(V,W'),\sigma')- f(G(V,W),\sigma')}}<\epsilon$. Let $\delta$ be the minimum of these $\delta'$ over all partitions, which is positive and well defined since there are finitely many of them. It follows that $\norm{W-W'}<\delta'$ implies $\abs{\prt{f(G(V,W'),\sigma')- f(G(V,W),\sigma')}}<\epsilon$ for all $\sigma'$. Using the definition of $\epsilon$, we have then for every partition $\sigma'$
$$
f(W',\sigma') < f(W,\sigma')+\epsilon \leq f(W,\sigma^{(2)})+ \epsilon = f(W,\sigma) - \epsilon < f(W',\sigma),
$$
which establishes the result.

\section{Richness of $\fa$ in Theorem \ref{thm:fa}}
\label{app:richness_fa}

For $\sigma^*$ a given partition, we consider the graph $G(V,W^*)$ where $W^*_{ij} = 1$ if $\sigma^*_i = \sigma^*_j$ and $W^*_{ij}=0$ otherwise, i.e. the graph where there is one clique for each community of $\sigma^*$. We will prove that $\sigma^*$ is the strictly optimal partition of $G(V,W^*)$ according to $\fa$. The proof uses two claims.\\

\emph{Claim 1: Consider a community $c_0$ of some arbitrary partition $\sigma$, and $\sigma'$ a partition identical to $\sigma$ except that $c_0$ is split into two communities $c_{01}$ and $c_{02}$. If $W^*_{ij} = 0$ for all $i \in c_{01}$ and $j\in c_{02}$ (i.e. no clique contains a node from \emph{both} communities)
then $\fa(\sigma',W^*) \geq \fa(\sigma,W^*)$, with equality holding if and only if $\sum_{i \in c_0} \sum_{j \in c_0} W_{ij} = 0$.}

\begin{proof} (of Claim 1):
Define $\bar s_k^{(int)} = \frac{1}{R(\abs{c_k})}\sum_{i\in c_k} \sum_{j\in c_k} W^*_{ij}$ the normalized internal strength of a community $c_k$. Then we have by definition
$$\fa(\sigma,W^*) - \bar s_0^{(int)} =  \fa(\sigma',W^*) - \bar s_{01}^{(int)} - \bar s_{02}^{(int)} \, .$$
If $W^*_{ij} = 0$ for all $i \in c_{01}$ and $j\in c_{02}$ then we have
$\sum_{j \in c_{0a}} W_{ij} = \sum_{j \in c_0} W_{ij}$ for all $i$ in $c_{0a}$ and $a \in \{ 0,1\}$. Therefore, $R(\abs{c_0}) \bar s_0^{(int)} = R(\abs{c_{01}}) \bar s_{01}^{(int)} + R(\abs{c_{02}}) \bar s_{02}^{(int)}$. From there we have
\begin{equation}\label{eq:delta_average_intern_proof_fa}
\bar s_{0}^{(int)} 
= \frac{R(\abs{c_{01}})}{R(\abs{c_{0}})}\bar s_{01}^{(int)}  +\frac{R(\abs{c_{02}})}{R(\abs{c_{0}})} \bar s_{02}^{(int)} \leq \bar  s_{01}^{(int)} + \bar s_{02}^{(int)},
\end{equation}
where we have used 
$\abs{c_0}  > \max\{\abs{c_{01}}, \abs{c_{02}} \}$ and the fact that $R$ is increasing, so that $\max\{ \frac{R(\abs{c_{01}})}{R(\abs{c_{0}})}, \frac{R(\abs{c_{02}})}{R(\abs{c_{0}})} \} < 1$. It follows thus also that the  inequality in \eqref{eq:delta_average_intern_proof_fa} is strict unless $\bar s_{01}^{(int)} = \bar s_{02}^{(int)} = 0$, that is, $W_{ij} = 0$ for all $i,j \in c_0$.\\
\end{proof}

\emph{Claim 2: Consider a partition $\sigma$ where $W^*_{ij} = 1$ for all $i,j \in c_{0}$ and define a partition $\sigma'$ identical to $\sigma$ except that $c_0$ is split into two communities $c_{01}$ and $c_{02}$. Then $\fa(\sigma,W^*) > \fa(\sigma',W^*)$.}

\begin{proof} (of Claim 2):
With the same notation as in the proof of Claim 1, we still have
$$\fa(\sigma,W^*) - \bar s_0^{(int)} =  \fa(\sigma',W^*) - \bar s_{01}^{(int)} - \bar s_{02}^{(int)} \, .$$
Since $W^*_{ij} = 1$ for all $i,j \in c_{01} \cup c_{02} = c_0$ we readily get $\bar s_{01}^{(int)}= \frac{\abs{c_{01}}}{R(\abs{c_{01}})} (\abs{c_{01}} -1)$, $\bar s_{02}^{(int)}=\frac{\abs{c_{02}}}{R(\abs{c_{02}})}(\abs{c_{02}} -1)$, and 
$$\bar s_0^{(int)} = \frac{\abs{c_{0}}}{R(\abs{c_{0}})}(\abs{c_0}-1) = \frac{\abs{c_{0}}}{R(\abs{c_{0}})}( \abs{c_{01}}+\abs{c_{02}} -1).$$
Remembering that $R(a)/a$ is nonincreasing and that $\abs{c_0}= \abs{c_{01}}+\abs{c_{02}}$, we obtain
$$
\bar s_{01}^{(int)} + \bar s_{02}^{(int)}\leq \frac{\abs{c_{0}}}{R(\abs{c_{0}})}(\abs{c_{01}} -1) + \frac{\abs{c_{0}}}{R(\abs{c_{0}})}(\abs{c_{02}} -1) < \frac{\abs{c_{0}}}{R(\abs{c_{0}})}( \abs{c_{01}}+\abs{c_{02}} -1)=\bar s_0^{(int)},
$$
so that  $\fa(\sigma,W^*) > \fa(\sigma',W^*)$.
\end{proof}

We can now prove the strict optimality of $\sigma^*$. Consider a partition $\sigma^{(1)}$ and denote its communities by $c^{(1)}_{k}$. Define $\sigma^{(2)}$ the partition where each $c^{(1)}_k$ has been further partitioned into the largest possible communities $c^{(2)}_{k1}$, $c^{(2)}_{k2}$,... $c^{(2)}_{kn_k}$ for which $W^*_{ij}=1$ whenever $\sigma^{(2)}_i = \sigma^{(2)}_j$. Note that as we require the communities to remain as large as possible, we also have $W^*_{ij}=0$ if $\sigma^{(1)}_i = \sigma^{(1)}_j$ but $\sigma^{(2)}_i \neq \sigma^{(2)}_j$. Therefore the transition from $\sigma^{(1)}$ to $\sigma^{(2)}$ can be obtained by repeated application of a transition from $\sigma$ to $\sigma'$ as set out in Claim 1. Thus $\fa(\sigma^{(2)},W^*) \geq \fa(\sigma^{(1)},W^*)$.

From $\sigma^{(2)}$, we can obtain the original partition $\sigma^*$ by repeated application of a transition from $\sigma'$ to $\sigma$ (note the reversed order!) as set out in Claim 2. We hence have $\fa(\sigma^*,W^*) \geq \fa(\sigma^{(2)},W^*)$.

Hence overall $\fa(\sigma^*,W^*) \geq \fa(\sigma^{(2)},W^*) \geq \fa(\sigma^{(1)},W^*)$ for any $\sigma^{(1)}$. Let us now examine in which cases we can have equality. To have $\fa(\sigma^*,W^*) = \fa(\sigma^{(2)},W^*)$ we need $\sigma^{(2)} = \sigma^*$. To have $\fa(\sigma^{(2)},W^*) = \fa(\sigma^{(1)},W^*)$ we need for each $c^{(1)}_{k} \in \sigma^{(1)}$:
\begin{itemize}
\item[{-}] either $n_k=1$ i.e.~$c^{(1)}_{k}$ has not been further partitioned in the process of obtaining $\sigma^{(2)}$, which also means that $W^*_{ij}=1$ for all $i,j \in c^{(1)}_{k}$;
\item[{-}] or $\sum_{i \in c^{(1)}_{k}} \sum_{j \in c^{(1)}_{k}} W_{ij} = 0$, which also implies that the $c^{(2)}_{ki}$ are all singleton communities for all $i$.
\end{itemize}
The second situation is excluded if $\sigma^{(2)} = \sigma^*$ and $\sigma^*$ has no more than one singleton community. Hence we must be in the first situation for every $c^{(1)}_{k} \in \sigma^{(1)}$, which means that $\sigma^{(2)}$ is obtained from $\sigma^{(1)}$ by applying no modification, and then $\sigma^*$ is obtained from $\sigma^{(2)}$ by applying no modification; i.e.~that $\sigma^{(1)} = \sigma^*$. For any $\sigma^{(1)}\neq \sigma^*$ we hence have $\fa(\sigma^*,W^*) > \fa(\sigma^{(1)},W^*)$, proving the strict optimality of $\sigma^*$.

\section{Richness of $\fb$ in Theorem \ref{thm:fb}}
\label{app:richness_fb}

To establish the richness of $\fb$ with respect to all partitions containing at least one community with more than one node, we will use the following Lemma.

\begin{lem}\label{lem:fb+1}
Let $\sigma_*$ be a strictly optimal partition for $\fb$ for a graph $G(V,W)$ on $N$ nodes where $\sum_{j} W_{ij} > 0$ for all $i \in V$, i.e. every node is incident to at least one edge with positive weight. Build the graph $G^{+1}(V^{+1},W^{+1})$ by adding a node $N+1$ to $G$ and one single edge of weight $\epsilon$ connecting node $N+1$ to some node of $V$, and let $\sigma_*^{+1}$ be the partition of $V^{+1}$ obtained by adding the singleton community $\{N+1\}$ to $\sigma_*$.

Then, there exists an $\epsilon^*>0$ such that if $\epsilon < \epsilon^*$, then $\sigma_*^{+1}$ is a strictly optimal partition for $G^{+1}$. 
As a consequence, $\sigma_*^{+1}$ and $G^{+1}$ also satisfy the hypothesis of the present Lemma (and can thus play the role of $\sigma_*$ and $G$ in a further iteration of it). 
\end{lem}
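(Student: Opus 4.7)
My plan is to establish strict optimality of $\sigma_*^{+1}$ on $G^{+1}$ by comparing $\fb(G^{+1},\sigma_*^{+1})$ with $\fb(G^{+1},\sigma')$ for every other partition $\sigma'$ of $V^{+1}$, via a case distinction based on whether $N+1$ is a singleton in $\sigma'$. Since there are only finitely many partitions of $V^{+1}$, it suffices to handle each case up to an $O(\epsilon)$ perturbation and then take $\epsilon^*$ to be the minimum of the finitely many thresholds obtained. Throughout, let $i_0\in V$ denote the node to which $N+1$ is attached.

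The first case is when $N+1$ is a singleton in $\sigma'$, so that $\sigma'|_V\neq\sigma_*$. The singleton $\{N+1\}$ contributes $0$ to $\fb$, and the only effect of the $\epsilon$-edge is to add a term of size $O(\epsilon)$ to the external strength of $i_0$'s community; hence $\fb(G^{+1},\sigma')=\fb(G,\sigma'|_V)+O(\epsilon)$, and analogously for $\sigma_*^{+1}$. The strict optimality of $\sigma_*$ on $G$ provides a uniform lower bound $\delta>0$ on $\fb(G,\sigma_*)-\fb(G,\sigma'|_V)$ over the finitely many options for $\sigma'|_V$, and this gap dominates the $O(\epsilon)$ perturbations for $\epsilon$ small enough.

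The second case, which I expect to be the main technical obstacle, is when $N+1$ lies in a community $c$ of size at least two in $\sigma'$. I will introduce the auxiliary partition $\sigma'_{\mathrm{split}}$ obtained by extracting $N+1$ as a singleton, and aim to show $\fb(G^{+1},\sigma')<\fb(G^{+1},\sigma'_{\mathrm{split}})$ for small $\epsilon$. Once this is in hand, either $\sigma'_{\mathrm{split}}=\sigma_*^{+1}$ (when $\sigma'|_V=\sigma_*$) and we conclude directly, or $\sigma'_{\mathrm{split}}$ falls into the first case. At $\epsilon=0$ the node $N+1$ has zero incident strength and the two partitions differ only in the community of $N+1$; a direct computation gives
\[
\bigl(\fb(\sigma'_{\mathrm{split}})-\fb(\sigma')\bigr)\big|_{\epsilon=0}=\sum_{i\in c\setminus\{N+1\}}\!\Bigl[\bigl(\tfrac{1}{R(|c|-1)}-\tfrac{1}{R(|c|)}\bigr)s_i^{\mathrm{int}}+\alpha\bigl(\tfrac{|c|-1}{R(|c|)}-\tfrac{|c|-2}{R(|c|-1)}\bigr)s_i^{\mathrm{ext}}\Bigr].
\]
The first coefficient is strictly positive because $R$ is strictly increasing. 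The second is strictly positive thanks to the assumption that $R(a)/a$ is nonincreasing, which yields $R(|c|)/R(|c|-1)\leq|c|/(|c|-1)<(|c|-1)/(|c|-2)$ for $|c|\geq 3$ (the case $|c|=2$ is even simpler since the second term vanishes). The standing hypothesis $s_i^{\mathrm{int}}+s_i^{\mathrm{ext}}>0$ for each $i\in V$ then ensures that the sum is strictly positive, and continuity of $\fb$ in $W^{+1}$ extends the strict inequality to all sufficiently small $\epsilon$.

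Taking $\epsilon^*$ to be the minimum of the finitely many thresholds produced by the two cases yields the desired strict optimality of $\sigma_*^{+1}$ on $G^{+1}$. The concluding consequence is immediate: nodes in $V$ retain all their original incident edges, and $N+1$ is incident to the positive-weight $\epsilon$-edge, so the hypothesis on incident strengths is preserved and the construction can be iterated.
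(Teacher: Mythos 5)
Your proof is correct and follows essentially the same route as the paper's: the same case split on whether $N+1$ is a singleton, the same key computation for the merged case (your displayed difference is exactly the negative of the paper's Eq.~\eqref{eq:fb+1_comp} with $|c|=|c_k|+1$, using the same two coefficient inequalities derived from $R$ increasing and $R(a)/a$ nonincreasing, and the same use of the hypothesis $s_i^{int}+s_i^{ext}>0$), and the same finiteness-plus-continuity argument to pass from $\epsilon=0$ to small $\epsilon>0$. The only cosmetic difference is that you track the $O(\epsilon)$ corrections explicitly in the singleton case and phrase the merged case as a comparison between $\sigma'$ and $\sigma'_{\mathrm{split}}$ both evaluated on $G^{+1}$, whereas the paper compares against $\fb(G,\sigma)$ directly at $\epsilon=0$ and then invokes its Lemma~\ref{lem:positive_weights}; these are equivalent.
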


\begin{proof} (of Lemma \ref{lem:fb+1})
Any partition $\sigma^{+1}$ of $G^{+1}$ can be obtained from some particular partition $\sigma$ of $G$ by either adding the node $N+1$ to $\sigma$ as a singleton community, or merging node $N+1$ into an existing community of $\sigma$. We start by examining these two cases assuming $\epsilon = 0$, i.e.~$W_{(N+1)j} = 0$ for all $j \in V$, while $s_i^{ext}+s_i^{int} = \sum_{j \in V} W_{ij} > 0$ for all $i \in V$.

First consider a partition $\sigma^{+1}$ of $G^{+1}$ which results from the addition of node $N+1$ to some community $c_k$ of $\sigma$. Since $W^{+1}$ just contains additional zero entries with respect to $W$ and $s_i^{ext}+s_i^{int} > 0$ for all $i \in V$, it is straightforward to observe that
\begin{align}
\fb(G^{+1},\sigma^{+1}) - \fb(G,\sigma) &= \prt{\frac{1}{R(\abs{c_k}+1)}-\frac{1}{R(\abs{c_k})}} \prt{\sum_{i\in c_k}s_i^{int}}  \nonumber \\ &+ \prt{\frac{\abs{c_k}}{R(\abs{c_k}+1)}-\frac{\abs{c_k}-1}{R(\abs{c_k})}} \prt{-\alpha \sum_{i\in c_k}s_i^{ext}}.
\label{eq:fb+1_comp}
\end{align}
(With a slight abuse of notation, in $G^{+1}$ we still denote $c_k$ the set of nodes that constituted the community $c_k$ in partition $\sigma$ of the graph $G$.) In the first term of \eqref{eq:fb+1_comp} we have $\frac{1}{R(\abs{c_k}+1)} < \frac{1}{R(\abs{c_k})}$ because $R$ is increasing. For the second term, observe that since $R(a)$ is increasing and $R(a)/a$ is nonincreasing, the following inequality holds for any $a>0$:
$$\frac{a-1}{R(a)} = \frac{a}{R(a)} -\frac{1}{R(a)}< \frac{a+1}{R(a+1)} -\frac{1}{R(a+1)} = \frac{a}{R(a+1)}.$$
In particular, $\frac{\abs{c_k}}{R(\abs{c_k}+1)}>\frac{\abs{c_k}-1}{R(\abs{c_k})}$. Thus for $\sigma \neq \sigma_*$ we have $\fb(G^{+1},\sigma^{+1}) \leq \fb(G,\sigma) < \fb(G,\sigma_*) \, $.  For $\sigma=\sigma_*$, since we have excluded $\sum_{i\in c_k}s_i^{int} = \sum_{i\in c_k}s_i^{ext} = 0$, \eqref{eq:fb+1_comp} is strictly negative and thus in all cases we have $\fb(G^{+1},\sigma^{+1}) < \fb(G,\sigma_*)$.

Next consider the partition $\sigma^{+1}$ of $G^{+1}$ consisting of $\sigma$ plus the singleton community $\{ N+1 \}$, which would thus be $\sigma_*^{+1}$ if $\sigma=\sigma_*$. The definition of $\fb$ readily implies that $\fb(G^{+1},\sigma^{+1}) = \fb(G,\sigma)$. In particular, we $\fb(G^{+1},\sigma^{+1}) = \fb(G,\sigma) < \fb(G,\sigma_*) \, $ and thus again $\fb(G^{+1},\sigma^{+1}) < \fb(G,\sigma_*)$, whenever $\sigma \neq \sigma_*$. The only remaining case, for $\sigma = \sigma_*$, gives $\fb(G^{+1},\sigma_*^{+1}) = \fb(G,\sigma_*)$ and must thus be the optimal one. This establishes the result for $\epsilon = 0$.

The case of sufficiently small $\epsilon>0$ follows then directly from Lemma \ref{lem:positive_weights}.  
This ensures that node $N+1$ is also incident to an edge with positive weight, and thus $G^{+1}$ and $\sigma_*^{+1}$ also satisfy the assumptions of the Lemma.
\end{proof}

To prove that a target partition $\sigma$ of $V$ can be made optimal under $\fb$ for some well-chosen weights $W$, we build as for $\fa$ the graph with $W_{ij}=1$ if $\sigma_i=\sigma_j$, otherwise $W_{ij}=0$.
Define $G^*(V^*,W^*)$ and $\sigma_*$ respectively the graph and partition where all the nodes that form singleton communities in $\sigma$ have been dropped. This graph is non-empty because we have assumed $\sigma$ contains at least one community with more than one node.

With respect to $\fa$, the value function $\fb$ just adds a penalty for links between different communities, so that $\fb(W',\sigma') \leq \fa(W',\sigma')$ for any $W'$ and $\sigma'$. For our particular construction of $W^*$ associated to $\sigma^*$, we have $s_i^{ext} = 0$ for all $i$, hence $\fb(W^*,\sigma^*) = \fa(W^*,\sigma^*)$. Note that $\sigma^*$ contains no singleton community so we can apply the proof of Appendix \ref{app:richness_fa}, showing that $\sigma^*$ is strictly optimal for $W^*$ under $\fa$, i.e.~$\fa(W^*,\sigma^*) > \fa(W^*,\sigma')$ for all $\sigma' \neq \sigma^*$. We then get
$$\fb(W^*,\sigma^*) = \fa(W^*,\sigma^*) > \fa(W^*,\sigma') \geq \fb(W^*,\sigma') \,$$
for all $\sigma' \neq \sigma^*$, proving strict optimality of partition $\sigma^*$ for the graph weights $W^*$ also under $\fb$. By Lemma \ref{lem:positive_weights}, partition $\sigma^*$ will still be optimal under $\fb$ for a slightly modified $W^{*\epsilon}$ where $W^{*\epsilon}_{ij}>0$ for all $i,j \in V^*$.

The graph $G^{*\epsilon}(V^*,W^{*\epsilon})$ obtained in this way satisfies the assumptions of Lemma \ref{lem:fb+1} with optimal partition $\sigma^*$. 
The existence of a graph for which the target partition $\sigma$ is strictly optimal follows then from a repeated application of that Lemma, adding singleton communities one by one until $\sigma$ is reached.

\end{document}